\providecommand{\tabularnewline}{\\}
\let\oldforeign@language\foreign@language
\DeclareRobustCommand{\foreign@language}[1]{%
  \lowercase{\oldforeign@language{#1}}}
\theoremstyle{definition}
 \newtheorem{example}{\protect\examplename}
\theoremstyle{remark}
\newtheorem{rem}{\protect\remarkname}
\theoremstyle{plain}
\newtheorem{lem}{\protect\lemmaname}
\newcommand{\cprod}{\ensuremath{\prod}\,\,\,\,\!\!\!\!\kern-0.915em{{\scriptscriptstyle \times}}\kern+0.15em}
\newcommand{\cprods}{\ensuremath{\prod}\,\,\,\,\!\!\!\!\kern-0.957em{{\scriptscriptstyle \times}}\kern+0.1em}
\providecommand{\examplename}{Example}
\providecommand{\lemmaname}{Lemma}
\providecommand{\remarkname}{Remark}
\begin{document}
\markboth{Preprint: IEEE Transactions on Signal Processing, VOL.72, 2024, PP.4888-4917, DOI: 10.1109/TSP.2024.3472068}{An Overview of Multi-Object Estimation via Labeled Random Finite Set}  
\title{An Overview of Multi-Object Estimation\\
 via Labeled Random Finite Set}
\author{Ba-Ngu~Vo,~Ba-Tuong~Vo,~Tran~Thien~Dat~Nguyen, and Changbeom~Shim\thanks{Acknowledgment: This work was supported by the Australian Research
Council under grants LP200301507 and FT210100506.} \thanks{The authors are with the School of Electrical Engineering, Computing
and Mathematical Sciences, Curtin University, Bentley, WA 6102, Australia
(email: \{ba-ngu.vo, ba-tuong.vo, t.nguyen1, changbeom.shim\}@curtin.edu.au).}}
\maketitle
\begin{abstract}

This article presents the Labeled Random Finite Set (LRFS) framework
for multi-object systems\textendash systems in which the number of
objects and their states are unknown and vary randomly with time.
In particular, we focus on state and trajectory estimation via a
multi-object State Space Model (SSM) that admits principled tractable
multi-object tracking filters/smoothers. Unlike the single-object
counterpart, a time sequence of states does not necessarily represent
the trajectory of a multi-object system. The LRFS formulation enables
a time sequence of multi-object states to represent the multi-object
trajectory that accommodates trajectory crossings and fragmentations.
We present the basics of LRFS, covering a suite of commonly used models
and mathematical apparatus (including the latest results not published
elsewhere). Building on this, we outline the fundamentals of multi-object
state space modeling and estimation using LRFS, which formally address
object identities/trajectories, ancestries for spawning objects, and
characterization of the uncertainty on the ensemble of objects (and
their trajectories). Numerical solutions to multi-object SSM problems
are inherently far more challenging than those in standard SSM. To
bridge the gap between theory and practice, we discuss state-of-the-art
implementations that address key computational bottlenecks in the
number of objects, measurements, sensors, and scans. 
\end{abstract}

\begin{IEEEkeywords}
State estimation, Filtering, Labeled random finite sets, Multi-Object
tracking, Multi-Object system.
\end{IEEEkeywords}

\section{Introduction\protect\label{s:Introduction}}

\subsection{State Space Model}

State Space Model (SSM), also known as Hidden Markov Model (HMM),
is a fundamental concept in dynamical systems theory. The state of
the hidden object at sampling instant $k$ is characterized by the
state vector $x_{k}$, in some finite dimensional state space $\mathbb{X}$.
This state generates an observation vector $z_{k}$ in an (finite
dimensional) observation space $\mathcal{\mathbb{Z}}$, as depicted
in Fig. \ref{fig:single-object-sys}. \textit{State estimation}, \textit{system
identification}, and \textit{control} are three interrelated fundamental
problems in SSM.\vspace{-0cm}
\begin{figure}[H]
\begin{centering}
\resizebox{88mm}{!}{\includegraphics[clip]{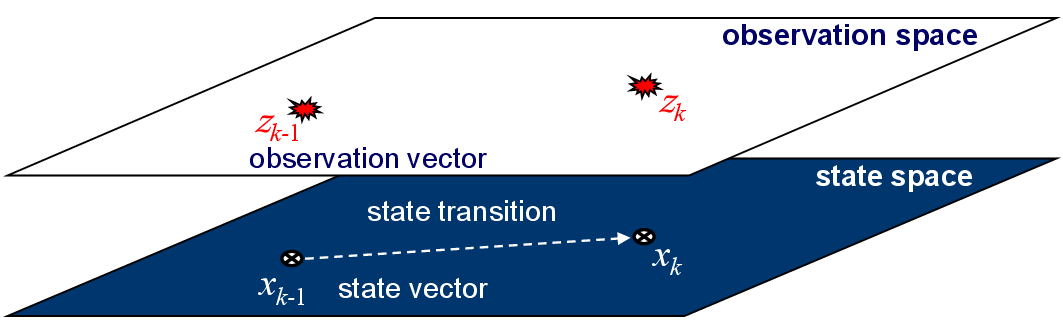}}
\par\end{centering}
\caption{State Space Model of a single-object system.}
\label{fig:single-object-sys}
\end{figure}

\textit{State estimation} entails \textit{estimating the state trajectory}
$x_{0:k}\triangleq(x_{0},...,x_{k})$, and is the most basic problem
upon which the other fundamental SSM problems are formulated. The
trajectory can be estimated via \textit{smoothing}, i.e., jointly
estimating the states in batches, or via \textit{filtering}, i.e.,
sequentially estimating the state at each time \cite{AndersonMoore79,Sarkkabook13},
see Fig. \ref{fig:track-single-obj}. In practice, jointly estimating
$x_{0:k}$ is intractable because the dimension of the variables (and
computational load) per time step increases with $k$ \cite{Sarkkabook13}.
Hence, it is imperative to smooth over short windows to ensure the
computational complexity per time step does not grow with time \cite{Sarkkabook13}.
Filtering\textendash the special case with a window length of one\textendash is
the most widely used \cite{AndersonMoore79,Sarkkabook13}. State estimation
is an active research area popularized by the Kalman filter \cite{AndersonMoore79,Jazwinski70}
and particle filter \cite{Gordon93,Doucet00}, which has far-reaching
impact in many fields of study \cite{AndersonMoore79,Jazwinski70,Robert_Bayesian_Choice,Sarkkabook13}.\vspace{-0cm}
\begin{figure}[h]
\begin{centering}
\resizebox{88mm}{!}{\includegraphics[clip]{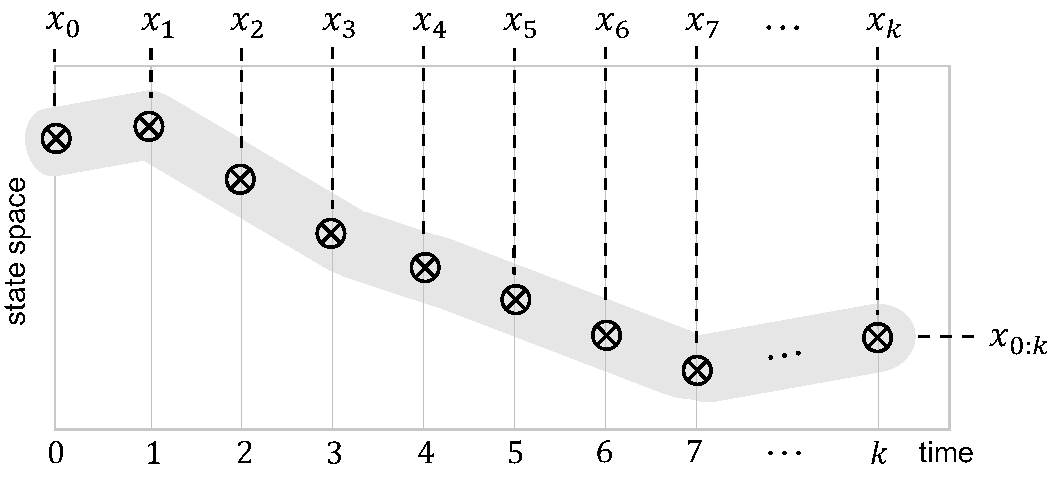}}
\par\end{centering}
\begin{centering}
\caption{\protect\label{fig:track-single-obj}States and trajectory of a 1-D
system. The state history $x_{0:k}$ defines the trajectory. Thus,
a trajectory estimate can be obtained from a history of individual
state estimates.}
\par\end{centering}
\vspace{-0cm}
\end{figure}

\subsection{Multi-Object System}

A \textit{multi-object system} is a generalized dynamical system arising
from a host of applications where, instead of a single object, the
number of objects and their states are unknown and vary randomly with
time, as depicted in Fig. \ref{fig:multi-object-sys}. The \textit{multi-object
state} is the \textit{set} \textit{of states of individual objects},
while the set of their trajectories is the \textit{multi-object trajectory}.
The multi-object state $X_{k}\subset\mathbb{X}$ at time $k$ generates
an observation set $Z_{k}\subset\mathbb{Z}$. Each existing object
may or may not generate an observation, while there could be false
observations not generated by any object. This is compounded by \textit{data
association uncertainty}, i.e., it is not known which observations
originated from which objects \cite{BlackmanBook99,BarShalomWillettTian11}.
\begin{figure}[h]
\begin{centering}
\resizebox{88mm}{!}{\includegraphics[clip]{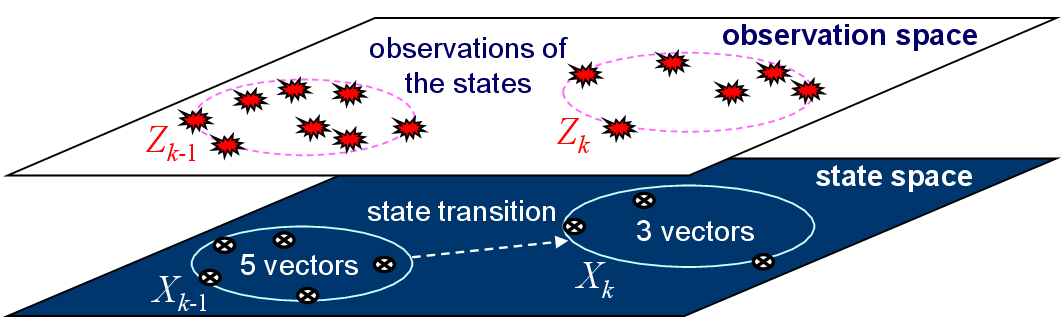}}
\par\end{centering}
\caption{\protect\label{fig:multi-object-sys}Multi-object system. The number
of states and observations vary with time. Existing objects may not
be detected, false observations may occur, and it is not known which
observations originated from which objects.}
\vspace{-0cm}
\end{figure}

In traditional SSM, the terminologies ``state estimation'' and ``trajectory
estimation'' are used interchangeably because they both entail estimation
of the trajectory (for the only object). In line with this terminology,
``\textit{multi-object state estimation}'' and ``\textit{multi-object
trajectory estimation}'' should both refer to estimation of the multi-object
trajectory, and hence are abbreviated as ``\textit{multi-object estimation}''
herein. To differentiate the task of estimating only the multi-object
states (without trajectory information), we use the term ``multi-object
localization''. It is important to note from Fig. \ref{fig:unlabed-multi-obj}
that unlike single-object systems, the multi-object state history
$X_{0:k}$ does not necessarily represent the multi-object trajectory.
Consequently, \textit{multi-object estimation} may not be possible
via filtering nor smoothing over moving windows.\vspace{-0cm}
\begin{figure}[H]
\begin{centering}
\vspace{-1mm}
\par\end{centering}
\begin{centering}
\subfloat[\label{fig:unlabeled-hist}Multi-object state history $X_{0:k}=$
($X_{0}$,...,$X_{k}$).]{\begin{centering}
\resizebox{88mm}{!}{\includegraphics[clip]{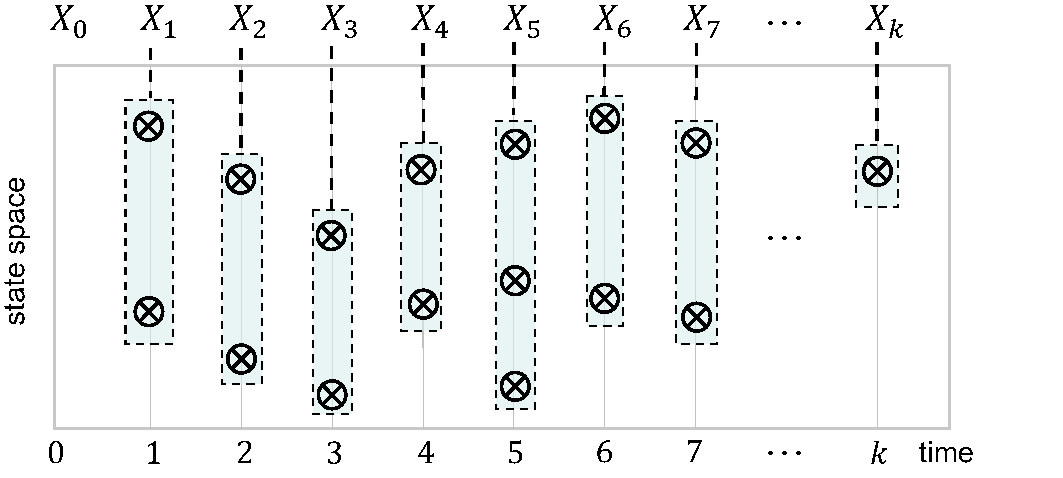}}
\par\end{centering}
\centering{}}
\par\end{centering}
\begin{centering}
\subfloat[\label{fig:unlabeled-traj}Multi-object trajectory.]{\begin{centering}
\resizebox{88mm}{!}{\includegraphics[bb=0bp 0bp 510bp 205bp,clip]{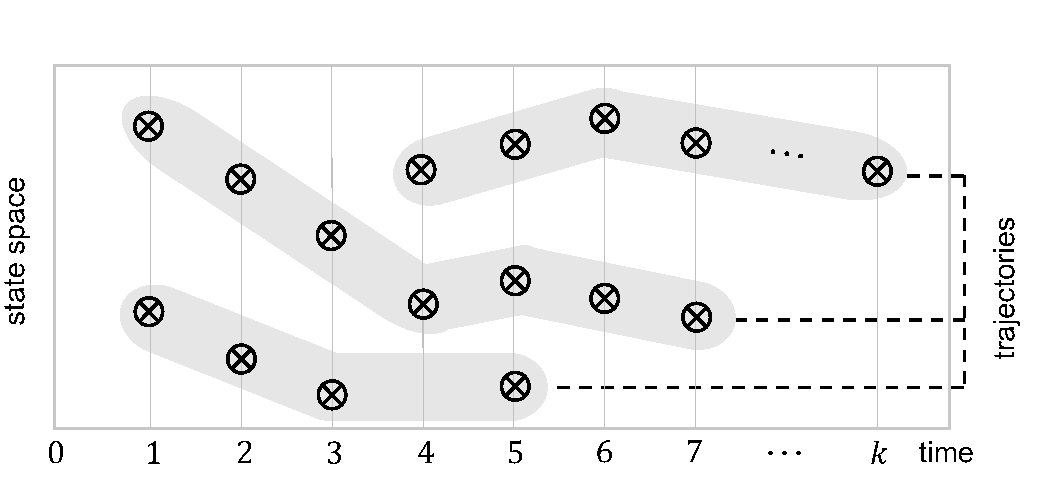}}
\par\end{centering}
}
\par\end{centering}
\caption{\protect\label{fig:unlabed-multi-obj}States and trajectory of a 1-D
multi-object system (note $X_{0}=\{\}$ in this example). Objects
may enter, exit, or reenter the state space. The multi-object state
history $X_{0:k}$ does not necessarily represent (nor contains sufficient
information to construct) the multi-object trajectory. }
\vspace{-0cm}
\end{figure}

For a versatile multi-object state representation emulating single-object
systems, the state history must be equivalent to the trajectory. Fundamentally,
this is accomplished by augmenting distinct labels or provisional
identities to each object state \cite[pp. 135, 196-197]{Goodmanetal97},
as illustrated in Fig. \ref{fig:tracklabel}. This labeled multi-object
representation enables multi-object (trajectory) estimation to be
done via filtering or smoothing over moving windows \cite{VoConj11,VoConj13}.
 Without the mechanism for linking trajectories from one window to
another, even if segments of the trajectories can be estimated over
short moving windows, multi-object (trajectory) estimation can only
be performed over a window that grows with time, thus, computationally
infeasible even for a single trajectory \cite{Sarkkabook13}. 
\begin{figure}[t]
\begin{centering}
\resizebox{88mm}{!}{\includegraphics[clip]{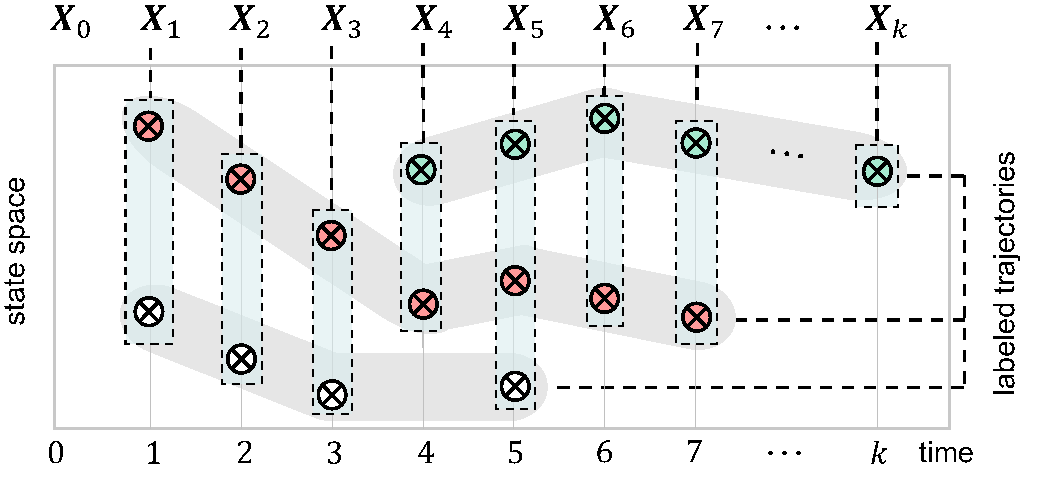}}
\par\end{centering}
\caption{\protect\label{fig:tracklabel}1-D labeled multi-object states and
trajectory. The three objects are augmented with red, green, and white
labels. Grouping the elements of the labeled multi-object state history
$\boldsymbol{X}_{\negthinspace0:k}=$ ($\boldsymbol{X}{}_{\negthinspace0}$,...,$\boldsymbol{X}_{\negthinspace k}$)
according to labels gives the multi-object trajectory. A labeled multi-object
trajectory estimate can be obtained from a history of labeled multi-object
state estimates.}
\vspace{0mm}
\end{figure}

Historically driven by the aerospace industry's interests in \textit{multi-object
tracking} (MOT), multi-object system problems are found in many diverse
application areas, including surveillance, situational awareness,
oceanography, autonomous vehicles/drones, field robotics, remote sensing,
computer vision, and cell biology  \cite{FS1985Radar,FS1986Radar,BarShalomWillettTian11,BlackmanBook99,Challaetal11,KochBook14,MahlerBook07,MahlerBook14,MalickBook12,Stoneetal13}.
MOT is a well-established field, with three main approaches: \textit{Multiple
Hypothesis Tracking} (\textit{MHT}), \textit{Joint Probabilistic Data
Association} (\textit{JPDA}), and \textit{Random Finite Set} (\textit{RFS});
we refer the reader to the texts \cite{BlackmanBook99}, \cite{BarShalom88},
\cite{MahlerBook14} for more details, or to \cite{Voetal-MMT-Wiley15}
for a brief overview. Due to false alarms, misdetections, and data
association uncertainty, multi-object system problems are far more
challenging than their single-object counterparts \cite{MahlerBook14}.

\vspace{-0.15cm}

\subsection{Aims and Scope}

It has been over two decades since the introduction of the RFS framework
for multi-object filtering \cite{Goodmanetal97}. Prior to this, related
approaches based on point process theory have been developed in \cite{BVL1964Methods,A1971Finding,BZI1980Detection,Washburn1987Random}
using random measure theoretic formulations. The RFS approach gained
significant traction due to its intuitive appeal as a geometric formulation,
and timely emergence in an era when computing technology as well as
numerical filtering techniques were sufficiently advanced. However,
the original RFS filtering formulation only considers multi-object
localization, and additional heuristics are needed to construct trajectory
estimates \cite[pp. 505-508]{MahlerBook07}. Indeed, early RFS-based
multi-object localization filters were so popular that their inability
to accommodate trajectories for MOT was mistakenly attributed to the
framework itself. On the contrary, the idea of augmenting distinct
labels to individual states to represent trajectories (\textcolor{red}{}illustrated
in Fig. \ref{fig:tracklabel}) has been discussed in \cite[pp. 135, 196-197]{Goodmanetal97}.
Due to its simplicity, Occam's Razor suggests that this labeled multi-object
representation is the proper approach to multi-object systems, but
it was deemed computationally impractical at the time and abandoned
in favor of the unlabeled representation. Nevertheless, it inspired
the development of Labeled RFS \cite{VoConj11,VoConj13}, culminating
in a suite of analytical and numerical tools for multi-object SSMs
as well as the growth in applications and commercial interests.

This article provides an overview of the Labeled RFS (LRFS) approach
to multi-object estimation, covering topics from the elements of RFS
to the Generalized Labeled Multi-Bernoulli (GLMB) filter \cite{VoVoP14}
capable of handling over a million trajectories \cite{Beard18-largescale}.
The LRFS formulation not only ensures that the multi-object trajectory
is indeed given by the history of the multi-object states, accommodating
trajectory crossings, fragmentations, and ancestries (for spawning
objects), but also enables characterization of uncertainty on the
multi-object trajectory ensemble and alleviation of critical computational
bottlenecks. Existing overviews such as the reviews \cite{MahlerFISST101,MahlerFISST102},
surveys \cite{MahlerPHDsurv-07,MahlerBriefsurv-15}, and texts \cite{MahlerBook07,Ristic-book-13,MahlerBook14},
mainly focus on RFS multi-object localization (which does not address
MOT), though some early developments in LRFS filters have been discussed
in the text \cite{MahlerBook14}, survey \cite{MahlerBriefsurv-15},
and particle filtering overview \cite{Risticetaloverview-16}. Keeping
in mind the balance between scientific rigor and utility, we present
an up-to-date coverage of LRFS analytical and computational tools,
as well as the ensuing multi-object estimation solutions that address
key computational bottlenecks in the number of objects, measurements,
sensors, and scans. This coverage also includes results on closed-form
information divergences for a versatile class of LRFSs, and estimators
based on the notion of joint existence probability unique to LRFS,
not previously published. 

The rest of the article is organized as follows. Section \ref{s:Background}
presents some background on Bayesian state estimation and the labeled
set representation of the multi-object state. To model set-valued
random variables, Section \ref{sec:RFS-fundamental} presents the
fundamentals of RFS theory and some classical RFS models. Section
\ref{sec:LRFS-models} introduces LRFS and the mathematical apparatus
for multi-object state/trajectory modeling. Building on this, Section
\ref{sec:Multi-object-Estimation} extends Bayesian state estimation
to multi-object SSM along with solutions and discussions of related
SSM problems. Closing remarks are given in Section \ref{s:Conclusion}. 

\section{Background\protect\label{s:Background}}

\begin{table}[!t]
\global\long\def\arraystretch{1.3}%
 \caption{Common notations from Section \ref{s:Background} onwards.}

\vspace*{0.4cm}
 {\footnotesize\label{tbl:notation-background}}{\footnotesize\par}
\centering{}%
\begin{tabular}{|c|l|}
\hline 
{\small\textbf{Notation}} & {\small\textbf{Description}}\tabularnewline
\hline 
{\small$x_{m:n}$} & {\small$x_{m},x_{m\text{+}1},\ldots,x_{n}$}\tabularnewline
{\small$\mathbb{X}$} & {\small Finite dimensional state space}\tabularnewline
{\small$x_{k}$} & {\small State vector at time $k$}\tabularnewline
{\small$\mathbb{Z}$} & {\small Finite dimensional observation space}\tabularnewline
{\small$z_{k}$} & {\small Observation vector at time $k$}\tabularnewline
{\small$f_{k}(\cdot|x_{k\texttt{-}1})$} & {\small Markov transition density to time $k$ given $x_{k\texttt{-}1}$ }\tabularnewline
{\small$g_{k}(z_{k}|x_{k})$} & {\small Likelihood of observing $z_{k}$ given $x_{k}$}\tabularnewline
{\small$p_{0:k}(x_{0:k})$} & {\small Posterior density at $x_{0:k}$}\tabularnewline
{\small$p_{k}(x_{k})$} & {\small Filtering density at $x_{k}$}\tabularnewline
{\small$\mathcal{D}(f)$} & {\small Domain of function $f$}\tabularnewline
{\small$\ell$} & {\small Label of an object }\tabularnewline
{\small$\mathbb{L}$} & {\small (Discrete) space of labels}\tabularnewline
{\small$\boldsymbol{x}$, $\boldsymbol{x}_{k}$} & {\small Labeled state $(x_{k},\ell)$ of an object at time $k$}\tabularnewline
{\small$\boldsymbol{X}$, $\boldsymbol{X}_{k}$} & {\small Labeled multi-object state at time $k$}\tabularnewline
\hline 
\end{tabular}
\end{table}

We begin with an outline of Bayesian state estimation for discrete-time
SSMs in Subsection \ref{subsec:Bayes-Filter}. Subsection \ref{subsec:Multi-object-rep}
formalizes the set representation for the multi-object state/trajectory
that emulates the single-object state/trajectory, while Subsection
\ref{subsec:Tricky} discusses the challenges of working with sets. 

\subsection{Bayesian State Estimation \protect\label{subsec:Bayes-Filter}}

In an SSM, the system state vector $x_{k}\in\mathbb{X}$ at time
$k$ evolves from its previous value $x_{k\texttt{-}1}$, and generates
an observation $z_{k}\in\mathbb{Z}$ according to the \textit{state
}and \textit{observation equations}
\begin{flalign}
x_{k} & =F_{k}\left(x_{k\texttt{-}1},u_{k\texttt{-}1},\nu_{k\texttt{-}1}\right),\label{eq:single-obj-state-eq}\\
z_{k} & =G_{k}\left(x_{k},u_{k},\mu_{k}\right),\label{eq:single-obj-meas-eq}
\end{flalign}
where $F_{k}$ and $G_{k}$ are non-linear mappings, $u_{k}$ (and
$u_{k\texttt{-}1}$) is the control or input signal, $\nu_{k\texttt{-}1}$
and $\mu_{k}$ are, respectively, process and measurement noise.

\textit{State estimation}, i.e., estimation of the state trajectory,
together with \textit{system identification} and \textit{control}
are the three fundamental problems in dynamical system. In system
identification, the goal is to estimate the system parameters from
the system input and output, while in control, the goal is to use
the input signals to drive the system state/trajectory to prescribed
regions of the state space. In this article, we focus on state estimation,
and hence the control signal is omitted.

The Bayesian estimation paradigm models the state and observation
as random vectors. The state equation is characterized by the Markov
\textit{transition density} $f_{k}(x_{k}|x_{k\texttt{-}1})$, i.e.,
the probability density of a transition to state $x_{k}$ at time
$k$\ given the previous state $x_{k\texttt{-}1}$. The observation
equation is characterized by the measurement \textit{likelihood function}
$g_{k}(z_{k}|x_{k})$, i.e., the probability density of observing
$z_{k}$ given state $x_{k}$. Further, it is assumed that measurements
are conditionally independent, i.e., the probability density of the
observation history $z_{1:k}$ condition on $x_{1:k}$ is given by
\[
p_{1:k}(z_{1:k}|x_{1:k})=g_{k}(z_{k}|x_{k})g_{k\texttt{-}1}(z_{k\texttt{-}1}|x_{k\texttt{-}1})...g_{1}(z_{1}|x_{1}).
\]

All information about the state history to time $k$ is encapsulated
in the \emph{posterior} density $p_{0:k}(x_{0:k})\triangleq p_{0:k}(x_{0:k}|z_{1:k})$,
which can be computed recursively for any $k\geq1$, starting from
an initial prior $p_{0}$, via the Bayes \textit{posterior recursion}:
\begin{equation}
p_{0:k}(x_{0:k})=\frac{g_{k}(z_{k}|x_{k})f_{k}(x_{k}|x_{k\texttt{-}1})p_{0:k\texttt{-}1}(x_{0:k\texttt{-}1})}{\int g_{k}(z_{k}|\varsigma_{k})f_{k}(\varsigma_{k}|\varsigma_{k\texttt{-}1})p_{0:k\texttt{-}1}(\varsigma_{0:k\texttt{-}1})d\varsigma_{0:k}}.\label{eq:posterior}
\end{equation}
Note that the dependence on $z_{1:k}$ is omitted for notational compactness.
The above recursion, also known as smoothing-while-filtering \cite{Briers-10}
(or simply smoothing in this paper), is not suitable for real-time
operations. Since the dimension of the trajectory probability density
grows with time, computational/memory requirement increases at each
time step and quickly becomes intractable. Real-time applications
require algorithms with computational complexity per time step that
does not grow with time \cite[pp. 53-54]{Sarkkabook13}.
\begin{figure}[h]
\begin{centering}
\resizebox{88mm}{!}{\includegraphics[bb=0bp 0bp 510bp 110bp,clip]{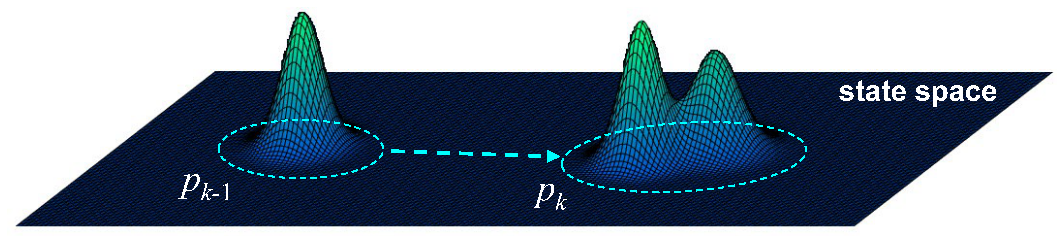}}
\par\end{centering}
\caption{Propagation of the filtering density $p_{k}$.}
\label{fig:single-obect-filter}
\end{figure}

The \emph{filtering} density $p_{k}(x_{k})\negthinspace\triangleq\negthinspace\int\!p_{0:k}(x_{0:k})dx_{0:k\texttt{-}1}$,
i.e., the marginal of the posterior density at the current time, encapsulates
all information about the current state, and can be computed recursively
using the Bayes \textit{filtering recursion}: 
\begin{equation}
p_{k}(x_{k})=\frac{g_{k}(z_{k}|x_{k})\int\!f_{k}(x_{k}|y)p_{k\texttt{-}1}(y)dy}{\int g_{k}(z_{k}|x)\int\!f_{k}(x|y)p_{k\texttt{-}1}(y)dydx}.\label{eq:bayes2}
\end{equation}
The above recursion propagates a function on a fixed dimensional space,
see Fig. \ref{fig:single-obect-filter}, and hence has a fixed computational
complexity per time step \cite{Sarkkabook13}. The \emph{smoothing}
density $p_{k\texttt{-}l|k}(\cdot|z_{1:k})$, i.e., the marginal of
the posterior at time $k-l$, can also be computed recursively (e.g.,
forward-backward or two-filter smoothing), but is not widely used,
and will not be covered here. Interested readers are referred to \cite{Sarkkabook13,Briers-10}.

The state can be estimated from appropriate probability densities
via the \emph{expected a posteriori} (conditional mean) or \emph{maximum
a posteriori} (conditional mode) estimators. These estimators are
Bayes optimal with respect to certain Bayes risks (or penalties),
and statistically \emph{consistent}, i.e., converge (almost surely)
to the true state as more data accumulate \cite{Robert_Bayesian_Choice,KayEstBook}.
The trajectory can be estimated from the posterior jointly as a sequence
of states, or as a sequence of individual state estimates from filtering
(or smoothing). Smoothing refines state estimates from data that arrives
later, and is expected to yield better estimates than filtering \cite{Briers-10,Meditch73,DoucetTutorial09}.

Numerical methods for Bayesian state estimation is an active area
of research \cite{AndersonMoore79,Jazwinski70,Ristic04,Sarkkabook13}.
The posterior recursion admits the Kalman smoother as an analytic
solution for linear Gaussian models \cite{AndersonMoore79}, while
for general non-linear models, Sequential Monte Carlo (SMC) approximations
have been proposed \cite{Kitagawa96}. To maintain a fixed computational
complexity per time step, smoothing is performed over short moving
windows. The most widely used is filtering, the special case with
a window length of one. Under linear Gaussian models, the filtering
recursion admits the Kalman filter as an analytic solution \cite{AndersonMoore79}.
In general, analytic solutions are not possible. Popular approximate
solutions include the extended Kalman filter \cite{AndersonMoore79,Jazwinski70},
unscented Kalman filter \cite{JulierUhlmann04}, Gaussian sum filter
\cite{Sorenson71}, cubature Kalman filter \cite{Arasaratnam09,JiaHDCKF13},
SMC/particle filter \cite{Gordon93,Doucet00,ArulampalamTutorial02,DoucetTutorial09,Ristic04}
and particle flow filter \cite{DH_Normal_SPIE14}.

\subsection{Multi-Object Representation\protect\label{subsec:Multi-object-rep}}

Regardless of the differences in MOT approaches, mathematically,
a \textit{trajectory} in a state space $\mathcal{\mathbb{X}}$ and
discrete-time window $\mathbb{T}$ is a mapping $\tau:\mathbb{T}\rightarrow\mathbb{X}$
\cite{Beard18-largescale}. The \textit{domain}, $\mathcal{D}(\tau)\subseteq\mathbb{T}$,
corresponds to the set of instants when the object exists. This definition
covers the so-called \textit{fragmented trajectories/tracks}, i.e.,
trajectories with non-contiguous domains, to accommodate objects
physically exiting and reentering the state space, as well as trajectory
estimates produced by most MOT algorithms, as depicted in Fig. \ref{fig:unlabed-multi-obj}(b).

For versatility and applicability to a wide range of problems, we
require a multi-object representation that allows: 
\begin{itemize}
\item (R.1) the multi-object trajectory to be determined from the multi-object
state history (analogous to single-object systems), ensuring multi-object
estimation to be accomplished with a complexity per time step that
does not grow with time, for computational tractability;
\item (R.2) fragmented trajectories (see Fig. \ref{fig:unlabed-multi-obj}(b))
and multiple objects occupying the same attribute state, to model
scenarios unique to challenging multi-object estimation problems such
as reappearance/reidentification and merging/occlusion \cite{ZWWZL2021Fairmot,MP2014Bayesian,ARS2011Analytical}. 
\end{itemize}
Neither of these requirements could be met with the unlabeled representation,
as illustrated in Fig. \ref{fig:unlabed-multi-obj}, and the fact
that a set cannot contain repeated elements. The labeled representation
\cite[pp. 135, 196-197]{Goodmanetal97},  \cite{VoConj11,VoConj13}
described in the following is most suitable, meeting both requirements. 

The \textit{labeled state} of an object at a particular instant (if
it exists) is represented by $\boldsymbol{x}=(x,\ell)$, where $x\in\mathbb{X}$
is its \textit{attribute} state (e.g., kinematics, visual features,
etc.), and $\ell$ is its \textit{label}, a provisional distinct identity
in some discrete space $\mathbb{L}$ called a \textit{label space}.
A common practice is to use $\ell=(s,\iota)$, where $s$ is the time
of birth and $\iota$ is an index to distinguish objects born at the
same time \cite{VoConj11,VoConj13}.

A \textit{labeled multi-object state} $\boldsymbol{X}$ is a finite
subset of the product space $\mathbb{X}\times\mathbb{L}$ with \textit{distinct
labels}, i.e., no two elements of $\boldsymbol{X}$ share the same
label. Fundamentally, macroscopic objects (whose extents are significantly
larger than the de Broglie wavelengths) can be labeled with distinct
identifiers \cite{MRZ2022Track}. In some applications, distinct identifiers
are even explicitly included in the intrinsic states of objects, e.g.,
in \cite{Nguyenetal-PP-19} the state of each object contains a unique
RFID signal (from a finite set of RFID signals).  In practice, track
labeling is a required functionality of a MOT system, see e.g., \cite{BlackmanBook99,MRZ2022Track}.
\begin{figure}[t]
\begin{centering}
\vspace{-0.3cm}
\par\end{centering}
\begin{centering}
\hspace{-0.1cm}\subfloat[With labels.]{\begin{centering}
\includegraphics[clip,height=7cm]{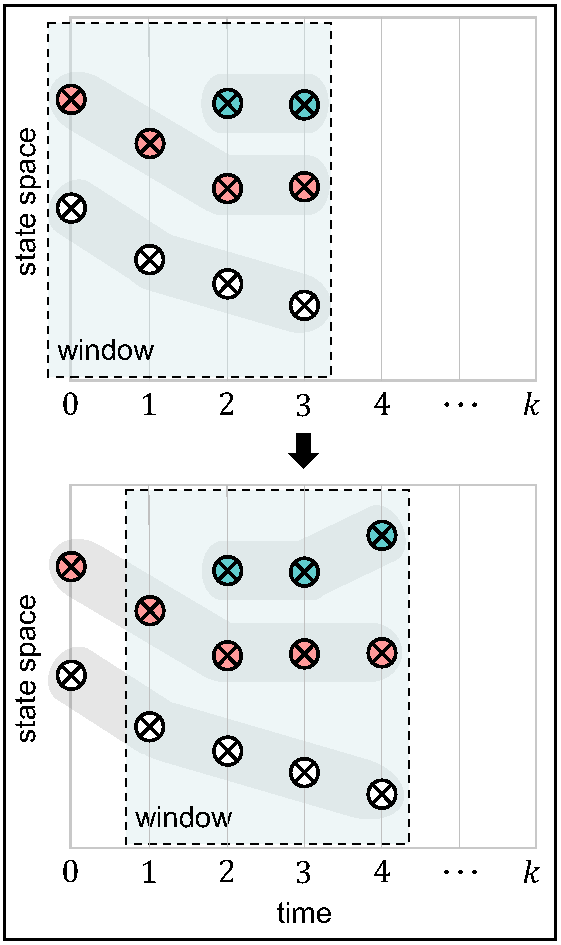}
\par\end{centering}
\centering{}}\hspace{-0.2cm}\subfloat[Without labels.\textcolor{purple}{{} }]{\begin{centering}
\includegraphics[clip,height=7cm]{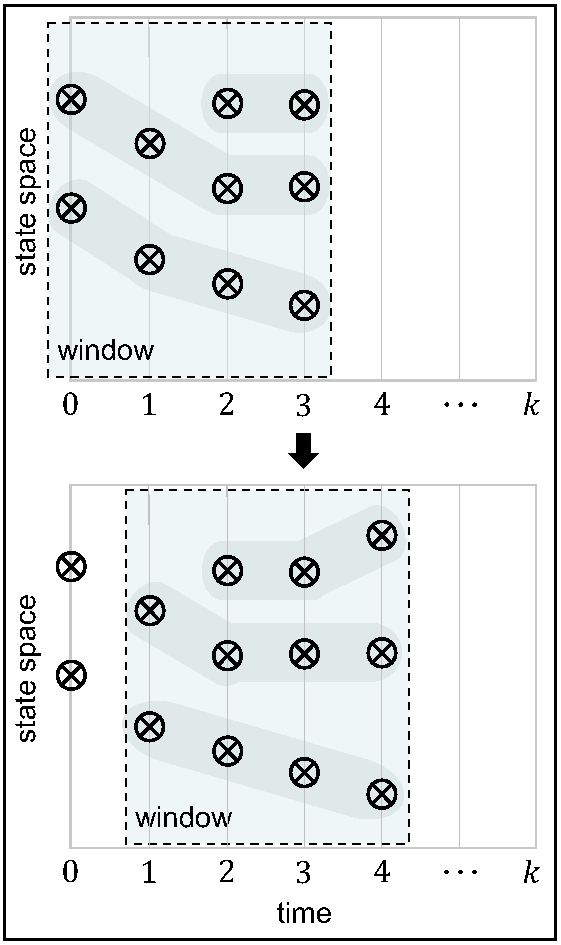}
\par\end{centering}
}
\par\end{centering}
\begin{centering}
\hspace{-0.5cm}
\par\end{centering}
\caption{\protect\label{fig:moving-windows}Multi-object trajectories in moving
windows. In (a) labels enable linking the trajectories between the
windows $\{0\textrm{:}3\}$ and $\{1\textrm{:}4\}$ to represent trajectories
on $\{0\textrm{:}4\}$. Without labels (b), there is no mechanism
to link these trajectories.  }

\centering{}\vspace{-0cm}
\end{figure}

Labeling enables a sequence $\boldsymbol{X}_{j:k}$ of multi-object
states to completely characterize the \textit{multi-object trajectory}\textendash the
set of trajectories of the objects\textendash on the interval $\{j{\textstyle \vcentcolon}k\}$
analogous to single-object systems, via the grouping of states according
to labels (illustrated in Fig. \ref{fig:tracklabel}). Formally,
a trajectory in $\boldsymbol{X}_{j:k}$ is a time-stamped sequence
$[(x_{s},\ell),...,(x_{t},\ell)]$ of labeled states in $\boldsymbol{X}_{j:k}$
 with a common label, which indeed defines a mapping $\tau_{\ell}\textrm{ : }i\mapsto x_{i}$,
$i\in\mathcal{D}(\tau_{\ell})$. The labeled multi-object representation
allows: multi-object estimation (filtering or smoothing on moving
windows, see Figs. \ref{fig:tracklabel} and \ref{fig:moving-windows}(a))
with computational complexity per time step that does not grow with
time; multiple objects simultaneously occupying the same attribute
state, e.g., $\{(x,\ell_{1})$, $(x,\ell_{2})\}$, which inevitably
occurs in a finite state space (as often is the case in the HMM literature),
representing mergings/collisions, and occlusions in a wide range of
applications, e.g., \cite{SLHN2009Tracking,MP2014Bayesian,ARS2011Analytical,HRMB2014High};
and fragmented trajectories, arising from objects exiting and reentering
the state space (especially when it is bounded), intrinsic to the
reappearance/reidentification problem in computer vision \cite{ZWWZL2021Fairmot,LZZLZH2022Rethinking,CAZS2018Real}.

\textit{Pragmatically, labeling is unavoidable in real multi-object
systems}. Trajectory estimation over a growing window is numerically
intractable as the computational complexity per time step grows with
time \cite{Sarkkabook13}. Thus, it is imperative to use shorter
moving windows, and link the trajectories between the windows together.
 This is straightforward with a labeled multi-object representation
as illustrated in Fig. \ref{fig:moving-windows}(a). In an unlabeled
representation, there is no means to link the trajectories as illustrated
in Fig. \ref{fig:moving-windows}(b), making multi-object trajectory
estimation infeasible without resorting to heuristic post-processing.
Note that heuristically matching trajectories between windows via
optimal assignment still requires labeling them.  

State representation goes hand in hand with (mathematical) metrics
for multi-object estimation error \cite[Sec. 2.4]{VVPS2010Joint}.
A popular metric in the literature is the Optimal Sub-Pattern Assignment
(OSPA) metric for multi-object states \cite{Schumacher2008}, which
has been extended to multi-object trajectories in \cite{Beard18-largescale}.
This extension, called the OSPA\textsuperscript{(2)} metric, penalizes
errors in localization, number of trajectories, fragmentation and
identity switching. The OSPA and OSPA\textsuperscript{(2)} metrics,
respectively, gauge localization and tracking errors. The reader
is referred to \cite{NRVVSR2022How} for a comprehensive study of
multi-object estimation performance evaluations, which includes such
metrics.  

\subsection{Why Working with Sets is Tricky?\protect\label{subsec:Tricky}}

The Bayesian framework models the state (and observation) as a random
variable, and consequently, in a multi-object SSM, a finite-set-valued
random variable or a random finite set (RFS) is needed to model the
multi-object state (and observation). Probability densities of RFSs
are needed for the Markov transition density, likelihood function
and posterior/filtering density characterizing the multi-object SSM. 

\textit{The probability density of an RFS cannot be treated like that
of a random vector} as illustrated by the following example. \vspace{-0.1cm}
\begin{example}
\label{ex2:apple_tree} (Borrowed from \cite{VDPTV2018Modelbased})
Suppose that the number of apples falling in a day is uniformly distributed
between 0 and 9, and that conditional on the number of fallen apples,
the landing positions are independently and identically distributed
(i.i.d.) according to the probability density $p_{f}$, shown in Fig.
\ref{fig:landingPDF}. Treating the set of landing positions as a
random vector (see e.g., \cite{MS2003Novelty}), and noting that the
probability of $m$ apples falling ($m<10$) is $\frac{1}{10}$, we
have $p(\{x_{1:m}\})=\frac{1}{10}\prod_{i=1}^{m}p_{f}\negthinspace\left(x_{i}\right)$.
Consider the landing sets $\{x_{1}\}$ and $\{x_{2},x_{3}\}$, where
$x_{1},x_{2},$ and $x_{3}$ are shown in Fig. \ref{fig:landingPDF}.
Which of the sets is more likely? Since $p\negthinspace\left(\{x_{1}\}\right)=2\times10^{\texttt{-}2}$
and $p\negthinspace\left(\{x_{2},x_{3}\}\right)=3.6\times10^{\texttt{-}2}$,
it would appear that $\{x_{2},x_{3}\}$ is more likely. However, had
we measured distance in centimeters, then the probability density
$p_{f}$ is scaled by $\mathrm{10}^{\texttt{-}2}$ (because it is
non-zero on the interval $(-100,100)$ and must integrate to 1), which
results in $p\negthinspace\left(\{x_{1}\}\right)=2\times10^{\texttt{-}4}$
\textgreater{} $p(\{x_{2},x_{3}\})=3.6\times10^{\texttt{-}6}$, contradicting
the previous conclusion!
\begin{figure}[t]
\begin{centering}
\vspace{-0.2cm}
\par\end{centering}
\begin{centering}
\subfloat[Landing set $\{x_{1}\}$.]{\begin{centering}
\includegraphics[clip,width=1\columnwidth]{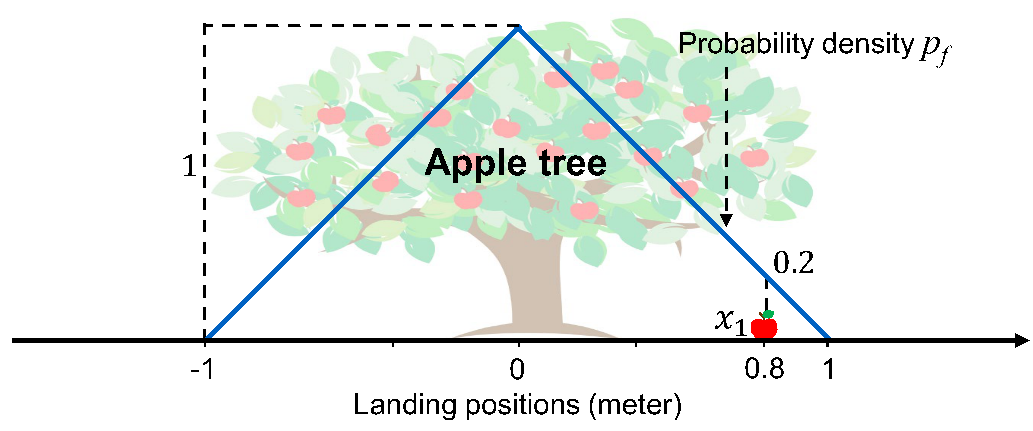}
\par\end{centering}
\centering{}}\hspace{0.3cm}\subfloat[Landing set $\{x_{2},x_{3}\}$.]{\begin{centering}
\includegraphics[clip,width=1\columnwidth]{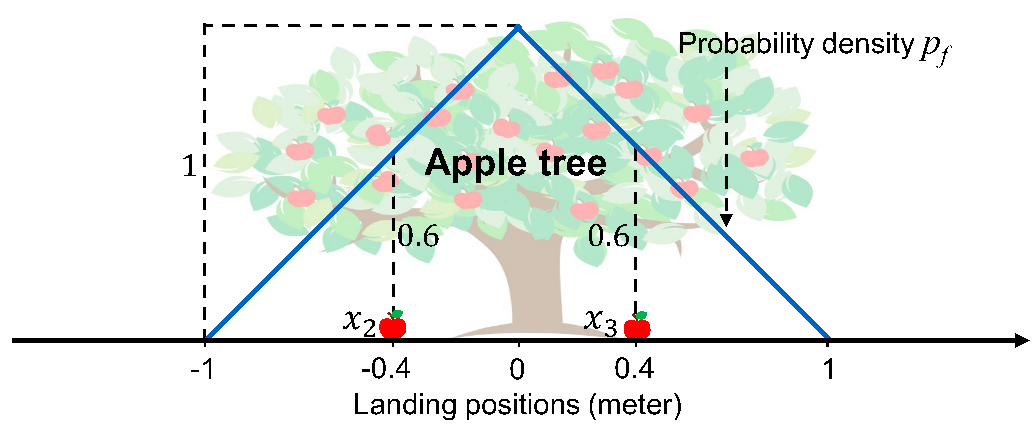}
\par\end{centering}
}
\par\end{centering}
\begin{centering}
\hspace{-0.5cm}
\par\end{centering}
\caption{\protect\label{fig:landingPDF}Distribution of landing positions.
Position $x_{1}$ is 3 times less likely than $x_{2}$, and also 3
times less likely than $x_{3}$. }

\centering{}\vspace{-0cm}
\end{figure}
\end{example}
The notion of (probability) density goes hand in hand with integration.
The density $p_{X}$ of a random vector $X$, is defined such that
$\Pr(X\in S)=\int_{S}p_{X}(x)dx$, for any (measurable) $S$. This
means $p_{X}(x)$ is the instantaneous probability per unit hyper-volume
(or formally, the Radon-Nikod\'{y}m derivative of the probability distribution
with respect to hyper-volume) at $x$, and is interpreted as the
likelihood of instantiation $x$. Instantiations with higher density
values are more likely (or probable) than those with lower ones, and
their relative likelihoods are invariant to the unit of hyper-volume.
The multi-object state space is the class of all finite subsets of
$\mathcal{X}$, and does not inherit the usual Euclidean notion of
density/integration. As noted in \cite{MahlerFISST101}, the inconsistency
in the above example arises because $p(\{x_{1}\})$ is measured in
``$\mathrm{m}^{\texttt{-}1}$'' or ``$\mathrm{cm}^{\texttt{-}1}$''
whereas $p(\{x_{2},x_{3}\})$ is measured in ``$\mathrm{m}^{\texttt{-}2}$''
or ``$\mathrm{cm}^{\texttt{-}2}$'', and hence cannot be meaningfully
compared.\vspace{0 cm}

Another ``tricky'' issue is the \textit{truncation of set-functions}
(functions of sets), made up of intractably large sums in most multi-object
system problems. Any function of a set $\{x_{1:n}\}$ must be symmetric
in $x_{1},...,x_{n}$, but truncation does not necessarily preserve
symmetry, rendering the truncated sum an invalid set-function. Consider,
e.g., the set-function $f(\{x,y\})=x+y+xy$, which is clearly symmetric.
However, truncating $y$ gives the approximation $\hat{f}(x,y)=x+xy$,
which is non-symmetric because $\hat{f}(x,y)\neq\hat{f}(y,x)=y+xy$,
and thus not a valid function of the set $\{x,y\}$. In general, care
must be taken to ensure validity of the results when truncating set-functions,
see also Subsections \ref{subsec:GLMB}, \ref{subsec:LRFS-approx},
and \ref{ss:Standard-Multi-object-SSM}. 

\section{Random Finite Sets Fundamentals\protect\label{sec:RFS-fundamental}}

\begin{table}[!t]
\global\long\def\arraystretch{1.3}%
 \caption{Common notations from Section \ref{sec:RFS-fundamental} onwards.}

\vspace*{0.4cm}
{\footnotesize\label{tbl:notation-fundamentals}}{\footnotesize\par}
\centering{}%
\begin{tabular}{|c|l|}
\hline 
{\small\textbf{Notation}} & {\small\textbf{Description}}\tabularnewline
\hline 
{\small$\mathcal{X}$} & {\small Finite dimensional Euclidean space}\tabularnewline
{\small$\mathcal{F}(\mathcal{X})$} & {\small Class of all finite subsets of $\mathcal{X}$}\tabularnewline
{\small$\mathbf{1}_{\mathcal{T}}(\cdot)$} & {\small Indicator function for a set $\mathcal{T}$}\tabularnewline
{\small$\pi_{\Sigma}$} & {\small Belief (or FISST) density of (the RFS) $\Sigma$}\tabularnewline
{\small$|X|$} & {\small Cardinality (number of elements) of $X$}\tabularnewline
{\small$h^{X}$} & {\small Multi-object exponential $\prod_{x\in X}h\left(x\right)$,
with $h^{\emptyset}=1$}\tabularnewline
{\small$\rho_{\Sigma}$} & {\small Cardinality distribution of $\Sigma$}\tabularnewline
{\small$v_{\Sigma}$} & {\small Probability hypothesis density (PHD) of $\Sigma$}\tabularnewline
{\small$\langle f,g\rangle$} & {\small Inner product $\int f(x)g(x)dx$ of functions $f$ and $g$}\tabularnewline
{\small$\delta_{Y}[X]$} & {\small Kronecker-}{\small$\delta$, $\delta_{Y}[X]=1$ if $X=Y$,
and 0 otherwise}\tabularnewline
\hline 
\end{tabular}
\end{table}

A \textit{random finite set} (\textit{RFS}) of $\mathcal{X}$ is a
\textit{random variable taking values in the class of all finite subsets
of} $\mathcal{X}$, hereon denoted as $\mathcal{F}(\mathcal{X})$.
The cardinality and elements of an RFS are random variables, and the
elements are unordered. A widely known example is the Poisson RFS,
where the number of points is Poisson distributed and the points are
i.i.d. according to some probability law on $\mathcal{X}$ \cite{MW2007Modern}.
In this article, we restrict ourselves to a finite dimensional Euclidean
space $\mathcal{X}$ so that RFSs of $\mathcal{X}$ have well-defined
probability densities. 

In Subsection \ref{subsec:Fundamental-Descriptors}, we summarize
some of the fundamental descriptors for RFSs. RFS statistics relevant
to multi-object systems such as the cardinality and Probability Hypothesis
Density (PHD) are presented in Subsection \ref{subsec:Card-PHD},
while multi-object estimators are discussed in Subsection \ref{subsec:Multi-Object-Est}.
Subsection \ref{subsec:Classical-Models} summarizes a number of popular
RFS models.

\subsection{Fundamental Descriptors\protect\label{subsec:Fundamental-Descriptors}}

RFSs fall under the broader class of random (closed) sets in stochastic
geometry \cite{Chiu-Stoyan-13}, which also enables Bayesian inference
with uncertainty models such as fuzzy set, Dempster-Shafer theory,
and rule-based expert system \cite{Goodmanetal97}. Stochastic geometry
has some overlaps with point process theory, and an RFS can be treated
as a \textit{simple finite point process}, i.e., a point process whose
realizations are finite and have no repeated points \cite{Daley88,Chiu-Stoyan-13}.
While the probability density of a point process may not exist, RFSs
of $\mathbb{R}^{n}$ do admit probability densities \cite[Prop. 5.4.V, p. 138]{Daley88},
\cite[Prop. 19, p. 159]{Goodmanetal97}, \cite{MW2007Modern}.

\subsubsection{Probability Density}

Since the notion of hyper-volume needed for density/integration on
$\mathcal{X}$ does not extend to $\mathcal{F}(\mathcal{X})$, we
adopt an equivalent construct\textendash the uniform (probability)
distribution. In point process theory, the unit Poisson RFS exhibits
complete spatial randomness analogous to the uniform distribution
on $\mathcal{X}$ (see e.g., \cite{Daley88,MW2007Modern,Chiu-Stoyan-13}).
Specifically, the number of points is Poisson distributed with unit
rate, and the points are uniformly i.i.d. in $\mathcal{X}$. The \textit{Poisson
measure}\textendash the unnormalized probability distribution of a
unit Poisson RFS\textendash is defined for each (measurable) $\mathcal{T}\subseteq\mathcal{F}(\mathcal{X})$
by 
\begin{align*}
\mu(\mathcal{T})= & \sum_{i=0}^{\infty}\frac{1}{i!U^{i}}\int\mathbf{1}_{\mathcal{T\negthinspace}}\left(\left\{ x_{1:i}\right\} \right)dx_{1:i},
\end{align*}
where $\mathbf{1}_{\mathcal{T}}(\cdot)$ is the indicator function
for $\mathcal{T}$, and $U$ is the unit of hyper-volume in $\mathcal{X}$,
with the convention that the integral for $i=0$ is the integrand
evaluated at $\emptyset$, i.e., $\mathbf{1}_{\mathcal{T\negthinspace}}\left(\emptyset\right)$.
Note that $1/U^{i}$ cancels out the unit $U^{i}$ of $dx_{1:i}$.
Further, integration of a unitless (or dimensionless) function $f$
on $\mathcal{F}(\mathcal{X})$ is realized via the Lebesgue integral
with respect to (w.r.t.) the Poisson measure $\mu$ \cite{MW2007Modern},
\cite{G1999Likelihood}:
\begin{align*}
\int\negthinspace f(X)\mu(dX)= & \sum_{i=0}^{\infty}\frac{1}{i!U^{i}}\negthinspace\int\negthinspace f\negthinspace\left(\left\{ x_{1:i}\right\} \right)\negthinspace dx_{1:i}.
\end{align*}

Analogous to random vector, the \textit{probability density }of an
RFS $\Sigma$ is defined as a non-negative function $p_{\Sigma}$
on $\mathcal{F}(\mathcal{X})$ such that for any (measurable) $\mathcal{T}\subseteq\mathcal{F}(\mathcal{X})$,
\[
\Pr(\Sigma\in\mathcal{T})=\int_{\mathcal{T}}p_{\Sigma}(X)\mu(dX)=\int\negthinspace\mathbf{1}_{\mathcal{T\negthinspace}}(X)p_{\Sigma}(X)\mu(dX),
\]
i.e., integrating the probability density yields the probability distribution.
This means the instantaneous probability per unit Poisson measure
at $X\in\mathcal{F}(\mathcal{X})$ is
\[
p_{\Sigma}(X)=\frac{\Pr(\Sigma\in dX)}{\mu(dX)},
\]
the (unit-less) Radon-Nikod\'{y}m derivative of the probability distribution
w.r.t. $\mu$. If $p_{\Sigma}(X)>0$ implies $p_{\Sigma}(Y)>0$ for
all $Y\subseteq X$, $\Sigma$ is said to be \textit{hereditary }\cite{Baddeley-etal-07}.
Unlike a random vector, the probability density $p_{\Sigma}(X)$ does
not have the usual likelihood interpretation, see \cite{VDPTV2018Modelbased}
for further details. 

\subsubsection{Belief Density}

\emph{Finite Set Statistics} (\textit{FISST}) offers an alternative
notion of density/integration on $\mathcal{F}(\mathcal{X})$, which
by-passes measure theoretic constructs \cite{Goodmanetal97,VoAES,MahlerBook07,MahlerBook14}.
In FISST, the\emph{ set integral} of a function $f$ on $\mathcal{F}(\mathcal{X})$
over a (compact) region $S\subseteq\mathcal{X}$ is defined as:
\begin{equation}
\int_{S}f(X)\delta X\triangleq\sum_{i=0}^{\infty}\frac{1}{i!}\int_{S^{i}}f(\{x_{1:i}\})dx_{1:i},\label{eq:set-integral}
\end{equation}
where $S^{i}$ denotes the $i$-fold Cartesian product of $S$, with
$S^{0}=\emptyset$, and the integral for $i=0$ is $f(\emptyset)$
by convention. Note that $f(\{x_{1:i}\})$ has unit of $U^{-i}$,
which cancels the unit $U^{i}$ of $dx_{1:i}$, hence $\int f(\{x_{1:i}\})dx_{1:i}$
and the set integral are unitless. 

The \textit{belief} (or \textit{FISST})\textit{ density} of an RFS
$\Sigma$ is defined as a non-negative function $\pi_{\Sigma}$ on
$\mathcal{F}(\mathcal{X})$ whose set integral over any region $S$
$\subseteq\mathcal{X}$ gives the \textit{belief functional} at $S$,
i.e.,
\[
\Pr(\Sigma\subseteq S)=\int_{S}\pi_{\Sigma}(X)\delta X.
\]
The belief functional $\Pr(\Sigma\subseteq S)$, as a function of
$S$, is not a probability distribution, but nonetheless, is a fundamental
descriptor of $\Sigma$ \cite{Goodmanetal97,MahlerBook07,MahlerBook14},
from which the belief density can be obtained by taking \textit{set
derivatives} (interested readers are referred to \cite{MahlerBook07,MahlerBook14}
for more details). A related fundamental descriptor is the \textit{void
probability} \textit{functional}, defined as $\Pr(\Sigma\cap S=\emptyset)$,
the probability that $S$ contains no points of $\Sigma$, i.e., $\Sigma$
is contained in the complement of $S$. This is the belief functional
at the complement of $S$ \cite{Chiu-Stoyan-13}, \cite{Daley88}.
Similarly, the \textit{capacity functional}, defined as the probability
that $S$ contains at least one point of $\Sigma$, is also a fundamental
descriptor because $\Pr(\Sigma\cap S\neq\emptyset)=1-\Pr(\Sigma\cap S=\emptyset)$
\cite{Chiu-Stoyan-13}. 

It is important to note that the probability density w.r.t. the Poisson
measure and the belief density are equivalent \cite{VoAES}:
\[
p_{\Sigma}(X)=U^{|X|}\pi_{\Sigma}(X),
\]
where $|X|$ denotes the cardinality (number of elements) of $X$.
Hence, they are collectively referred to as \textit{multi-object densities}.
For the purpose of introducing RFS algorithms, it is more convenient
to use the belief density. 

\subsubsection{Probability Generating Functional (PGFl)}

Another RFS fundamental descriptor pertinent to multi-object system
is the \textit{PGFl}, defined for any unitless \textit{test function}
$h:\mathcal{X}\rightarrow[0,1]$ as the expectation \cite{Daley88,MahlerBook07,Chiu-Stoyan-13,MahlerBook14}
\[
G_{\Sigma}[h]\triangleq\mathbb{E}\left[h^{\Sigma}\right]=\int h^{X}\pi_{\Sigma}(X)\delta X,
\]
where the\textit{ multi-object exponential} $h^{X}\triangleq\prod_{x\in X}h\left(x\right)$,
with the convention $h^{\emptyset}=1$. It is clear from the definition
that $G_{\Sigma}[h]\in[0,1]$, and $G_{\Sigma}[\mathbf{1}_{S}]=\Pr(\Sigma\subseteq S)$.
PGFls are analogous to probability generating functions. 

The convolution of multi-object densities translates to the multiplication
of PGFls. Suppose that $\Sigma$ is the union of disjoint and statistically
independent RFSs $\Sigma_{1}$,...,$\Sigma_{n}$. Then,
\begin{align}
G_{\Sigma}[h] & =G_{\Sigma_{1}}[h]...G_{\Sigma_{n}}[h],\label{eq:Convolution-PGFl}\\
\pi_{\Sigma}(X) & =\sum_{W_{1}\uplus...\uplus W_{n}=X}\pi_{\Sigma_{1}}(W_{1})...\pi_{\Sigma_{n}}(W_{n}),\label{eq:Convolution-PDF}
\end{align}
where the sum is taken over all mutually disjoint $W_{1}$,...,$W_{n}\subseteq X$
(including empty sets) that cover $X$ \cite[pp. 85-86]{MahlerBook07,MahlerBook14}. 

The multi-object density (and other statistics) can be obtained by
differentiating the PGFl \cite{Daley88,MahlerBook07,Chiu-Stoyan-13,MahlerBook14}.
For a functional $G$ on the space of test functions, let
\[
G^{(1)}[h;\zeta_{1}]\triangleq(dG)_{h}[\zeta]=\lim_{\epsilon\rightarrow0}\frac{G[h+\epsilon\zeta]-G[h]}{\epsilon},
\]
denote its \textit{G\^{a}teaux differential} at $h$ in the direction
$\zeta$ (if it exists). The $n$-th G\^{a}teaux \textit{differential}
at $h$ is a multi-linear form in the directions $\zeta_{1},...,\zeta_{n}$,
given recursively by
\[
G^{(n)}[h;\zeta_{1:n}]=(dG^{(n-1)}[\cdot;\zeta_{1:n-1}])_{h}[\zeta_{n}].
\]
Further, note that a multi-linear form can be expressed as $F[\zeta_{1:n}]=\int\zeta_{1}(y_{1})...\zeta_{n}(y_{n})f(y_{1:n})dy_{1:n}$,
and is completely characterized by the function $f$, which can be
rewritten in the Dirac delta notation $F[\delta_{x_{1}},...,\delta_{x_{n}}]\triangleq f(x_{1:n})$.
This is suggestive of evaluating $f$ at $x_{1},...,x_{n}$ via substituting
the Dirac deltas $\delta_{x_{1}},...,\delta_{x_{n}}$ into the integral
that defines $F$, i.e., $f(x_{1:n})$ can be treated as the value
of multi-linear form $F$ at $\delta_{x_{1}},...,\delta_{x_{n}}$.
Using the \textit{Volterra}\textit{ functional derivative} w.r.t.
a finite set \cite[p. 66]{MahlerBook14}, defined by
\begin{eqnarray*}
{\displaystyle {\displaystyle \frac{\delta G}{\delta\{x_{1:n}\}}[h]\triangleq G^{(n)}[h;\delta_{x_{1}},...,\delta_{x_{n}}]},}
\end{eqnarray*}
with $\frac{\delta G}{\delta\emptyset}[h]\triangleq G[h]$, we have
\cite[p. 95]{MahlerBook14}
\[
\frac{\delta G}{\delta X}[0]=\pi_{\Sigma}(X).
\]
Volterra functional set derivatives can be calculated using a suite
of differentiation rules \cite[pp. 383\textendash 395]{MahlerBook07},
including a powerful generalized Fa\`{a} di Bruno's chain rule \cite{CH2013Faa}.
Further properties and applications of PGFls in multi-object systems
can be found in the texts \cite{MahlerBook07,MahlerBook14}. \vspace{-0.2cm}

\subsection{Cardinality and Probability Hypothesis Density\protect\label{subsec:Card-PHD}}

In multi-object systems, relevant statistics of an RFS $\Sigma$ often
involve its \textit{cardinality} (number of elements). The probability
generating function (PGF) of the cardinality $|\Sigma|$, evaluated
at $z\in[0,1]$, is the PGFl evaluated at the test function $h(x)=z$,
i.e., $G_{|\Sigma|}(z)=G_{\Sigma}[h=z]$. The cardinality distribution
$\rho_{\Sigma}(n)\triangleq\Pr(\left\vert \Sigma\right\vert =n)$
can be computed via \cite{MahlerCPHD07,MahlerBook07,Chiu-Stoyan-13}
\[
\rho_{\Sigma}(n)=\frac{1}{n!}\int\pi_{\Sigma}(\{x_{1:n}\})dx_{1:n}=\frac{1}{n!}G_{|\Sigma|}^{(n)}(0),
\]
where $G_{|\Sigma|}^{(n)}$ is the $n$-th derivative of the PGF.
Statistics of the cardinality can be computed from $\rho_{\Sigma}$
or $G_{|\Sigma|}$.
\begin{figure}[t]
\begin{centering}
\resizebox{88mm}{!}{\includegraphics[clip]{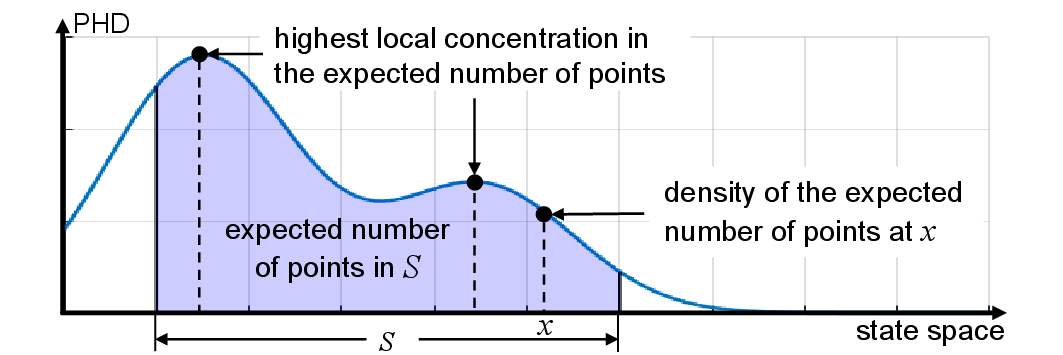}}
\par\end{centering}
\caption{PHD or intensity function on a 1-D state space.}
\label{fig:PHD}\vspace{-0mm}
\end{figure}

Another well-known cardinality-based statistic of an RFS is the \emph{intensity
function }\cite{Daley88,Chiu-Stoyan-13}, also known as the\emph{
Probability Hypothesis Density} (\textit{PHD}) in the MOT literature
\cite{Mahler2003,MahlerBook07}. As illustrated in Fig. \ref{fig:PHD},
it is defined as a non-negative function $v_{\Sigma}$ (on $\mathcal{X}$)
whose integral over any region $S$ $\subseteq\mathcal{X}$ gives
the expected cardinality in $S$, i.e.,
\begin{equation}
\mathbb{E}\left[\left\vert \Sigma\cap S\right\vert \right]=\int_{S}v_{\Sigma}(x)dx.\label{eq:PHD}
\end{equation}
The PHD can be computed from the multi-object density or PGFl by
\cite[p. 93]{MahlerBook14}
\[
v_{\Sigma}(x)=\int\pi_{\Sigma}(\{x\}\cup W)\delta W=\frac{\delta G}{\delta\{x\}}[1],
\]
and is the 1st of the \textit{factorial moment densities} \cite{Daley88,Chiu-Stoyan-13}:
\[
v_{\Sigma}(X)\triangleq\int\pi_{\Sigma}(X\cup W)\delta W=\frac{\delta G}{\delta X}[1].
\]
Interestingly, knowledge of the PHD $v_{\Sigma}$ is sufficient to
calculate the expectation of random sums of a (measurable) real function
$f$ over $\Sigma$, via Campbell's Theorem \cite{Chiu-Stoyan-13},
\cite{Baddeley-etal-07}
\[
\mathbb{E}\left[{\textstyle \sum_{x\in\Sigma}f(x)}\right]=\int f(x)v_{\Sigma}(x)dx.
\]
Definition (\ref{eq:PHD}) means that the PHD is the density of the
expected cardinality w.r.t. hyper-volume. This physically intuitive
interpretation is one of the factors behind the appeal of the PHD
filter. The local maxima of the PHD are points in $\mathcal{X}$ with
the highest local concentration of expected number of objects (per
unit hyper-volume), suggesting that they are the most likely states
for the underlying objects, see Fig. \ref{fig:PHD}. Based on such
interpretation, the \textit{PHD estimate} and \textit{cardinalized}\textit{
PHD }(\textit{CPHD})\textit{ estimate}, respectively, use $\hat{n}=\textrm{round}\left(\mathbb{E}\left[|\Sigma|\right]\right)$
and $\hat{n}=\arg\max_{n}\rho_{\Sigma}(n)$ as the estimated number
of objects, and the $\hat{n}$ highest local maxima of the PHD as
the estimated states, see also \cite[p. 595]{MahlerBook07} for more
details.

For a hereditary RFS, the intensity function can be extended to the
\textit{conditional intensity} at a point $x$, defined as \cite{Baddeley-etal-07}
\[
v_{\Sigma}\left(x|Y\right)=\frac{p_{\Sigma}\left(Y\cup\{x\}\right)}{p_{\Sigma}(Y)}.
\]
Note that the intensity function is given by the expectation $\int v_{\Sigma}\left(x|Y\right)\pi_{\Sigma}(Y)\delta Y.$
Moreover, the probability density $p_{\Sigma}$ (and belief density
$\pi_{\Sigma}$) is completely determined by its conditional intensity
\cite{Chiu-Stoyan-13,Baddeley-etal-07}. Working with the conditional
intensity eliminates the computation of the normalizing constant,
but requires certain consistency conditions.

\subsection{Multi-Object Estimators \protect\label{subsec:Multi-Object-Est}}

Given a probability density, estimators are needed to determine estimates
of the underlying random variable. Popular estimates such as the mean
and mode are not directly applicable to RFSs, because there is no
average for sets nor a likelihood interpretation for multi-object
density \cite{VDPTV2018Modelbased}. Nonetheless, as the 1st moment,
the PHD can be regarded as the expectation of an RFS, and hence the
PHD estimate (Subsection \ref{subsec:Card-PHD}) can be treated as
the mean estimate. The cardinality of the PHD estimate has a high
variance for a large number of objects since the variance of a Poisson
is equal to the mean. The CPHD estimate is an extension to improve
the cardinality estimate.

While the multi-object density does not have a likelihood interpretation
over $\mathcal{F}(\mathcal{X})$, when restricted to a cardinality
it can be interpreted as a likelihood. This observation leads to an
extension of the mode estimate, called the \textit{Marginal Multi-object
}(\textit{MaM})\textit{ estimate}, defined as \textit{the most probable
multi-object state given the most probable cardinality} \cite[pp. 497-498]{MahlerBook07}
(which requires computing: the most probable cardinality from the
cardinality distribution; and the supremum of the multi-object density
conditioned on this cardinality). Another estimate that emulates the
mode as an optimal Bayes estimator is the \textit{Joint Multi-object
}(\textit{JoM})\textit{ estimate} \cite[pp. 498-500]{MahlerBook07}
\[
\hat{X}=\arg\sup_{X\in\mathcal{F}(\mathcal{X})}\frac{c^{|X|}}{|X|!}\pi_{\Sigma}(X),
\]
where $c>0$ is a constant with magnitude in the order of the desired
accuracy, measured in units of hyper-volume in $\mathcal{X}$.\vspace{-0mm}

\subsection{Classical RFS Models\protect\label{subsec:Classical-Models}}

The following RFSs are popular models in point process and multi-object
SSM, especially for multi-object observations and multi-object localization.
Hereon, we denote the \emph{inner product} $\int f(\mathbf{\zeta})g(\mathbf{\zeta})d\mathbf{\zeta}$
of two functions $f$, $g$ (or $\sum_{\ell=0}^{\infty}f(\ell)g(\ell)$
when they are sequences) as $\langle f,g\rangle$, and the generalized
Kronecker delta that takes arbitrary arguments as
\[
\delta_{Y}\left[X\right]\triangleq\begin{cases}
1, & \text{if }X=Y\\
0, & \text{otherwise}
\end{cases}.
\]

\subsubsection{Poisson}

A \textit{Poisson RFS} (or Poisson point process) of $\mathcal{X}$
is completely characterized by its PHD (intensity function) $v_{\Sigma}$\textit{\emph{
}}\cite[p. 98]{Daley88,Chiu-Stoyan-13,MahlerBook14}, with multi-object
density and PGFl:
\begin{eqnarray*}
\pi_{\Sigma}(X)=e^{\texttt{-}\langle v_{\Sigma},1\rangle}\,v_{\Sigma}^{X}, &  & G_{\Sigma}[h]=e^{\langle v_{\Sigma},h\texttt{-}1\rangle}.
\end{eqnarray*}
Its cardinality is Poisson distributed with mean $\langle v_{\Sigma},1\rangle$,
i.e., $\rho_{\Sigma}(n)=e^{\texttt{-}\langle v_{\Sigma},1\rangle}\langle v_{\Sigma},1\rangle^{n}/n!$,
and conditioned on the cardinality (number of distinct elements),
each (distinct) element is i.i.d. according to $v_{\Sigma}/\langle v_{\Sigma},1\rangle$. 

The Poisson RFS models \textquotedblleft no interaction\textquotedblright{}
or \textquotedblleft complete spatial randomness\textquotedblright ,
and is one of the best-known and most tractable of point processes
\cite{Daley88,Chiu-Stoyan-13,VanLieshout-2000,MW2007Modern}. In multi-object
systems, it is a popular model for clutter or false alarms.

\subsubsection{I.I.D. Cluster}

An \emph{i.i.d. cluster} \textit{RFS} is a generalization of the Poisson
RFS to accommodate cardinality distributions other than Poisson, and
is completely characterized by its cardinality distribution $\rho_{\Sigma}$
and PHD $v_{\Sigma}$. Specifically, the multi-object density and
PGFl are \cite[p. 99]{Daley88,Chiu-Stoyan-13,MahlerBook14}:
\begin{eqnarray*}
\pi_{\Sigma}(X)=\frac{\rho_{\Sigma}(|X|)|X|!}{\langle v_{\Sigma},1\rangle^{|X|}}v_{\Sigma}^{X}, &  & G_{\Sigma}[h]=G_{|\Sigma|}\left(\frac{\langle v_{\Sigma},h\rangle}{\langle v_{\Sigma},1\rangle}\right).
\end{eqnarray*}
The cardinality of an i.i.d. cluster RFS is distributed according
to the prescribed $\rho_{\Sigma}$, and conditioned on the cardinality,
each (distinct) element is i.i.d. according to $v_{\Sigma}/\langle v_{\Sigma},1\rangle$.
When $\rho_{\Sigma}$ is Poisson, this reduces to the \textit{\emph{Poisson
RFS }}\cite{Daley88,Chiu-Stoyan-13}.
\begin{rem}
\label{rm2:iid_cluster}Multi-object densities for i.i.d. cluster
(and Poisson) RFSs are defined for a fixed dimensional $\mathcal{X}$.
They may not exist otherwise \cite[Prop. 5.4.V, p. 138]{Daley88},
\cite{MW2007Modern}. 
\end{rem}

\subsubsection{Multi-Bernoulli}

A \emph{Bernoulli} RFS, parameterized by the pair $(r,p)$, has probability
$1-r$ of being empty, and probability $r$ of being a singleton,
conditioned on which the element is distributed according to the probability
density $p$ (on $\mathcal{X}$). The multi-object density, PGFl,
cardinality distribution, and PHD are given by \cite[p. 100]{MahlerBook14}\allowdisplaybreaks
\begin{align*}
\pi_{\Sigma}(X) & =(1-r)\delta_{\emptyset}[X]+rp(x)\delta_{\{x\}}[X],\\
G_{\Sigma}[h] & =1-r+r\langle p,h\rangle,\\
\rho_{\Sigma}(n) & =(1-r)\delta_{0}[n]+r\delta_{1}[n],\\
v_{\Sigma}(x) & =rp(x).
\end{align*}

A \emph{multi-Bernoulli} RFS (for compactness we drop the `RFS') is
a union of disjoint and independent Bernoulli RFSs with parameters
$\{(r^{(\zeta)},p^{(\zeta)})\}_{\zeta\in\Psi}$ \cite[p. 101]{MahlerBook14}.
Its PGFl is the product of the constituent Bernoulli RFSs' PGFls,

\[
G_{\Sigma}[h]=\prod_{\zeta\in\Psi}(1-r^{(\zeta)}+r^{(\zeta)}\langle p^{(\zeta)},h\rangle),
\]
and its multi-object density is the convolution of the constituent
Bernoulli RFSs' densities \cite[p. 102]{MahlerBook14}
\begin{align}
\!\!\!\pi_{\Sigma}(\{x_{1:n}\})=\pi_{\Sigma}(\emptyset)\!\!\sum_{1\leq i_{1}\neq...\neq i_{n}\leq|\Psi|}\prod_{j=1}^{n}\frac{r^{(\zeta_{i_{j}})}p^{(\zeta_{i_{j}})}(x_{j})}{1-r^{(\zeta_{i_{j}})}},\!\!\label{eq:Multi-Bernoulli-density}
\end{align}
where $\pi_{\Sigma}(\emptyset)=\prod_{\zeta\in\Psi}(1-r^{(\zeta)})$,
and $\zeta_{1}$,..., $\zeta_{|\Psi|}$ enumerate all the elements
of $\Psi$. By convention, the sum reduces to $1$ when $n=0$, and
zero when $n>|\Psi|$. It is implicitly assumed that $r^{(\zeta)}\in[0,1)$.
Statistics such as the cardinality distribution and PHD are given
by \cite[p. 102]{MahlerBook14},
\begin{align}
\rho_{\Sigma}(n) & =\pi_{\Sigma}(\emptyset)\sum_{1\leq i_{1}<...<i_{n}\leq|\Psi|}\prod_{j=1}^{n}\frac{r^{(\zeta_{i_{j}})}}{1-r^{(\zeta_{i_{j}})}},\label{eq:Multi-Bernoulli-card}\\
v_{\Sigma}(x) & =\sum_{\zeta\in\Psi}r^{(\zeta)}p^{(\zeta)}(x).\label{eq:Multi-Bernoulli-PHD}
\end{align}
Multi-Bernoullis are often used for modeling object survival/death,
and detection uncertainty in observations. 
\begin{rem}
\label{rm3:negligible_dependence}For an i.i.d. cluster (and Poisson)
RFS, the joint distribution of the elements is conditioned on the
cardinality (i.e., the number of distinct elements), which ensures
distinctness of the elements. However, for a multi-Bernoulli, the
distribution of each element is conditional on its existence, independent
from others, and there is no mechanism to prevent two constituent
(independent) Bernoulli RFS's from sharing the same point. For example,
consider the Bernoulli RFSs $B^{(i)}$ with $r^{(i)}=0.5,p^{(i)}(\cdot)=\mathcal{N}(\cdot;0,1)$,
$i=1,2$. Noting that the likelihood of a realization $x_{1}$ from
$B^{(1)}$ and $x_{2}$ from $B^{(2)}$ is $0.25\mathcal{N}(x_{1};0,1)\mathcal{N}(x_{2};0,1)$,
the likelihood of $x_{1}=0$ and $x_{2}=0$ is not only positive,
but the highest possible among all values of $x_{1},x_{2}$, and hence,
not negligible. The disjoint condition between the constituent Bernoulli
RFSs implies dependence. The expressions for the multi-Bernoulli's
PGFl and multi-object density implicitly assume negligible dependence
between constituent components. This assumption is reasonable for
traditional detection processes based on thresholding, which return
distinct observations.  
\end{rem}

\section{Labeled Random Finite Set\protect\label{sec:LRFS-models}}

\begin{table}[!t]
\global\long\def\arraystretch{1.3}%
 \caption{Common notations from Section \ref{sec:LRFS-models} onwards.}

\vspace*{0.4cm}
 {\footnotesize\label{tbl:notation-LRFS}}{\footnotesize\par}
\centering{}%
\begin{tabular}{|c|l|}
\hline 
{\small\textbf{Notation}} & {\small\textbf{Description}}\tabularnewline
\hline 
{\small$\mathcal{A}$} & {\small Attribute projection $(x,\ell)\mapsto x$}\tabularnewline
{\small$\mathcal{L}$} & {\small Label projection $(x,\ell)\mapsto\ell$}\tabularnewline
{\small$\Delta(\boldsymbol{X})$} & {\small Distinct label indicator $\delta_{|\boldsymbol{X}|}[|\mathcal{L}(\boldsymbol{X})|]$}\tabularnewline
$\left\langle f\right\rangle \!(\{\ell_{1:n}\})$ & {\small Label-marginal }${\textstyle \int}f(\{(x_{1},\ell_{1}),...,(x_{n},\ell_{n})\})dx_{1:n}$\tabularnewline
{\small$\boldsymbol{\pi}_{\boldsymbol{\Sigma}}$} & {\small Multi-object density of (the LRFS) $\boldsymbol{\Sigma}$}\tabularnewline
{\small$w_{\boldsymbol{\Sigma}}(L)$} & {\small Probability that $\boldsymbol{\Sigma}$ has label set $L$}\tabularnewline
{\small$\rho_{\boldsymbol{\Sigma}}$} & {\small Cardinality distribution of $\boldsymbol{\Sigma}$}\tabularnewline
{\small$\boldsymbol{v}_{\boldsymbol{\Sigma}}$} & {\small PHD or intensity of $\boldsymbol{\Sigma}$}\tabularnewline
{\small$\boldsymbol{X}_{j:k}$} & {\small Multi-object sequence/trajectory on $\{j\textrm{:}k\}$}\tabularnewline
{\small$\mathcal{L}(\boldsymbol{X}_{j:k})$} & {\small$(\mathcal{L}(\boldsymbol{X}_{j}),...,\mathcal{L}(\boldsymbol{X}_{k}))$}\tabularnewline
{\small$T(\ell)$} & {\small Set of instants that $\boldsymbol{X}_{j:k}$ contains label
$\ell$}\tabularnewline
{\small$\boldsymbol{x}_{T(\ell)}^{(\ell)}$} & {\small$[(x_{i},\ell)\in\boldsymbol{X}_{i}]_{i\in T(\ell)}$, trajectory
of $\ell$ in $\boldsymbol{X}_{j:k}$ }\tabularnewline
{\small$h^{\boldsymbol{X}_{j:k}}$} & {\small${\displaystyle {\textstyle \prod}_{\ell\in\mathcal{L}(\boldsymbol{X}_{j:k})}}h(\boldsymbol{x}_{T(\ell)}^{(\ell)}),$}\tabularnewline
{\small${\displaystyle {\textstyle \cprods}{}_{i\text{=}j}^{k}}S_{i}$} & {\small$S_{j}\times...\times S_{k}$}\tabularnewline
{\small$\Delta(\boldsymbol{X}_{j:k})$} & {\small Multi-scan distinct label indicator $\prod_{i=j}^{k}\Delta(\boldsymbol{X}_{i})$}\tabularnewline
\hline 
\end{tabular}
\end{table}

A labeled RFS is a special class of RFSs introduced in \cite{VoConj11,VoConj13}
for modeling multi-object states/trajectories. It provides a versatile
multi-object estimation framework that admits characterization of
uncertainty for the multi-object trajectory ensemble, and meaningful
approximations with characterizable errors that are requisite for
principled solutions. 

Formally, a \textit{labeled RFS} (LRFS) with \textit{attribute space}
$\mathbb{X}$ and (discrete) \textit{label space} $\mathbb{L}=\{\mathcal{\alpha}_{i\!}:\!i\in\mathbb{N}\}$,
is an RFS $\boldsymbol{\Sigma}$ of the product space $\mathbb{X}\times\mathbb{L}$
such that each realization has distinct labels. Defining the \textit{attribute
projection} $\mathcal{A\!}:\!(x,\ell)\mapsto x,$ and \textit{label
projection} $\mathcal{L}\!:\!(x,\ell)\mapsto\ell,$ so that $\mathcal{A}(\boldsymbol{X})$
and $\mathcal{L}(\boldsymbol{X})$ are, respectively, the sets of
\textit{attributes} and \textit{labels} of $\boldsymbol{X}$, then
$\boldsymbol{X}$ has \textit{distinct labels} if the \textit{distinct
label indicator}
\begin{equation}
\Delta(\boldsymbol{X})\triangleq\delta_{|\boldsymbol{X}|}[|\mathcal{L}(\boldsymbol{X})|],\label{eq:distinct-label-indicator}
\end{equation}
equals 1, i.e., $\mathcal{L}(\boldsymbol{X})$ and $\boldsymbol{X}$
have the same cardinality. 

An LRFS can be thought of as a simple finite \textit{marked point
process}\footnote{A marked point process $\boldsymbol{\Sigma}$ of $\mathbb{X}\times\mathbb{M}$,
with mark space $\mathbb{M}$, satisfies $\left\vert \boldsymbol{\Sigma}\cap(S\times\mathbb{M})\right\vert <\infty$
for any bounded $S\subseteq\mathbb{X}$ \cite{Daley88,Baddeley-etal-07,MW2007Modern,Chiu-Stoyan-13}.} of $\mathbb{X}\times\mathbb{L}$, with distinct marks from the discrete
mark space. Note the distinction between ``simple finite marked''
and ``marked simple finite'' point processes. The former is simple,
but the point process formed by \textit{unmarking} (discarding the
marks) is not necessarily simple. The latter is the special case with
a simple unmarked version, because it is constructed by marking an
RFS, and hence, has the same cardinality as its unmarked version\cite{VoConj13}.
In general, an LRFS does not necessarily have the same cardinality
as its unmarked version. 

Subsection \ref{subsec:LRFS-def} introduces the concept of joint
existence probability and the ensuing multi-object estimators unique
to LRFSs. Subsections \ref{subsec:LIIDCluster} to \ref{subsec:GLMB}
summarize popular LRFS models. Information divergences for LRFS are
presented in Subsection \ref{subsec:Info-Div}, including closed-form
expressions  not previously published. LRFS approximations and spatio-temporal
modeling are discussed in Subsections \ref{subsec:LRFS-approx} and
\ref{subsec:Multi-scan-GLMB}. 

Following \cite{VoConj13}, vectors are represented by lower case
letters (e.g., $x$ and $\boldsymbol{x}$), and finite sets are represented
by upper case letters (e.g., $X$ and $\boldsymbol{X}$), where the
symbols for labeled entities and their distributions are bolded (e.g.,
$\boldsymbol{x}$, $\boldsymbol{X}$, $\boldsymbol{\pi}$, etc.) to
distinguish them from unlabeled ones. 

\subsection{Joint Existence Probability\protect\label{subsec:LRFS-def}}

Pertinent to multi-object estimation and unique to LRFSs is the notion
of joint existence probability. Let us denote the \textit{label-marginal}
of a function $f:\mathcal{F}(\mathbb{X}\times\mathbb{L})\rightarrow\mathbb{R}$,
by 
\begin{equation}
\left\langle f\right\rangle \!(\{\ell_{1:n}\})\triangleq{\textstyle \int}f(\{(x_{1},\ell_{1}),...,(x_{n},\ell_{n})\})dx_{1:n},
\end{equation}
with $\left\langle f\right\rangle \!(\emptyset)\triangleq f(\emptyset)$.
For an LRFS $\boldsymbol{\Sigma}$ with multi-object density $\boldsymbol{\pi}_{\boldsymbol{\Sigma}}$,
we define the \textit{joint existence probability} of $L\subseteq\mathbb{L}$,
and the \textit{label-conditioned joint attribute }(\textit{probability})\textit{
density} for distinct labels $\ell_{1:n}$, respectively, as \cite{PapiVoVoetal15}
\begin{align}
w_{\boldsymbol{\Sigma}}(L)\thinspace\mathbf{\triangleq}\thinspace & \left\langle \boldsymbol{\pi}_{\boldsymbol{\Sigma}}\right\rangle \!(L),\label{eq:joint-existence-prob}\\
\pi_{\boldsymbol{\Sigma}|\ell_{1:n}}(x_{1:n})\thinspace\triangleq\thinspace & \frac{\boldsymbol{\pi}_{\boldsymbol{\Sigma}}(\{(x_{1},\ell_{1}),...,(x_{n},\ell_{n})\})}{w_{\boldsymbol{\Sigma}}(\{\ell_{1:n}\})},\label{eq:label-conditioned-joint}
\end{align}
with the convention that $\pi_{\boldsymbol{\Sigma}|\ell_{1:n}}(x_{1:n})=1$
whenever $w_{\boldsymbol{\Sigma}}(\{\ell_{1:n}\})=0$. Then $w_{\boldsymbol{\Sigma}}(\cdot)$
is a probability distribution on $\mathcal{F}(\mathbb{L})$, and $\pi_{\boldsymbol{\Sigma}|\ell_{1:n}}(\cdot)$
is a probability density on $\mathbb{X}^{n}$ \cite{PapiVoVoetal15}.
The joint existence probability $w_{\boldsymbol{\Sigma}}(L)$ is the
probability that the LRFS $\boldsymbol{\Sigma}$ has label set $L$.

In the context of estimators, the joint existence probability is more
informative than the cardinality distribution \cite{PapiVoVoetal15}
\begin{equation}
\rho{}_{\boldsymbol{\Sigma}}(n)=\sum_{L\subseteq\mathbb{L}}\delta_{n}[|L|]w_{\boldsymbol{\Sigma}}\left(L\right),\label{eq:LRFS-cardinality}
\end{equation}
in the following sense. The label sets of cardinality $n$, each
with small $w_{\boldsymbol{\Sigma}}\left(L\right)$, could accumulate
up to a large $\rho{}_{\boldsymbol{\Sigma}}(n)$. Thus, sets with
the most probable cardinality, e.g., the MaM estimate in Subsection
\ref{subsec:Multi-Object-Est}, could have negligible joint existence
probability compared to the highest joint existence probability achievable
(by sets with other cardinalities). 

A judicious alternative to the MaM estimate is the \textit{label-MaM}
estimate, defined as \textit{the most probable multi-object state
given the most probable label set} (which is meaningful because the
label set is a discrete variable). This estimator seeks the label
set $L^{*}$ with highest joint existence probability, whose attributes
are given by the mode of the corresponding label-conditioned joint
attribute density. If the most probable label set is not unique, we
select one with the most probable cardinality, and if this is still
not unique, we select the one that yields the highest label-conditioned
joint attribute density. 

LRFS furnishes the PHD or intensity value $\boldsymbol{v}_{\boldsymbol{\Sigma}}(x,\ell)$
with an additional interpretation as the attribute PHD at $x$, of
(the object with) label $\ell$. Recall the definition of the PHD
from (\ref{eq:PHD}) that for any $S\subseteq\mathbb{X}$, the expected
number of objects with attributes in $S$ and label $\ell$ is given
by
\begin{eqnarray*}
\mathbb{E}\left[\left\vert \boldsymbol{\Sigma}\cap(S\times\{\ell\})\right\vert \right]=\int_{S}\boldsymbol{v}_{\boldsymbol{\Sigma}}(x,\ell)dx.
\end{eqnarray*}
Thus, for a given label $\ell$, the function $\boldsymbol{v}_{\boldsymbol{\Sigma}}(\cdot,\ell)$
is its attribute PHD. Further, the distinct label property means that
the cardinality $\left\vert \boldsymbol{\Sigma}\cap(\mathbb{X}\times\{\ell\})\right\vert $
is either 1, if $\boldsymbol{\Sigma}$ has a member with label $\ell$,
or 0 otherwise. Consequently, the expectation $\mathbb{E}[\left\vert \boldsymbol{\Sigma}\cap(\mathbb{X}\times\{\ell\})\right\vert ]$
(i.e., the total attribute PHD mass of $\ell$) cannot exceed 1.
Hence, the \textit{existence probability} of $\ell$ and its \textit{attribute
}(\textit{probability})\textit{ density}, in an LRFS $\boldsymbol{\Sigma}$,
are given by
\begin{eqnarray}
\!\!\!\!\!\!\!\!\!r_{\boldsymbol{\Sigma}}(\ell)=\int\boldsymbol{v}_{\boldsymbol{\Sigma}}(x,\ell)dx, & \!\!\!\! & p_{\boldsymbol{\Sigma}}\left(\cdot,\ell\right)=\boldsymbol{v}_{\boldsymbol{\Sigma}}(\cdot,\ell)/r_{\boldsymbol{\Sigma}}(\ell).\label{eq:existence-prob}
\end{eqnarray}

\begin{rem}
\label{rm4:considering_other_labels}It is imperative to note that
$r_{\boldsymbol{\Sigma}}\left(\ell\right)\neq w_{\boldsymbol{\Sigma}}(\{\ell\})$
and $p_{\boldsymbol{\Sigma}}\left(x,\ell\right)\neq\pi_{\boldsymbol{\Sigma}|\ell}(x)$.
Whereas $w_{\boldsymbol{\Sigma}}(\{\ell\})$ is the joint probability
that (only) $\ell$ exists and the other labels do not, $r_{\boldsymbol{\Sigma}}\left(\ell\right)$
is the marginal probability that $\ell$ exists regardless of other
labels. Similarly, $\pi_{\boldsymbol{\Sigma}|\ell}(\cdot)$ is the
attribute density given that only $\ell$ exists, whereas $p_{\boldsymbol{\Sigma}}\left(\cdot,\ell\right)$
is the attribute density given that $\ell$ exists regardless of others.
The marginal nature of $r_{\boldsymbol{\Sigma}}\left(\ell\right)$
and $p_{\boldsymbol{\Sigma}}\left(x,\ell\right)$ can be seen from
the definition of the PHD $\boldsymbol{v}_{\boldsymbol{\Sigma}}(x,\ell)$.
\end{rem}
The PHD (of LRFS) offers inexpensive sub-optimal JoM and MaM estimates.
The sub-optimal JoM estimator seeks the labels with existence probabilities
above a prescribed threshold, and estimates their attributes from
the corresponding attribute densities (via the modes/means). This
estimate depends on the existence threshold similar to the constant
$c$ in the JoM estimate. The sub-optimal MaM estimator seeks the
most probable cardinality $n^{*}$ (either from $\rho_{\boldsymbol{\Sigma}}$,
if available, or from the multi-Bernoulli cardinality approximation
using the existence probabilities), and the $n^{*}$ labels with highest
existence probabilities, whose attributes are then estimated from
their attribute densities. The PHD also offers a tractable sub-optimal
label-MaM estimate, see Subsection \ref{subsec:LMB}.

\subsection{Labeled I.I.D. Cluster\protect\label{subsec:LIIDCluster}}

Analogous to its unlabeled counterpart, a\textit{ labeled i.i.d. cluster}
$\boldsymbol{\Sigma}$ is an LRFS characterized by a cardinality distribution
$\rho_{\boldsymbol{\Sigma}}$ and an attribute probability density
$f_{\boldsymbol{\Sigma}}$ (on $\mathbb{X}$) \cite{VoConj13}. Conditioned
on cardinality $n$, the $n$ (not necessarily distinct) attributes
i.i.d. according to $f_{\boldsymbol{\Sigma}}$, are marked with distinct
labels from $\mathbb{L}(n)\triangleq\{\mathcal{\alpha}_{i}\!:\!i=1\!:\!n\}$.
A labeled i.i.d. cluster has multi-object density and PGFl \cite[p. 450]{VoConj13,MahlerBook14}
\begin{align*}
\boldsymbol{\pi}_{\boldsymbol{\Sigma}}\left(\boldsymbol{X}\right) & =\delta_{\mathbb{L}(|\boldsymbol{X}|)}[\mathcal{L}(\boldsymbol{X})]\rho_{\boldsymbol{\Sigma}}(|\mathcal{L}(\boldsymbol{X})|)\left(f_{\boldsymbol{\Sigma}}\circ\mathcal{A}\right)^{\boldsymbol{X}},\\
\boldsymbol{G}_{\boldsymbol{\Sigma}}[\boldsymbol{h}] & =\sum\limits_{n=0}^{\infty}\rho_{\boldsymbol{\Sigma}}(n)\prod\limits_{\ell\in\mathbb{L}(n)}\langle f_{\boldsymbol{\Sigma}},\boldsymbol{h}(\cdot,\ell)\rangle,
\end{align*}
respectively, for any unitless test function $\boldsymbol{h}$ on
$\mathbb{X}\mathcal{\times}\mathbb{L}$, where $\circ$ denotes composition.
It is clear from the above descriptors that a labeled i.i.d. cluster
is not an i.i.d. cluster. The PHD of a labeled i.i.d. cluster is given
by \cite[p. 451]{MahlerBook14}\vspace{-0.1cm}
\[
\boldsymbol{v}_{\boldsymbol{\Sigma}}(x,\ell)=f_{\boldsymbol{\Sigma}}(x)\sum\limits_{n=0}^{\infty}\mathbf{1}_{\mathbb{L}(n)\negthinspace}(\ell)\rho_{\boldsymbol{\Sigma}}(n).
\]
The sum over $n$ is the existence probability of $\ell$, and $f_{\boldsymbol{\Sigma}}$
is its attribute density (independent of $\ell$). The \textit{labeled
Poisson} is the special case with Poisson cardinality distribution.
Note that a labeled Poisson RFS is not a Poisson RFS of $\mathbb{X}\times\mathbb{L}$.
\vspace{-0.1cm} 

\subsection{Labeled Multi-Bernoulli\protect\label{subsec:LMB}}

Similar to a multi-Bernoulli, a \textit{labeled multi-Bernoulli} (LMB)
is an LRFS characterized by a collection of independent Bernoulli
RFSs (of the attribute space $\mathbb{X}$) with parameters $\{(r^{(\zeta)},p^{(\zeta)})\,{\textstyle :\,}\zeta\!\in\!\Psi\}$,
and additionally, a 1-1 (injective) mapping $\sigma\,{\textstyle :\,}\mathbb{L}\rightarrow\Psi$
that pairs each $\zeta\in\Psi$ with a distinct label $\ell\in\mathbb{L}$
\cite{VoConj13}. For each $\zeta\in\Psi$, a \textit{labeled Bernoulli
}RFS is constructed by marking the Bernoulli RFS parameterized by
$(r^{(\zeta)},p^{(\zeta)})$ with the associated label $\ell=\sigma^{\texttt{-}1}(\zeta)$.
The resulting labeled Bernoulli RFSs are disjoint (due to their distinct
labels), and their union is an LMB $\boldsymbol{\Sigma}$, whose
multi-object density is given by \cite{VoConj13}
\begin{equation}
\boldsymbol{\pi_{\boldsymbol{\Sigma}}}\left(\boldsymbol{X}\right)=\Delta(\boldsymbol{X})\left[\mathbf{1}_{\mathcal{D}(\sigma)}\right]^{\mathcal{L}(\boldsymbol{X})}\left[\boldsymbol{\pi_{\boldsymbol{\Sigma}}}(\boldsymbol{X};\cdot)\right]^{\Psi},\label{eq:LabeledMB}
\end{equation}
where $\mathcal{D}(\sigma)$ is the domain of $\sigma$, and
\[
\boldsymbol{\pi_{\boldsymbol{\Sigma}}}(\boldsymbol{X};\zeta)=\sum_{(x,\ell)\in\boldsymbol{X}}\!\!\!\!\delta_{\sigma(\ell)\!}[\zeta]r^{(\zeta)\!}p^{(\zeta)\!}(x)\!+\!(1\!-\!\delta_{\sigma(\ell)\!}[\zeta])(1\!-\!r^{(\zeta)\!}).
\]
The above sum either takes on: $r^{(\zeta)}p^{(\zeta)}(x)$ if $\zeta$
matches the label of $\left(x,\ell\right)\in\boldsymbol{X}$, i.e.,
$\left(x,\mathcal{\sigma}^{\texttt{-}1}(\zeta)\right)\in\boldsymbol{X}$;
or $1-r^{(\zeta)}$ if $\zeta$ is not matched with any labels in
$\mathcal{L}\left(\boldsymbol{X}\right)$, i.e., $\zeta\notin\mathcal{\sigma}(\mathcal{L}\left(\boldsymbol{X}\right))$.
Hence, it can be written in piece-wise form
\[
\boldsymbol{\pi_{\boldsymbol{\Sigma}}}(\boldsymbol{X};\zeta)=\begin{cases}
r^{(\zeta)}p^{(\zeta)}(x), & \!\!\!\text{if }\left(x,\mathcal{\sigma}^{\texttt{-}1}(\zeta)\right)\in\boldsymbol{X}\\
1-r^{(\zeta)}, & \!\!\!\text{if }\zeta\notin\mathcal{\sigma}(\mathcal{L}\left(\boldsymbol{X}\right))
\end{cases}.
\]
The cardinality distribution of an LMB is given by (\ref{eq:Multi-Bernoulli-card}).

An LMB is parameterized by the existence probability $r_{\boldsymbol{\Sigma}}(\ell)\triangleq\mathbf{1}_{\mathcal{D}(\sigma)}(\ell)r^{(\sigma(\ell))}$,
and attribute density $p_{\boldsymbol{\Sigma}}(\cdot,\ell)\triangleq p^{(\sigma(\ell))}$
of each $\ell\in\mathcal{D}(\sigma)$. Indeed, its PGFl and multi-object
density can be expressed as follows \cite[p. 456]{MahlerBook14},
\cite{VoConj13}
\begin{align}
\boldsymbol{G}_{\boldsymbol{\Sigma}}[\boldsymbol{h}] & =\prod\limits_{\ell\in\mathcal{D}(\sigma)}\negthinspace\negthinspace\negthinspace\negthinspace\left(1-r_{\boldsymbol{\Sigma}}(\ell)+r_{\boldsymbol{\Sigma}}(\ell)\langle p_{\boldsymbol{\Sigma}}(\cdot,\ell),\boldsymbol{h}(\cdot,\ell)\rangle\right)\negthinspace,\label{eq:labeled_MB-1-1-1}\\
\boldsymbol{\pi_{\boldsymbol{\Sigma}}}(\boldsymbol{X}) & =\Delta(\boldsymbol{X})\left[1-r_{\boldsymbol{\Sigma}}\right]^{\mathcal{D}(\sigma)\texttt{-}\mathcal{L}\left(\boldsymbol{X}\right)}r_{\boldsymbol{\Sigma}}^{\mathcal{L}\left(\boldsymbol{X}\right)}p_{\boldsymbol{\Sigma}}^{\boldsymbol{X}}.\label{eq:LMB-density-1}
\end{align}
Note that $\sum_{L\subseteq\mathbb{L}}\left[1-r_{\boldsymbol{\Sigma}}\right]^{\mathcal{D}(\sigma)\texttt{-}L}r_{\boldsymbol{\Sigma}}^{L}=1$
\cite[p. 454]{MahlerBook14}, and hence $\boldsymbol{\pi_{\boldsymbol{\Sigma}}}$
integrates to 1. For brevity, we write the LMB multi-object density
$\boldsymbol{\pi_{\boldsymbol{\Sigma}}}$ as $\{(r_{\boldsymbol{\Sigma}}(\ell),p_{\boldsymbol{\Sigma}}(\cdot,\ell))\}_{\ell\in\mathcal{D}(\sigma)}$.
An LMB permits elements with the same attributes, and unmarking it
only yields a multi-Bernoulli if the attributes are distinct. 

Interestingly, much like the Poissons and Bernoullis' (both unlabeled
and labeled), the PHD of an LMB \cite[p. 457]{MahlerBook14}
\[
\boldsymbol{v}_{\boldsymbol{\Sigma}}(x,\ell)=r_{\boldsymbol{\Sigma}}(\ell)p_{\boldsymbol{\Sigma}}(x,\ell),
\]
also provides a complete characterization, since $r_{\boldsymbol{\Sigma}}(\ell)$
and $p_{\boldsymbol{\Sigma}}(\cdot,\ell)$ can be recovered from the
PHD via (\ref{eq:existence-prob}). Indeed, assuming $r_{\boldsymbol{\Sigma}}(\ell)\in[0,1)$
and noting that $\left[1-r_{\boldsymbol{\Sigma}}\right]^{\mathcal{D}(\sigma)}=\left[1-r_{\boldsymbol{\Sigma}}\right]^{\mathbb{L}}$,
$\boldsymbol{\pi_{\boldsymbol{\Sigma}}}$ can be written as a multi-object
exponential
\begin{equation}
\boldsymbol{\pi_{\boldsymbol{\Sigma}}}(\boldsymbol{X})=\Delta(\boldsymbol{X})\left[1-r_{\boldsymbol{\Sigma}}\right]^{\mathbb{L}}\left[\frac{\boldsymbol{v}_{\boldsymbol{\Sigma}}}{1-r_{\boldsymbol{\Sigma}}}\right]^{\boldsymbol{X}},\label{eq:LMB-multi-object-exp}
\end{equation}
which has the same form (neglecting the distinct label indicator)
as the Poisson $e^{\texttt{-}\langle v_{\Sigma},1\rangle}v_{\Sigma}^{X}$.
Consequently, the LMB shares many analytical properties with the
Poisson, and in this sense, is more ``Poisson'' than the labeled
Poisson. However, unlike the Poisson, the LMB cardinality variance
(which cannot exceed the mean) can be controlled by the existence
probabilities $r_{\boldsymbol{\Sigma}}(\ell),\ell\in\mathcal{D}(\sigma)$,
making it more versatile. 

Noting that the joint existence probability $w_{\boldsymbol{\Sigma}}(\cdot)$
of an LMB is given by $\left[1-r_{\boldsymbol{\Sigma}}\right]^{\mathcal{D}(\sigma)\texttt{-}L}r_{\boldsymbol{\Sigma}}^{L}$
in (\ref{eq:LMB-density-1}), the LMB admits an analytic labeled-MaM
estimate, with the mode $L^{*}$ of $w_{\boldsymbol{\Sigma}}(\cdot)$
as the most probable label set, and the modes of $p_{\boldsymbol{\Sigma}}(\cdot,\ell),\ell\in L^{*}$
as the most probable attributes (provided all the relevant modes are
available). For an arbitrary LRFS, a PHD-based sub-optimal label-MaM
estimate can be obtained by applying the label-MaM estimator to an
approximate LMB constructed from its PHD. 

\subsection{Generalized Labeled Multi-Bernoulli\protect\label{subsec:GLMB}}

The labeled i.i.d. cluster and LMB both have tractable multi-object
densities of the form $\Delta(\boldsymbol{X})w(\mathcal{L}\left(\boldsymbol{X}\right))p^{\boldsymbol{X}}$.
 Extending this form to a mixture accommodates a larger class of
LRFS that provides trade-offs between tractability and versatility.
A\textit{ Generalized Labeled Multi-Bernoulli }(GLMB) $\boldsymbol{\Sigma}$
is an LRFS distributed according to such a mixture \cite{VoConj11,VoConj13},
i.e.,
\begin{equation}
\boldsymbol{\pi_{\boldsymbol{\Sigma}}}\left(\boldsymbol{X}\right)=\Delta(\boldsymbol{X})\sum_{\xi\in\Xi}w^{\left(\xi\right)}(\mathcal{L}\left(\boldsymbol{X}\right))\left[p^{\left(\xi\right)}\right]^{\boldsymbol{X}},\label{eq:glmb}
\end{equation}
where $\Xi$ is a discrete set, each $p^{(\xi)}(\cdot,\ell)$ is a
(probability) density on $\mathbb{X}$, and each $w^{(\xi)}(L)$ is
a non-negative weight that satisfies $\sum_{\xi\in\Xi}\sum_{L\subseteq\mathbb{L}}w^{\left(\xi\right)}\!\left(L\right)=1$.
Intuitively, $w^{(\xi)}(L)$ can be interpreted as the probability
of \textit{hypothesis} $\left(\xi,L\right)$ representing the `event'
$\xi$ and the joint existence of the labels in $L$. Conditional
on hypothesis $\left(\xi,L\right)$, $p^{(\xi)}(\cdot,\ell)$ is the
attribute (probability) density of object $\ell\in L$. 
\begin{rem}
Each term of the GLMB density (\ref{eq:glmb}) is rather general,
covering a board class of LRFSs including labeled i.i.d. clusters,
LMBs, and their disjoint union (provided their label sets are disjoint).
Thus, the unmarked version of a GLMB is a general class of non-simple
point processes that includes the Poisson Multi-Bernoulli Mixture
\cite{GWGS2018Poisson} as a special case. 
\end{rem}
The PGFl of a GLMB takes the form \cite[p. 460]{MahlerBook14}
\begin{eqnarray*}
\boldsymbol{G}_{\boldsymbol{\Sigma}}[\boldsymbol{h}] & = & \sum_{\xi\in\Xi}\sum_{L\subseteq\mathbb{L}}w^{\left(\xi\right)}\left(L\right)\prod\limits_{\ell\in L}\langle p^{(\xi)}(\cdot,\ell),\boldsymbol{h}(\cdot,\ell)\rangle.
\end{eqnarray*}
Note that the closed-form void probability functional in \cite{BVVA2017Void}
can be obtained by substituting $1-\mathbf{1}_{\boldsymbol{S}}$ into
the PGFl. Further, the cardinality and PHD are given by \cite{VoConj13}
\begin{align*}
\rho{}_{\boldsymbol{\Sigma}}(n) & =\sum_{\xi\in\Xi}\sum_{L\subseteq\mathbb{L}}\delta_{n}[|L|]w^{\left(\xi\right)}\left(L\right),\\
\boldsymbol{v}_{\boldsymbol{\Sigma}}(x,\ell) & =\sum_{\xi\in\Xi}p^{(\xi)}(x,\ell)\sum_{L\subseteq\mathbb{L}}\mathbf{1}_{L}(\ell)w^{\left(\xi\right)}\left(L\right).
\end{align*}
Hence, the existence probability and attribute density of object $\ell$
are \cite{VoVoP14}, \cite{Reuter2014}
\begin{align}
r_{\boldsymbol{\Sigma}}\left(\ell\right) & =\sum_{\xi\in\Xi}\sum_{L\subseteq\mathbb{L}}\mathbf{1}_{L}(\ell)w^{\left(\xi\right)}\left(L\right),\label{eq:existence-prob-GLMB}\\
p_{\boldsymbol{\Sigma}}\left(x,\ell\right) & =\frac{1}{r_{\boldsymbol{\Sigma}}\left(\ell\right)}\sum_{\xi\in\Xi}p^{\left(\xi\right)}\left(x,\ell\right)\sum_{L\subseteq\mathbb{L}}\mathbf{1}_{L}(\ell)w^{\left(\xi\right)}\left(L\right).\label{eq:track-density-GLMB}
\end{align}

For the label-MaM estimate, the most probable label set of a GLMB
is the mode $L^{*}$ of the joint existence probability
\begin{eqnarray}
w_{\boldsymbol{\Sigma}}\left(L\right) & = & \sum_{\xi\in\Xi}w^{\left(\xi\right)}\left(L\right),\label{eq:GLMB-JointexistenceProb}
\end{eqnarray}
(note the distinction between the probability that $L$ exists regardless
of other labels, $\sum_{\xi\in\Xi}\sum_{I\supseteq L}w^{(\xi)}(I)$).
However, unlike the LMB, finding the mode of the label-conditioned
joint attribute density is not tractable in general. A sub-optimal
strategy is to find the mode/mean of the attribute density $p_{\boldsymbol{\Sigma}}\left(\cdot,\ell\right)$,
for each $\ell\in L^{*}$. Additionally, the structure of the GLMB
suggests a tractable and intuitive sub-optimal version of the MaM
estimate, especially when each $\xi$ corresponds to an actual probability
event. Based on the most probable cardinality $n^{*}$, instead of
finding the $n^{*}$ most probable states, the \textit{GLMB estimator}
seeks the most probable hypothesis $\left(\xi^{*},L^{*}\right)$ with
$|L^{*}|=n^{*}$, and estimates the attribute of each $\ell\in L^{*}$
from $p^{\left(\xi^{*}\right)}\left(\cdot,\ell\right)$, via the mode
(or mean). 

In numerical implementations, it is more convenient to write the
GLMB density in $\delta$\textit{-GLMB} form 
\[
\boldsymbol{\pi_{\boldsymbol{\Sigma}}}\left(\boldsymbol{X}\right)=\Delta(\boldsymbol{X})\sum\limits_{(\xi,I)\in\Xi\times\mathcal{F}(\mathbb{L})}w^{(\xi)}(I)\delta_{I}[\mathcal{L}(\boldsymbol{X})]\left[p^{(\xi)}\right]^{\boldsymbol{X}},
\]
using the identity $w^{\left(\xi\right)}(L)=\sum_{I\subseteq\mathbb{L}}w^{(\xi)}(I)\delta_{I}[L]$.
For brevity, we denote a GLMB density $\boldsymbol{\pi_{\boldsymbol{\Sigma}}}$
by the set 
\begin{eqnarray}
\{(w^{\left(\xi\right)}(I),p^{\left(\xi\right)})\!:\!(\xi,I)\in\Xi\times\mathcal{F}(\mathbb{L})\}\label{eq:GLMB-component}
\end{eqnarray}
of its basic\textit{ components}, corresponding to set-functions that
cannot be decomposed as sums of simpler terms. Note that truncating
arbitrary basic components still leaves the resulting $\delta$-GLMB
a valid set-function, and does not suffer from the\textit{ non-symmetric
problem} discussed in Subsection \ref{subsec:Tricky}. 
\begin{rem}
\label{rm6:important_feature}An important feature of the GLMB family
is its closure under the Bayes filtering and posterior recursions
\cite{VoConj13} (see Subsections \ref{ss:GLMB-filter} and \ref{ss:Multi-object_Smoother}).
Moreover, it is furnished with convenient mathematical properties
that facilitate principled approximations (see Subsection \ref{subsec:LRFS-approx}).
\end{rem}

\subsection{Information Divergence\protect\label{subsec:Info-Div}}

Analogous to the classical SSM, information divergences measuring
statistical similarities/dissimilarities between RFSs are fundamental
in multi-object SSMs. Well-known divergences such as $f$-divergences
(or Csisz\'{a}r-Morimoto, Ali-Silvey) have been extended to RFS \cite[pp. 153-160]{MahlerBook14}.
This subsection presents a number of divergences that admit tractable
analytic expressions for certain LRFS models, including linear complexity
divergences for LMBs, which have not been previously published. 

\subsubsection{R\'{e}nyi}

The \textit{R\'{e}nyi divergence} (\textit{R\'{e}nyi-D}) between the multi-object
densities $\pi_{1}$ and $\pi_{2}$ is given by \cite[p. 156]{MahlerBook14}
\begin{equation}
D_{R}\!\left(\pi_{1}||\pi_{2}\right)=\frac{1}{\alpha-1}\ln\int\pi_{1}^{\alpha}\left(X\right)\pi_{2}^{1\texttt{-}\alpha}\left(X\right)\delta X.\label{e:CS_Divergence-1}
\end{equation}
When $\alpha=0.5$, $D_{R}$ reduces to the \textit{Bhattacharyya}\textit{
distance}, and is related to the squared \textit{Hellinger distance}.

While the R\'{e}nyi-D is intractable in general, it admits closed-forms
for (unlabeled) Poissons and LMBs. Indeed, for two Poissons with PHDs
$v_{i}$, $i=1,2$, i.e., $\pi_{i}(X)=K_{i}v_{i}^{X}$, $K_{i}\triangleq e^{\texttt{-}\left\langle v_{i},1\right\rangle }$,
the R\'{e}nyi-D is given by \cite[p. 158]{RVC2011Note,MahlerBook14}
\[
D_{R}\!\left(\pi_{1}||\pi_{2}\right)=\frac{\left\langle v_{1}^{\alpha},v_{2}^{1\texttt{-}\alpha}\right\rangle \texttt{+}\ln\left(K_{1}^{\alpha}K_{2}^{1\texttt{-}\alpha}\right)}{\alpha\texttt{-}1}.
\]
For each LMB parameterized by $\{(r_{i}(\ell),p_{i}(\cdot,\ell))\!:\!\ell\in\mathcal{D}(\sigma_{i})\}$,
$i=1,2$, let $\tilde{r}_{i}\triangleq1-r_{i}$, and $\boldsymbol{f}_{\!i}\triangleq r_{i}p_{i}/\tilde{r}_{i}$,
so that its multi-object density has the form\footnote{We implicitly assume $r_{i}(\ell)\in[0,1)$ to illustrate the similarities
between LMBs and Poissons. To include $r_{i}(\ell)=1$, we need to
use the form (\ref{eq:LMB-density-1}). } $\boldsymbol{\pi}_{i}(\boldsymbol{X})=\Delta(\boldsymbol{X})\tilde{r}_{i}^{\mathbb{L}}\boldsymbol{f}_{\!i}^{\boldsymbol{X}}$.
It can be shown that (see Supplementary Materials for proof and
alternative forms)
\begin{equation}
\!\!\!\!D_{R}\!\left(\boldsymbol{\pi}_{1\!}||\boldsymbol{\pi}_{2}\right)\!=\!\sum_{\ell\in\mathbb{L}}\!\cfrac[l]{\!\ln\!\left[1\texttt{+}\!\left\langle \boldsymbol{f}_{\!1}^{\alpha}\!\boldsymbol{f}_{\!2}^{1\texttt{-}\alpha}\right\rangle \!(\ell)\right]\!\texttt{+}\!\ln\!\left[\left(\tilde{r}_{1}^{\mathbb{\alpha}}\tilde{r}_{2}^{1\texttt{-}\alpha}\right)\!\!(\ell)\right]}{\alpha\texttt{-}1}.\!\!\label{eq:Renyi-LMB}
\end{equation}

The above expression reduces a series of high-dimensional integrals
to a series of integrals on the attribute space $\mathbb{X}$. The
sum over $\mathbb{L}$ is, in fact, only a sum over the union $\mathcal{D}(\sigma_{1})\cup\mathcal{D}(\sigma_{2})\subset\mathbb{L}$.
Outside this union, the existence probabilities $r_{1}(\ell)$ and
$r_{2}(\ell)$ are zero, which means $\left(\tilde{r}_{1}^{\mathbb{\alpha}}\tilde{r}_{2}^{1\texttt{-}\alpha}\right)\!(\ell)=1$,
and has no contribution to $D_{R}\!\left(\boldsymbol{\pi}_{1}||\boldsymbol{\pi}_{2}\right)$.
Moreover, $\left\langle \boldsymbol{f}_{\!1}^{\alpha}\boldsymbol{f}_{\!2}^{1\texttt{-}\alpha}\right\rangle (\ell)$
only contributes to $D_{R}\!\left(\boldsymbol{\pi}_{1}||\boldsymbol{\pi}_{2}\right)$
on the intersection $\mathcal{D}(\sigma_{1})\cap\mathcal{D}(\sigma_{2})$,
where both $r_{1}(\ell)$ and $r_{2}(\ell)$ are non-zero.

\subsubsection{Kullback-Leibler}

The limiting case of the R\'{e}nyi-D when $\alpha$ tends to 1 is the
\textit{Kullback-Leibler divergence} (\textit{KL-D}), given by \cite[p. 155]{MahlerBook14}
(with $0\ln0=0$ by convention)
\begin{equation}
D_{KL}\!\left(\pi_{1}||\pi_{2}\right)=\int\pi_{1}\left(X\right)\ln\frac{\pi_{1}\left(X\right)}{\pi_{2}\left(X\right)}\delta X.\label{e:CS_Divergence-1-1}
\end{equation}

Similar to the R\'{e}nyi-D, the KL-D is computationally intractable in
general, but admits closed-forms for Poissons and LMBs. Indeed, the
KL-D for Poissons \cite[p. 157]{MahlerBook14}
\[
D_{KL}\!\left(\pi_{1}||\pi_{2}\right)=\ln\frac{K_{1}}{K_{2}}+\left\langle v_{1},\ln\frac{v_{1}}{v_{2}}\right\rangle ,
\]
resembles that for LMBs \cite{SZDJL2022Consensus}
\begin{equation}
{\color{black}{\color{black}D_{KL}\!\left(\boldsymbol{\pi}_{1}||\boldsymbol{\pi}_{2}\right)=\sum_{\ell\in\mathbb{L}}}\mathinner{\color{black}\left[\ln\!\frac{\tilde{r}_{1}(\ell)}{\tilde{r}_{2}(\ell)}+{\color{red}{\color{black}\tilde{r}_{1}(\ell}\mathclose{\color{black})}}\!\left\langle \boldsymbol{f}_{\!1}\!\ln\!\frac{\boldsymbol{f}_{\!1}}{\boldsymbol{f}_{\!2}}\right\rangle \!(\ell)\right]}}\!,\label{eq:KL-LMB}
\end{equation}
(see Supplementary Materials for proof and alternative forms). Note
that the differential entropy for LMBs derived in \cite{NVVRR2022Multi}
follows from the above expression by setting $\boldsymbol{\pi}_{2}(\boldsymbol{X})$
to $\Delta(\boldsymbol{X})U^{\texttt{-}|\boldsymbol{X}|}$, where
$U$ is the unit of hyper-volume. 

\subsubsection{Chi-squared}

Similar to R\'{e}nyi-D and KL-D, the \textit{Chi-squared divergence} ($\chi^{2}$-\textit{D}),
given by \cite[p. 155]{MahlerBook14}
\begin{equation}
D_{\chi^{2}}\!\left(\pi_{1}||\pi_{2}\right)=\int\frac{\pi_{1}^{2}\left(X\right)}{\pi_{2}\left(X\right)}\delta X-1,\label{e:CS_Divergence-1-1-1}
\end{equation}
belongs to the general class of $f$-divergences that extends to RFS
simply by replacing standard density/integration with FISST density/integration.
The $\chi^{2}$-D is a 2nd-order Taylor's series approximation of
the KL-D, and together with the squared Hellinger distance, provides
respectively, the upper and lower bounds for the KL-D \cite{GS2002Choosing}.

For LMBs, the $\chi^{2}$-D admits the following closed-form 
\begin{equation}
D_{\chi^{2}}\!\left(\boldsymbol{\pi}_{1}||\boldsymbol{\pi}_{2}\right)=\prod_{\ell\in\mathbb{L}}\frac{\tilde{r}_{1}^{2}(\ell)}{\tilde{r}_{2}(\ell)}\!\left[1+\left\langle \frac{\boldsymbol{f}_{1}^{2}}{\boldsymbol{f}_{2}}\right\rangle \!(\ell)\right]-1\label{eq:Chi-S-D}
\end{equation}
(see Supplementary Materials for proof and alternative forms), which
incurs similar computational complexity to those of R\'{e}nyi-D and KL-D.
Note that only $\ell\in\mathcal{D}(\sigma_{1})\cup\mathcal{D}(\sigma_{2})$
contributes to $D_{\chi^{2}}\!\left(\boldsymbol{\pi}_{1}||\boldsymbol{\pi}_{2}\right)$,
because outside this union $\tilde{r}_{1}$ and $\tilde{r}_{2}$ become
unity whilst $\boldsymbol{f}_{1}^{2}/\boldsymbol{f}_{2}$ vanishes
(using the convention $0^{2}/0=0$). Like the R\'{e}nyi-D and KL-D, the
$\chi^{2}$-D for LMBs bears some resemblance to that for Poissons
\cite[p. 157]{MahlerBook14}
\[
D_{\chi^{2}}\!\left(\pi_{1}||\pi_{2}\right)=\frac{K_{1}^{2}}{K_{2}}e^{\left\langle v_{1}^{2}/v_{2},1\right\rangle }-1.
\]

\subsubsection{Cauchy-Schwarz}

Unlike the $f$-divergences, the \textit{Cauchy-Schwarz divergence}
(\textit{CS-D}) cannot be extended to RFS by replacing standard density/integration
with FISST density/integration due to incompatibility with the unit
of measurements \cite{HVVM2015Cauchy}. Consequently, extension to
RFS is accomplished using density/integration w.r.t. the Poisson measure.
Nonetheless, using their equivalence with the FISST density/integral,
the CS-D can be expressed as \cite{HVVM2015Cauchy}
\begin{equation}
\!\!\!\!D_{CS}\!\left(\pi_{1},\pi_{2}\right)=-\ln\!\frac{\int U^{\left|X\right|}\pi_{1\!}\left(X\right)\!\pi_{2\!}\left(X\!\right)\!\delta X}{\!\sqrt{\!\int\!U^{\left|X\right|}\pi_{1\!}^{2\!}\left(X\right)\!\delta\!X\!\int\!U^{\left|X\right|}\pi_{2\!}^{2\!}\left(X\right)\!\delta\!X}},\!\!\label{e:CS_Divergence-2-1}
\end{equation}
where $U$ is the unit of hyper-volume in $\mathcal{X}$ (the factor
$U^{|X|}$ ensures that all constituent integrals of the set integrals
are dimensionless). It can also be interpreted as an approximation
to the KL-D \cite{KHP2011Closed}. Note that the CS-D between the
square roots of the multi-object densities is the Bhattacharyya distance
(R\'{e}nyi-D with $\alpha=0.5$). 

Geometrically, the dissimilarity between $\pi_{1}$ and $\pi_{2}$,
according to the CS-D, is based on the angle they subtend in the space
of square integrable functions (via the Cauchy-Schwarz inequality).
Hence, the CS-D is symmetric, and invariant to the choice of hyper-volume
unit $U$. For Poissons with PHDs $v_{1}$ and $v_{2}$, the CS-D
simply reduces to \cite{HVVM2015Cauchy}
\[
D_{CS}\!\left(\pi_{1},\pi_{2}\right)=\frac{U}{2}\left\Vert v_{1}-v_{2}\right\Vert ^{2},
\]
where $\left\Vert f\right\Vert ^{2}\triangleq\left\langle f,f\right\rangle $
denotes the squared $L_{2}$-norm. This means the angle subtended
by $\pi_{1}$ and $\pi_{2}$ translates to the squared $L_{2}$-distance
between their PHDs. 

The versatility of the CS-D in multi-object system lies in the tractable
closed-form for the broader GLMB family, which includes LMBs and labeled
i.i.d. clusters. Specifically, for $\boldsymbol{\pi}_{i}=\{(w_{i}^{\left(\xi_{i}\right)\!}(I),p_{i}^{\left(\xi_{i}\right)})\!:\!(\xi_{i},I)\in\Xi_{i}\times\mathcal{F}(\mathbb{L})\}$,
$i=1,2$, the CS-D is given by \cite{BVVA2017Void}
\begin{equation}
D_{CS}\left(\boldsymbol{\pi}_{1},\boldsymbol{\pi}_{2}\right)=-\ln\frac{\left\langle \boldsymbol{\pi}_{1},\boldsymbol{\pi}_{2}\right\rangle _{U}}{\sqrt{\left\langle \boldsymbol{\pi}_{1},\boldsymbol{\pi}_{1}\right\rangle _{U}}\sqrt{\left\langle \boldsymbol{\pi}_{2},\boldsymbol{\pi}_{2}\right\rangle _{U}}},\label{e:GLMB_Cauchy-Schwarz-1}
\end{equation}
where
\begin{align*}
\left\langle \boldsymbol{\pi}_{i},\boldsymbol{\pi}_{j}\right\rangle _{U} & =\sum_{L\subseteq\mathbb{L}}\sum_{\substack{\xi_{i}\in\Xi_{i}}
}\sum_{\substack{\xi_{j}\in\Xi_{j}}
}\!w_{i}^{\left(\xi_{i}\right)}\!\left(L\right)\!w_{j}^{\left(\xi_{j}\right)}\!\left(L\right)\!\bigl\langle Up_{i}^{\left(\xi_{i}\right)}p_{j}^{\left(\xi_{j}\right)\!}\bigr\rangle^{\!L}\!\!.
\end{align*}
The above expression involves summing over all label sets $L\subseteq\mathbb{L}$,
though only those with non-zero $w_{i}^{\left(\xi_{i}\right)\!}\left(L\right)$,
$w_{j}^{\left(\xi_{j}\right)\!}\left(L\right)$ and $\bigl\langle Up_{i}^{\left(\xi_{i}\right)}p_{j}^{\left(\xi_{j}\right)\!}\bigr\rangle^{\!}(\ell)$,
for all $\ell\in L$, contribute to the sum. 

For the LMB special case, the CS-D (\ref{e:GLMB_Cauchy-Schwarz-1})
reduces to a sum of logarithms over $\ell\in\mathbb{L}$, similar
to the R\'{e}nyi-D. Specifically, it can be shown that (see Supplementary
Materials)
\begin{equation}
\!\!\!\!\!D_{CS}\!\left(\boldsymbol{\pi}_{1},\boldsymbol{\pi}_{2}\right)=-\!\sum_{\ell\in\mathbb{L}}\ln\!\frac{1\texttt{+}\left\langle U\!\boldsymbol{f}_{\!1}\boldsymbol{f}_{\!2}\right\rangle \!(\ell)}{\sqrt{1\texttt{+}\!\left\langle U\!\boldsymbol{f}_{\!1}^{2}\right\rangle \!(\ell)}\sqrt{1\texttt{+}\!\left\langle U\!\boldsymbol{f}_{\!2}^{2}\right\rangle \!(\ell)}},\!\!\label{eq:CS-D-LMB}
\end{equation}
though we only need to sum over the union $\mathcal{D}(\sigma_{1})\cup\mathcal{D}(\sigma_{2})$
of the LMB parameter domains, for the same reason as per the R\'{e}nyi.
Unlike R\'{e}nyi-D, KL-D, and $\chi^{2}$-D, the CS-D for LMBs bears little
resemblance to that for Poissons. Note that the CS-D for labeled i.i.d.
clusters is more expensive than LMBs, even though both are one-term
special cases of the GLMB. 

\subsection{Labeled RFS Approximations\protect\label{subsec:LRFS-approx}}

While approximations are indispensable for real-world applications,
what differentiates heuristics from principled engineering practice
is whether the approximation error can be characterized/quantified.
The GLMB is an analytic solution to the multi-object Bayes filter
that provides trajectory estimation, and is closed under truncation
with analytic error characterization \cite{VoVoP14}. Moreover, the
GLMBs also provide principled approximations to other LRFSs, thus,
hitting many birds with one stone. This subsection presents a number
of results for LRFS approximations based on GLMBs. 

\subsubsection{Truncation of GLMBs}

This is a crucial task in practice, where GLMBs consist of intractably
large sums. The key consideration of ensuring validity of the truncated
expression as a set-function, discussed in Subsection \ref{subsec:Tricky},
is automatically fulfilled (see Subsection \ref{subsec:GLMB}). Moreover,
the truncation error can be quantified analytically, similar to that
for Fourier series. 

For any $\mathbb{\mathbb{H}}\subseteq\Xi\times\mathcal{F}(\mathbb{L})$,
let $\boldsymbol{\pi}^{(\mathbb{\mathbb{H}})}$ denote the unnormalized
GLMB $\{(w^{\left(\xi\right)}(L),p^{\left(\xi\right)})\!:\!(\xi,L)\in\mathbb{\mathbb{H}}\}$
(i.e., does not necessarily integrate to 1). Then, the normalizing
constant is
\[
\int\boldsymbol{\pi}^{(\mathbb{\mathbb{H}})}(\boldsymbol{X})\delta\boldsymbol{X}=||\boldsymbol{\pi}^{(\mathbb{\mathbb{H}})}||_{1}=\sum\limits_{(\xi,L)\in\mathbb{H}}w^{(\xi)}(L),
\]
where $\left\Vert \boldsymbol{\mathbf{\mathit{f}}}\right\Vert _{1}\triangleq\int\left\vert \boldsymbol{\mathbf{\mathit{f}}}(\boldsymbol{X})\right\vert \delta\boldsymbol{X}$
denotes the $L_{1}$-norm of a function $\mathbf{\boldsymbol{\mathbf{\mathit{f}}}}$
on $\mathbf{\mathcal{F}(}\mathbb{X}\mathcal{\times}\mathbb{L})$.
Moreover, the $L_{1}$-error between $\boldsymbol{\pi}^{(\mathbb{\mathbb{H}})}$
and its truncated version $\boldsymbol{\pi}^{(\mathbb{T})}$ is given
by \cite{VoVoP14}
\begin{equation}
||\boldsymbol{\pi}^{(\mathbb{\mathbb{H}})}-\boldsymbol{\pi}^{(\mathbb{\mathbb{T}})}||_{1}\mathbf{=}\sum\limits_{(\xi,L)\in\mathbb{H-T}}w^{(\xi)}(L),\label{eq:L1-error}
\end{equation}
(it is implicit that $\mathbb{\mathbb{T}}\subseteq\mathbb{\mathbb{H}}$),
and for their normalized versions
\begin{equation}
\left\Vert \frac{\boldsymbol{\pi}^{(\mathbb{\mathbb{H}})}}{||\boldsymbol{\pi}^{(\mathbb{\mathbb{H}})}||_{1}}-\frac{\boldsymbol{\pi}^{(\mathbb{\mathbb{T}})}}{||\boldsymbol{\pi}^{(\mathbb{\mathbb{T}})}\mathbf{||}_{1}}\right\Vert _{1}\leq2\frac{||\boldsymbol{\pi}^{(\mathbb{\mathbb{H}})}||_{1}-||\boldsymbol{\pi}^{(\mathbb{\mathbb{T}})}\mathbf{||}_{1}}{||\boldsymbol{\pi}^{(\mathbb{\mathbb{H}})}||_{1}}.\label{eq:L1-error-bound}
\end{equation}

The above result means that the intuitive strategy of discarding basic
GLMB components with the smallest weights minimizes the $L_{1}$-error
between the actual and approximate multi-object densities. Similar
truncations are widely used in MOT algorithms such as MHT, and JPDA,
but without mathematical justifications. The LRFS approach characterizes
the effect of truncation by the $L_{1}$-error, and provides a mathematical
justification for truncating low-weighted components. 

\subsubsection{Label-Partitioned GLMB Approximation}

A `large' GLMB can be approximated, via label partitioning, as a product
of much `smaller' GLMBs that can be processed more efficiently in
parallel \cite{Beard18-largescale}. The rationale is that such approximations
incur minimal information loss when the smaller GLMBs are almost independent,
which is usually the case in practice since the objects are not uniformly
distributed across the state space, but often in separate groups.

Given a partition $\mathfrak{L}$ of the label space $\mathbb{L}$,
each $\boldsymbol{X}\in\mathfrak{\mathcal{F}}\left(\mathbb{X}\times\mathbb{L}\right)$
can be written as $\boldsymbol{X}=\biguplus_{L\in\mathfrak{L}}\boldsymbol{X}\cap\left(\mathbb{X}\times L\right)$,
and hence $\left\{ \mathfrak{\mathcal{F}}\left(\mathbb{X}\times L\right):L\in\mathfrak{L}\right\} $
also forms a partition of $\mathfrak{\mathcal{F}}\left(\mathbb{X}\times\mathbb{L}\right)$.
A labeled multi-object density on $\mathfrak{\mathcal{F}}(\mathbb{X}\times\mathbb{L})$
is said to be $\mathfrak{L}$\textit{-partitioned} if it can be written
as the product
\[
\boldsymbol{\pi}_{\mathfrak{L}}\left(\boldsymbol{X}\right)=\prod\limits_{L\in\mathfrak{L}}\boldsymbol{\pi}_{\mathfrak{L}}^{\left(L\right)}\left(\boldsymbol{X}\cap\left(\mathbb{X}\times L\right)\right),
\]
where each factor $\boldsymbol{\pi}_{\mathfrak{L}}^{\left(L\right)}$
is a labeled multi-object density on $\mathfrak{\mathcal{F}}(\mathbb{X}\times L)$
\cite{Beard18-largescale}. We denote $\boldsymbol{\pi}_{\mathfrak{L}}$,
by its factors $\{\boldsymbol{\pi}_{\mathfrak{L}}^{\left(L\right)}\}_{L\in\mathfrak{L}}$,
and if each factor $\boldsymbol{\pi}_{\mathfrak{L}}^{\left(L\right)}$
is a GLMB $\{(w_{\mathfrak{L,}L}^{\left(I,\xi\right)},p_{\mathfrak{L,}L}^{\left(\xi\right)})\}_{\left(I,\xi\right)\in\mathcal{F}(L)\times\Xi^{(L)}}$,
then $\boldsymbol{\pi}_{\mathfrak{L}}$ is said to be an $\mathfrak{L}$\textit{-partitioned}
\textit{GLMB}. 

An important numerical problem is to approximate a given $\mathfrak{L}$\textit{-}partitioned
GLMB $\boldsymbol{\pi}_{\mathfrak{L}}=\{\boldsymbol{\pi}_{\mathfrak{L}}^{\left(L\right)}\}_{L\in\mathfrak{L}}$
using another partition $\mathfrak{S}$ of $\mathbb{L}$. Indeed,
the $\mathfrak{S}$\textit{-}partitioned labeled multi-object density
$\boldsymbol{\pi}_{\mathfrak{S}}=\{\boldsymbol{\pi}_{\mathfrak{S}}^{\left(S\right)}\}_{S\in\mathfrak{S}}$
that minimizes $D_{KL}\left(\boldsymbol{\pi}_{\mathfrak{L}}||\boldsymbol{\pi}_{\mathfrak{S}}\right)$,
is an $\mathfrak{S}$\textit{-}partitioned GLMB, with GLMB factors
\cite{Beard18-largescale}
\[
\boldsymbol{\pi}_{\mathfrak{S}}^{\left(S\right)}(\boldsymbol{X}\cap(\mathbb{X}\times S))=\prod_{L\in\mathfrak{L}}\boldsymbol{\pi}_{\mathfrak{L},\mathfrak{S}}^{\left(L,S\right)}(\boldsymbol{X}\cap(\mathbb{X}\times S)),
\]
where, for each $(L,S)\in\mathfrak{L\times S}$ such that $L\cap S\neq\emptyset$
\begin{align*}
\boldsymbol{\pi}_{\mathfrak{L},\mathfrak{S}}^{\left(L,S\right)} & =\{(w_{L,S}^{\left(H,\xi\right)},p_{L,S}^{\left(\xi\right)})\}_{(H,\xi)\in\mathcal{F}(L\cap S)\times\Xi^{(L)}},\\
w_{L,S}^{\left(H,\xi\right)} & =\sum_{W\in\mathcal{F}\left(L\texttt{-}S\right)}w_{\mathfrak{L,}L}^{\left(H\cup W,\xi\right)},\\
p_{L,S}^{\left(\xi\right)}(x,\ell) & =\mathbf{1}_{L\cap S}(\ell)p_{\mathfrak{L,}L}^{\left(\xi\right)}(x,\ell).
\end{align*}
An efficient algorithm for finding the `best' partitions suitable
for large-scale multi-object estimation can be found in \cite{Beard18-largescale}. 

\subsubsection{Approximation by GLMB}

Estimation involving non-standard multi-object models invariably results
in intractable multi-object densities, see e.g., \cite{Beardetal14,PapiVoVoetal15,PapiKim-15,LiYiHoseinnezhadetal18,Bryantetal18,MahlerSuperPS18,Gostar-Interacting-19,Ong-TPAMI-20,Nguyenetal-CellSpawning21}.
Hence, it is important to make principled approximations by tractable
families such as GLMBs. Apart from statistical approximation techniques
such as moment matching and information minimization, approximating
the cardinality distribution is an important consideration in multi-object
estimation \cite{MahlerBook07}. 

An LRFS, characterized by a multi-object density $\mathbf{\boldsymbol{\pi}}$,
can be approximated by an LMB with matching 1st moment (PHD), by choosing
the parameters according to (\ref{eq:existence-prob}). Indeed, this
technique was first used in \cite{Reuter2014} to approximate GLMBs
with LMBs by selecting the parameters according to (\ref{eq:existence-prob-GLMB}),
(\ref{eq:track-density-GLMB}). However, LMBs are not versatile enough
to match the cardinality or capture the dependence between the attributes
of the multi-object state. To this end, approximation by GLMBs can
provide trade-offs between tractability and versatility.

A \textit{Marginalized-GLMB} (\textit{M-GLMB}) is a GLMB with density
of the form \cite{FVPV2015Marginalized,PapiVoVoetal15,Fantacci18}
\begin{equation}
\mathbf{\boldsymbol{\bar{\pi}}}(\boldsymbol{X})=\Delta(\boldsymbol{X})\sum_{L\subseteq\mathbb{L}}\bar{w}^{(L)}\delta_{L}[\mathcal{L}(\boldsymbol{X})]\left[\bar{p}^{(L)}\right]^{\boldsymbol{X}}.\label{eq:M-GLMB}
\end{equation}
Using a smaller number of components, an M-GLMB can approximate a
GLMB $\{(w^{(\xi)}(I),p^{(\xi)}):(\xi,I)\in\Xi\times\mathcal{F}(\mathbb{L})\}$
with matching PHD and cardinality, by choosing \cite{FVPV2015Marginalized}
\begin{align*}
\bar{w}^{(L)} & =\sum\limits_{\xi\in\Xi}w^{(\xi)}(L),\\
\bar{p}^{(L)}(\boldsymbol{x}) & =\frac{\mathbf{1}_{L}(\mathcal{L}(\boldsymbol{x}))}{\bar{w}^{(L)}}\sum\limits_{\xi\in\Xi}w^{(\xi)}(L)p^{(\xi)}(\boldsymbol{x}).
\end{align*}

For a labeled multi-object density $\mathbf{\boldsymbol{\pi}}$, the
M-GLMB that matches the PHD and cardinality whilst minimizing the
KL-D from $\mathbf{\boldsymbol{\pi}}$ has components given by \cite{PapiVoVoetal15}
\begin{align}
\bar{w}^{(L)} & =\left\langle \boldsymbol{\pi}\right\rangle \!(L),\label{eq:MarginalizeGeneral-1}\\
\bar{p}^{(L)}(\boldsymbol{x}) & =\frac{\mathbf{1}_{L}(\mathcal{L}(\boldsymbol{x}))}{\left\langle \boldsymbol{\pi}\right\rangle \!(L)}\left\langle \boldsymbol{\pi}(\{\boldsymbol{x}\}\uplus\cdot)\right\rangle \!(L-\{\mathcal{L}(\boldsymbol{x})\}).\label{eq:MarginalizeGeneral-2}
\end{align}
Thus, to match the PHD and cardinality of a labeled multi-object density
$\mathbf{\boldsymbol{\pi}}$ by an M-GLMB, we set each weight $\bar{w}^{(L)}$
to the joint existence probability $\left\langle \boldsymbol{\pi}\right\rangle \!(L)$,
and the $\bar{p}^{(\{\ell_{1:n}\})}(\cdot,\ell_{i})$'s to the marginals
of the label-conditioned joint attribute densities $\boldsymbol{\pi}(\{(\cdot,\ell_{1}),...,(\cdot,\ell_{n})\})/\!\left\langle \boldsymbol{\pi}\right\rangle \!(\{\ell_{1:n}\})$.
This approximation minimizes the KL-D in a similar way to the approximation
of a joint density by the product of its marginals \cite{C2003Dependence}.

In certain applications, e.g., \cite{Beardetal14,Bryantetal18,Gostar-Interacting-19,Ong-TPAMI-20,Nguyenetal-CellSpawning21},
the labeled multi-object density of interest takes on a multi-modal
form similar to a GLMB:
\begin{equation}
\boldsymbol{\pi}(\boldsymbol{X})=\Delta(\boldsymbol{X})\sum\limits_{\xi\in\Xi}w^{(\xi)}(\mathcal{L}(\boldsymbol{X}))p^{(\xi)}(\boldsymbol{X}),\label{eq:PropConj0-1-1}
\end{equation}
where each $\xi$ represents a mode, $\sum_{\xi\in\Xi}\sum_{L\subseteq\mathbb{L}}w^{\left(\xi\right)}\!\left(L\right)=1$,
and $\left\langle p^{(\xi)}\right\rangle \!(I)=1$. The M-GLMB approximation
cannot capture the modes and the associated information. Nonetheless,
the cardinality and PHD matching strategy can be applied to each of
the modes \cite{PapiVoVoetal15}. Specifically, a GLMB that matches
the PHD and cardinality whilst preserving the modes of (\ref{eq:PropConj0-1-1})
is given by $\boldsymbol{\hat{\pi}}=\{(\hat{w}^{(\xi,I)}\!,\hat{p}^{(\xi,I)})\!:\!(\xi,I)\in\Xi\!\times\!\mathcal{F}(\mathbb{L})\}$,
where 
\begin{flalign}
\hat{w}^{(\xi,I)} & =w^{(\xi)}(I),\label{eq:MarginalizeGeneral-1-1-1}\\
\hat{p}^{(\xi,I)}(\boldsymbol{x}) & =\mathbf{1}_{I}(\mathcal{L}(\boldsymbol{x}))\left\langle p^{(\xi)}(\{\boldsymbol{x}\}\uplus\cdot)\right\rangle \!(I-\{\mathcal{L}(\boldsymbol{x})\}).\label{eq:MarginalizeGeneral-2-1-1}
\end{flalign}
While this approximation requires many more components than the M-GLMB
approximation, intuitively, it incurs less information loss, by retaining
the information contained in the modes. However, there are no formal
results on the KL-D. 

\subsection{Spatio-Temporal Modeling and Multi-Scan GLMB\protect\label{subsec:Multi-scan-GLMB}}

\begin{figure}[t]
\begin{centering}
\resizebox{88mm}{!}{\includegraphics[clip]{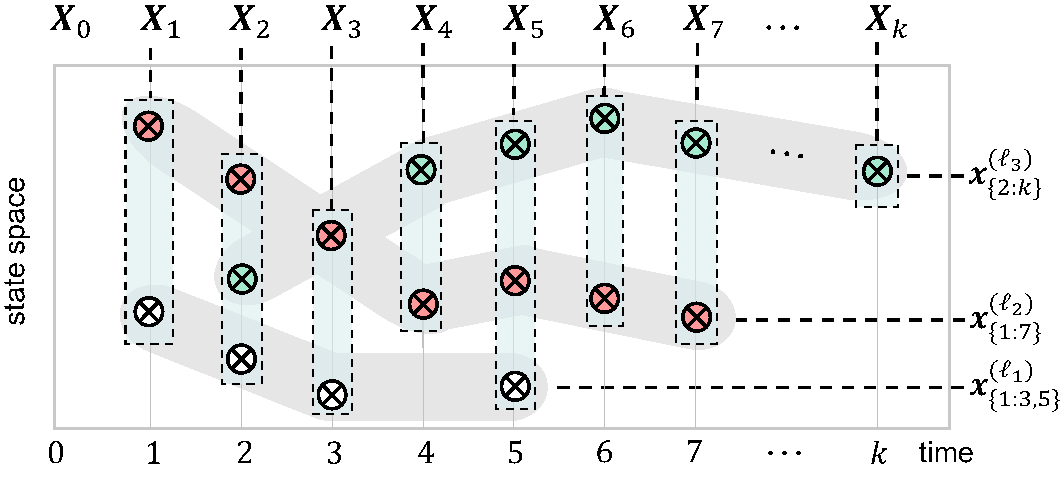}}
\par\end{centering}
\caption{A multi-object trajectory $\boldsymbol{X}_{0:k}$ on the interval
$\{0\!:\!k\}$. Individual trajectories are determined by grouping
the states of $\boldsymbol{X}_{0:k}$ according to labels (depicted
by the different colors). This representation covers trajectory fragmentation
(e.g., at time 4) and crossing (e.g., at time 3).}
\label{fig:multi-scan-model}
\end{figure}

So far, we have only discussed modeling of the multi-object state
via LRFS. This subsection extends the discussion to multi-object trajectory
modeling. In particular, we present an extension of the GLMB, known
as the multi-scan GLMB, as a tractable LRFS model of the multi-object
trajectory.

Recall that the multi-object trajectory on an interval $\{j{\displaystyle :}k\}$
is the sequence $\boldsymbol{X}_{j:k}$ of labeled multi-object states,
and that trajectories in $\boldsymbol{X}_{j:k}$ are determined by
grouping the states according to labels, see Fig. \ref{fig:multi-scan-model}.
More concisely, the trajectory of each (object with label) $\ell\in\mathcal{L}(\boldsymbol{X}_{j:k})$\footnote{Strictly speaking $\mathcal{L}(\boldsymbol{X}_{\!j:k})\!=\!(\mathcal{L}(\boldsymbol{X}_{\!j}),...,\mathcal{L}(\boldsymbol{X}_{\!k}))$
and we should write $\ell\in{\textstyle \cup_{i=j}^{k}}\mathcal{L}(\boldsymbol{X}_{i})$.
Nonetheless, the notation $\ell\in\mathcal{L}(\boldsymbol{X}_{j:k})$
is more compact, and also suggestive (that $\ell$ belongs to any
of the sets $\mathcal{L}(\boldsymbol{X}_{j})$, ..., $\mathcal{L}(\boldsymbol{X}_{k})$).} is the time-stamped sequence $\boldsymbol{x}_{T(\ell)}^{(\ell)}=[(x_{i},\ell)\in\boldsymbol{X}_{i}]_{i\in T(\ell)}$,
where $T(\ell)$ is the set of instants in $\{j\!:\!k\}$ such that
$\ell$ exists. This sequence, defines the mapping $\tau_{\ell}:i\mapsto x_{i}$,
$i\in T(\ell)$. It is clear that $\boldsymbol{X}_{j:k}$ can be reconstructed
from all the trajectories in $\boldsymbol{X}_{j:k}$, and hence 
is equivalently represented as the set of labeled trajectories 
\begin{equation}
\boldsymbol{X}_{j:k}\equiv\left\{ \boldsymbol{x}_{T(\ell)}^{(\ell)}:\ell\in\mathcal{L}(\boldsymbol{X}_{j:k})\right\} .\label{eq:multi_obj_state_seq-1}
\end{equation}

Since the multi-object trajectory on $\{j\textrm{:}k\}$ is represented
by a sequence of labeled multi-object states, it is naturally modeled
as a sequence $\boldsymbol{\Sigma}_{j:k}$ of LRFSs described by
a joint LRFS density $\mathbf{\mathbf{\boldsymbol{\pi}}}_{\boldsymbol{\Sigma}_{j:k}}$.
Beyond multi-object trajectory modeling, statistical characterization
of variables/parameters pertaining to the underlying multi-object
trajectory ensemble can be computed from the joint LRFS density \cite{VoVomultiscan18}.
Suppose that $\boldsymbol{f}(\boldsymbol{X}_{j:k})$ is the statistic
of a multi-object trajectory $\boldsymbol{X}_{j:k}$, then the ensemble
statistic is the expectation
\[
\mathbb{E}_{\boldsymbol{\Sigma}_{j:k}}\left[\boldsymbol{f}\right]=\int...\int\boldsymbol{f}(\boldsymbol{X}_{j:k})\mathbf{\mathbf{\boldsymbol{\pi}}}_{\boldsymbol{\Sigma}_{j:k}}(\boldsymbol{X}_{j:k})\delta\boldsymbol{X}_{j}...\delta\boldsymbol{X}_{k}.
\]
Fig. \ref{fig:ms-exp} shows some examples of multi-object trajectory
statistics. More examples can be found in \cite{VoVomultiscan18}.

\subsubsection{Multi-Scan Multi-Object Exponential}

The basic building block for a multi-scan GLMB is a multi-scan version
of the multi-object exponential. Noting that $h^{\boldsymbol{X}}$
is the product of the values of the function $h$ at every state in
$\boldsymbol{X}$, a natural multi-scan extension to $h^{\boldsymbol{X}_{j:k}}$
is the product of the values of $h$ at every trajectory in $\boldsymbol{X}_{j:k}$.
More concisely, the \textit{multi-scan multi-object exponential} is
defined as \cite{VoVomultiscan18}
\begin{eqnarray}
 & \!\!\!h^{\boldsymbol{X}_{j:k}}\triangleq h^{\{\boldsymbol{x}_{T(\ell)}^{(\ell)}:\ell\in\mathcal{L}(\boldsymbol{X}_{j:k})\}}={\displaystyle {\textstyle \prod}_{\ell\in\mathcal{L}(\boldsymbol{X}_{j:k})}}h(\boldsymbol{x}_{T(\ell)}^{(\ell)}),\label{eq:multi-scan-exp}
\end{eqnarray}
for any real function $h$ on ${\scriptstyle {\displaystyle {\textstyle \biguplus}}}_{I\subseteq\{j:k\}}\mathbb{T}_{I}$,
where 
\begin{eqnarray*}
\mathbb{T}_{\{i_{1},i_{2},...,i_{n}\}}\triangleq{\textstyle \cprod}{}_{j=1}^{n}\big(\mathbb{X}\times\mathbb{L}_{i_{j}}\big), & {\displaystyle \!\!\!{\textstyle \cprod}{}_{i=j}^{k}}S_{i}\triangleq S_{j\!}\times\!...\!\times\!S_{k}, & {\textstyle }
\end{eqnarray*}
for $i_{1}\!<\!i_{2}\!<\!...\!<\!i_{n}\!\in\!\{j{\textstyle :}k\}$.
If $\boldsymbol{x}_{T(\ell)}^{(\ell)}=[(x_{i},\ell)]_{i\in T(\ell)}$,
we write $h(\boldsymbol{x}_{T(\ell)}^{(\ell)})$ as $h(x_{T(\ell)};\ell)$,
and if the trajectory is unfragmented, i.e., $T(\ell)=\{s(\ell)\!:\!t(\ell)\}$,
where $s(\ell)\triangleq\min T(\ell)$, $t(\ell)\triangleq\max T(\ell)$,
we write $h(\boldsymbol{x}_{T(\ell)}^{(\ell)})$ as $h(x_{s(\ell):t(\ell)};\ell)$.

The multi-scan multi-object exponential (\ref{eq:multi-scan-exp})
satisfies the exponential-like property: 
\begin{flalign*}
\left[gh\right]^{\boldsymbol{X}_{j:k}}= & \left[g\right]^{\boldsymbol{X}_{j:k}}\left[h\right]^{\boldsymbol{X}_{j:k}},
\end{flalign*}
where $g$ is another function on ${\scriptstyle \biguplus}_{I\subseteq\{j:k\}}\mathbb{T}_{I}$,
see \cite{VoVomultiscan18} for additional properties. When $j=k$,
(\ref{eq:multi-scan-exp}) reduces to the single-scan multi-object
exponential $h^{\boldsymbol{X}_{j}}$.  
\begin{figure}
\begin{centering}
\includegraphics[width=1\columnwidth]{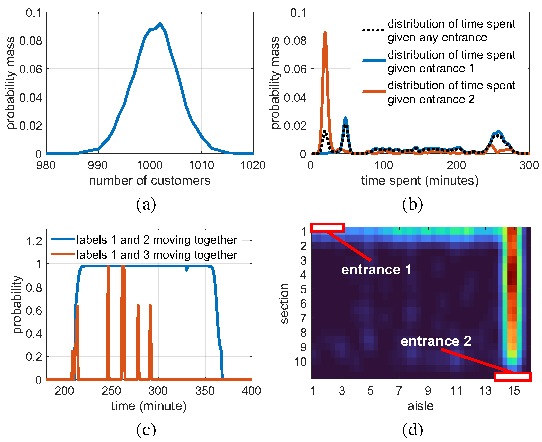}
\par\end{centering}
\caption{Multi-object trajectory statistics in a retail store with 15 aisles
and 2 entrances. Computed from a hypothetical joint LRFS density of
a 1-day scenario: (a) distribution of customer numbers in the store;
(b) distribution of the time customers spend in the store given the
entrance they use; (c) probability that certain customers move together
(within 2m of each other) over time; (d) location intensity of customers
who spend less than 60 minutes in the store.\protect\label{fig:ms-exp}\textcolor{teal}{}}
\end{figure}

\subsubsection{Multi-Scan GLMB}

This model was proposed in \cite{VoVomultiscan18} for smoothing under
the standard multi-object system model that only permits unfragmented
trajectories. A \emph{multi-scan GLMB} on the interval $\{j\textrm{:}k\}$
is a sequence $\boldsymbol{\Sigma}_{j:k}$ of LRFSs described by a
joint multi-object density on $\cprod{}_{i=j}^{k}\mathcal{F}(\mathbb{X\times L}_{i})$
of the form
\begin{equation}
\mathbf{\mathbf{\boldsymbol{\pi}}}_{\boldsymbol{\Sigma}_{j:k}}(\boldsymbol{X}_{j:k})=\Delta(\boldsymbol{X}_{j:k})\sum_{\xi\in\Xi}w^{(\xi)}(\mathcal{L}(\boldsymbol{X}_{j:k}))[p^{(\xi)}]^{\boldsymbol{X}_{j:k}},\label{eq:MSGLMB}
\end{equation}
where: $\Delta(\boldsymbol{X}_{\!j:k})\!\triangleq\!\prod_{i=j\!}^{k}\Delta(\boldsymbol{X}_{\!i})$;
$w^{(\xi)}(I_{j:k})$ is non-negative such that $\sum_{\xi,I_{j:k}}w^{(\xi)}(I_{j:k})=1$
(it is understood that the sum is taken over $\xi\in\Xi$ and $I_{j:k}\in\cprod_{i=j}^{k}\mathcal{F}(\mathbb{L}_{i})$);
$\mathcal{L}(\boldsymbol{X}_{\!j:k})\!=\!(\mathcal{L}(\boldsymbol{X}_{\!j}),...,\mathcal{L}(\boldsymbol{X}_{\!k}))$;
and $p^{(\xi)}(x_{s(\ell):t(\ell)};\ell)$ is a joint density of the
attribute sequence $x_{s(\ell):t(\ell)}$, for each $\ell\in I_{j:k}$,
with $s(\ell)$, $t(\ell)$ implicitly depend on $\left(\xi,I_{j:k}\right)$.
The joint density (\ref{eq:MSGLMB}) indeed integrates to 1 \cite{VoVomultiscan18}.
Similar to the GLMB, the multi-scan GLMB (\ref{eq:MSGLMB}) can also
be written in delta form
\[
\mathbf{\mathbf{\boldsymbol{\pi}}}_{\boldsymbol{\Sigma}_{j:k\!}}(\boldsymbol{X}_{\!j:k})=\Delta(\boldsymbol{X}_{\!j:k})\!\sum_{\xi,I_{\!j:k}}\!w^{(\xi,I_{\!j:k})}\delta_{I_{\!j:k\!}}[\mathcal{L}(\boldsymbol{X}_{\!j:k})][p^{(\xi)\!}]^{\boldsymbol{X}_{\!j:k}},
\]
and denoted by $\mathbf{\mathbf{\boldsymbol{\pi}}}_{\boldsymbol{\Sigma}_{j:k}}=\{(w^{(\xi)}(I_{j:k}),p^{\left(\xi\right)}){\textstyle :}(\xi,I_{j:k})\}$,
where it is understood that $\xi\in\Xi$ and $I_{j:k}\in\cprod_{i=j}^{k}\mathcal{F}(\mathbb{L}_{i})$.

Conceptually, a multi-scan GLMB can be regarded as a GLMB with the
set of labeled states replaced by the set of labeled trajectories
(\ref{eq:multi_obj_state_seq-1}). Analogous to the GLMB, $w^{(\xi)}(I_{j:k})$
is the probability of hypothesis $\left(\xi,I_{j:k}\right)$ representing
the `event' $\xi$ and the joint existence of the trajectories according
to the sequence $I_{j:k}$ of label sets, and conditional on hypothesis
$\left(\xi,I_{j:k}\right)$, $p^{(\xi)}(\cdot,\ell)$ is the attribute
density of trajectory $\ell\in I_{j:k}$. The multi-scan GLMB is closed
under the Bayes posterior recursion as well as truncation. Indeed,
the error expressions (\ref{eq:L1-error}), (\ref{eq:L1-error-bound})
also hold for multi-scan GLMBs \cite{VoVomultiscan18}. 

Some closed-form multi-object statistics from the multi-scan GLMB
are given as follows (see \cite{VoVomultiscan18} for further details).
 
\begin{itemize}
\item Trajectory cardinality distribution over interval $\{j\textrm{:}k\}$:
\begin{equation}
\!\!\Pr\!\left(|\boldsymbol{X}_{j:k}|=n\right)=\sum_{\xi,I_{j:k}}\delta_{\left\vert {\scriptstyle {\textstyle \cup}_{i=j}^{k}}I_{i}\right\vert }\!\left[n\right]w^{(\xi)\!}(I_{j:k}).\label{eq:cardinality}
\end{equation}
\item Joint existence probability of only the labels in $L$ (and no other
labels exist) over interval $\{j\textrm{:}k\}$:
\begin{equation}
\!\!\Pr\!\left(\textrm{only }L\text{ exist}\right)=\sum_{\xi,I_{j:k}}\delta_{{\scriptstyle {\scriptstyle {\textstyle \cup}_{i=j}^{k}}}I_{i}}[L]w^{(\xi)}(I_{j:k}).\label{eq:jointexistence}
\end{equation}
\item Joint existence probability of the labels in $L$ regardless of other
labels, over interval $\{j\textrm{:}k\}$:
\begin{equation}
\!\!\Pr\!\left(L\text{ exist}\right)=\sum_{\xi,I_{j:k}}\mathbf{1}_{\mathcal{F}\left({\scriptstyle {\textstyle {\scriptstyle {\textstyle \cup}_{i=j}^{k}}}}I_{i}\right)}(L)w^{(\xi)}(I_{j:k}).\label{eq:jointexistence-2}
\end{equation}
\item Trajectory length distribution, i.e., the probability that a trajectory
has length $n$, over interval $\{j\textrm{:}k\}$:
\begin{equation}
\!\!\Lambda\!\left(n\right)=\sum_{\xi,I_{j:k}}\!\frac{\sum_{\ell\in{\scriptstyle {\textstyle \cup}_{i=j}^{k}}I_{i}}\delta_{t(\ell)\texttt{-}s(\ell)\texttt{+}1}\!\left[n\right]}{\left\vert {\textstyle \cup_{i=j}^{k}}I_{i}\right\vert }w^{(\xi)\!}(I_{j:k}).\label{eq:lengthdistribution}
\end{equation}
\item Length distribution of trajectory $\ell$, i.e., the probability that
trajectory $\ell$ has length $n$, over interval $\{j\textrm{:}k\}$:
\begin{equation}
\!\!\Lambda_{\ell\!}\left(n\right)=\sum_{\xi,I_{j:k}}\mathbf{1}_{{\scriptstyle {\textstyle \cup}_{i=j}^{k}}I_{i}\!}(\ell)\delta_{t(\ell)\texttt{-}s(\ell)\texttt{+}1}\!\left[n\right]w^{(\xi)\!}(I_{j:k}).\label{eq:lengthprob}
\end{equation}
\end{itemize}

\subsubsection{Multi-Scan GLMB Estimator}

Various GLMB estimators can be extended to multi-scan GLMB. The simplest
would be to find the most probable hypothesis $(\xi^{*},I_{j:k}^{*})$
(with highest weight $w^{(\xi^{*})}(I_{j:k}^{*})$) and compute the
most probable or expected trajectory estimate from $p^{(\xi^{*})}(\cdot;\ell)$
for each $\ell$ in $I_{j:k}^{*}$. Alternatively, instead of the
most significant, we can use the most significant among components
with the most probable trajectory cardinality $n^{\ast}$ determined
by maximizing (\ref{eq:cardinality}).

The label-MaM estimate can also be adapted for multi-scan GLMB, using
the most probable label set sequence $I_{j:k}^{*}$, determined by
maximizing $\sum_{\xi}w^{(\xi)\!}(I_{j:k})$, and computing the trajectory
density for each $\ell$ in $I_{j:k}^{*}$ 
\[
p(\cdot;\ell)\propto\sum_{\xi}w^{(\xi)}(I_{j:k}^{*})p^{(\xi)}(\cdot;\ell),
\]
from which the mode or mean trajectory can then be determined. One
variation is to choose $I_{j:k}^{*}$ as the most probable label set
sequence with trajectory cardinality $n^{\ast}$. 

Another adaptation of the label-MaM estimate is based on the most
probable set of labels $L^{\ast}$, determined by maximizing the joint
existence probability (\ref{eq:jointexistence}). However, for each
$\ell$ $\in L^{\ast}$, its trajectory length for different hypotheses
$(\xi,I_{j:k})$ (that yield the same $L^{\ast}$) may not be the
same, and hence there is no meaningful most probable trajectory (due
to different units of measurements in the probability densities of
the trajectories). Nonetheless, we can determine the most probable
length $m^{\ast}$ for each $\ell\in L^{\ast}$, by maximizing (\ref{eq:lengthprob}),
and estimate the trajectory according to
\[
p_{m^{\ast}}(\cdot;\ell)\propto\!\!\sum_{\xi,I_{j:k}}\!\!\mathbf{1}_{{\scriptstyle \bigcup_{i=j}^{k}}I_{i}\!}(\ell)w^{(\xi)}\!(I_{j:k})\delta_{t(\ell)\text{-}s(\ell)\text{+}1}\!\left[m^{\ast}\right]p^{(\xi)}(\cdot;\ell).
\]
Variations of this estimator are to use the label set $L^{\ast}$
of cardinality $n^{\ast}$ with highest joint existence probability,
or the set of $n^{\ast}$ labels with highest individual existence
probabilities.  

\section{Dynamic Multi-Object Estimation\protect\label{sec:Multi-object-Estimation}}

\begin{table}[!t]
\global\long\def\arraystretch{1.3}%
 \caption{Common notations from Section \ref{sec:Multi-object-Estimation}.}

\vspace*{0.4cm}
 {\footnotesize\label{tbl:notation-dynamic}}{\footnotesize\par}
\centering{}%
\begin{tabular}{|c|l|}
\hline 
{\small\textbf{Notation}} & {\small\textbf{Description}}\tabularnewline
\hline 
{\small$\mathbb{B}_{k}$} & {\small Space of labels (of objects) born at time $k$}\tabularnewline
{\small$\mathbb{L}_{k}$} & {\small$\biguplus_{t=0}^{k}\mathbb{B}_{t}$, label space at time $k$ }\tabularnewline
{\small$\boldsymbol{X}_{k}$} & {\small Multi-object state at time $k$}\tabularnewline
{\small$Z_{k}$} & {\small Multi-object measurement at time $k$}\tabularnewline
{\small$\boldsymbol{f}_{k}(\cdot|\boldsymbol{X}_{k\texttt{-}1})$} & {\small Multi-object transition density given $\boldsymbol{X}_{k\texttt{-}1}$ }\tabularnewline
{\small$\boldsymbol{g}_{k}(Z_{k}|\boldsymbol{X}_{k})$} & {\small Likelihood of observing $Z_{k}$ given $\boldsymbol{X}_{k}$}\tabularnewline
{\small$\mathbf{\boldsymbol{\pi}}_{0:k}(\boldsymbol{X}_{0:k})$} & {\small Multi-object posterior density at $\boldsymbol{X}_{0:k}$}\tabularnewline
{\small$\mathbf{\boldsymbol{\pi}}_{k}(\boldsymbol{X}_{k})$} & {\small Multi-object filtering density at $\boldsymbol{X}_{k}$}\tabularnewline
{\small$\boldsymbol{f}_{\!B,k}$} & {\small Density of LRFS of new objects at time $k$}\tabularnewline
{\small$P_{B,k}(\ell)$} & {\small Birth probability of label} {\small$\ell$ at time $k$}\tabularnewline
{\small$p_{B}(\cdot,\ell)$} & {\small Attribute density for newborn label $\ell$ at time $k$ }\tabularnewline
{\small$\boldsymbol{f}_{\!S,k\!}\left(\cdot|\boldsymbol{X}_{k\texttt{-}1}\right)$} & {\small Density of LRFS at time $k$ generated by $\boldsymbol{X}_{k\texttt{-}1}$ }\tabularnewline
{\small$P_{S,k}(x_{k\texttt{-}1},\ell)$} & {\small Survival probability to time $k$ of state $(x_{k\texttt{-}1},\ell)$ }\tabularnewline
{\small$f_{S,k}(\cdot|\cdot,\ell)$} & {\small Attribute transition density to time $k$ for $\ell$ }\tabularnewline
{\small$P_{D,k}(\boldsymbol{x})$} & {\small Detection probability of state $\boldsymbol{x}$ at time $k$}\tabularnewline
{\small$\kappa_{k}$} & {\small Clutter intensity function at time $k$}\tabularnewline
{\small$\varPsi_{Z,k}^{(j)}(\boldsymbol{x})$} & {\small$\boldsymbol{1}_{\{1\text{:}|Z|\}}(j)\frac{P_{D,k}(\boldsymbol{x})g_{k}(z_{j}|\boldsymbol{x})}{\kappa_{k}(z_{j})}\texttt{+}\delta_{0}[j](1\texttt{-}P_{D,k}(\boldsymbol{x}))$}\tabularnewline
{\small$\Lambda_{S,k}^{\!(j)\!}(x|\varsigma,\ell)$} & {\small$\varPsi_{Z,k}^{(j)}(x,\ell)f_{S,k}(x|\varsigma,\ell)P_{S,k}(\varsigma,\ell)$}\tabularnewline
{\small$\Lambda_{B,k}^{\!(j)\!}(x,\ell)$} & {\small$\varPsi_{Z,k}^{(j)}(x,\ell)p_{B,k}(x,\ell)P_{B,k}(\ell)$}\tabularnewline
{\small$\theta,\theta_{k}$} & {\small Association map $\mathbb{L}_{k}\rightarrow\{0:|Z_{k}|\}$ at
time $k$}\tabularnewline
{\small$\Theta(I),\Theta_{k}(I)$} & {\small Space of association maps $\theta_{k}$ with domain $I$}\tabularnewline
{\small$(\xi,I)$} & {\small Prior GLMB component}\tabularnewline
{\small$(\xi,\theta,I)$} & {\small Updated GLMB component}\tabularnewline
{\small$\gamma,\gamma_{k}$} & {\small Extended association $I_{k\texttt{-}1}\uplus\mathbb{B}_{k}\!\rightarrow\!\{-1\!:\!\left|Z_{k}\right|\}$ }\tabularnewline
{\small$\Gamma,\Gamma_{k}$} & {\small Space of extended associations at time $k$}\tabularnewline
{\small$P,P_{k}$} & {\small Number of (hypothesized) labels at time $k$ }\tabularnewline
{\small$M,M_{k}$} & {\small Number of measurements at time $k$ }\tabularnewline
\hline 
\end{tabular}
\end{table}

The apparatus developed in the previous sections enables classical
SSM concepts to be applied to multi-object state estimation, as shown
in this section. Bayesian estimation  for multi-object SSM is formulated
in Subsection \ref{ss:Multi-object-SSM}, while Subsection \ref{ss:Standard-Multi-object-SSM}
presents the standard multi-object SSM. An exact solution to the
Bayes multi-object filter using GLMB, and an approximate solution
using LMB, are presented in Subsections \ref{ss:GLMB-filter} and
\ref{ss:LMB-filter}, while numerical implementations are discussed
in Subsection \ref{ss:GLMB-Imp}. Extension of the GLMB filter to
multi-object smoothing is presented in Subsection \ref{ss:Multi-object_Smoother},
while Subsection \ref{ss:Multi-object_Smoother-1} discusses its implementation.
Extensions to non-standard models, robust and distributed estimation,
and control for multi-object system are discussed in Subsections \ref{ss:Non-Standard-model},
\ref{ss:Robust-Est}, \ref{ss:Distributed-Est}, and \ref{ss:Control}.

\subsection{Multi-Object State Space Model\protect\label{ss:Multi-object-SSM}}

Recall that a label $\ell=(s,\iota)$ consists of the starting time
$s$ and an index $\iota$, let $\mathbb{B}_{k}$ denote the space
of labels with starting time $k$. Then, the label space and labeled
state space at time $k$ are, respectively, the disjoint union $\mathbb{L}_{k}=\biguplus_{t=0}^{k}\mathbb{B}_{t}$
and $\mathbb{X}\times\mathbb{L}_{k}$. 

Analogous to traditional SSMs, the multi-object state $\boldsymbol{X}_{k}\in\mathcal{F}(\mathbb{X}\times\mathbb{L}_{k})$,
at time $k$, evolves from its previous value $\boldsymbol{X}_{k\texttt{-}1}\in\mathcal{F}(\mathbb{X}\!\times\!\mathbb{L}_{k\texttt{-}1})$,
and generates an observation $Z_{k}\in\mathcal{F}(\mathcal{Z})$,
according to the multi-object state and observation equations
\begin{flalign}
\boldsymbol{X}_{k} & =\boldsymbol{S}_{k}(\boldsymbol{X}_{k\texttt{-}1})\cup\boldsymbol{B}_{k},\label{eq:RFStransition}\\
Z_{k} & =D_{k}(\boldsymbol{X}_{k})\cup K_{k},\label{eq:RFSmeasurement}
\end{flalign}
where $\boldsymbol{S}_{k}(\boldsymbol{X}_{k\texttt{-}1})$ is the
set of states generated from $\boldsymbol{X}_{k\texttt{-}1}$, $\boldsymbol{B}_{k}$
is the set of newly appearing states, $D_{k}(\boldsymbol{X}_{k})$
is the set of detections generated from $\boldsymbol{X}_{k}$, and
$K_{k}$ is the set of clutter, (or false alarms). 

Under the Bayesian paradigm, the multi-object state and observation
are modeled as RFSs. Using the FISST notion of multi-object density,
the multi-object state and observation equations (\ref{eq:RFStransition}),
(\ref{eq:RFSmeasurement}) can be characterized, respectively, by
the \emph{multi-object} (\emph{Markov})\emph{ transition density}
and\emph{ multi-object} (\emph{observation})\emph{ likelihood function}
\cite{Goodmanetal97,Mahler2003}
\begin{eqnarray*}
\boldsymbol{f}_{k}(\boldsymbol{X}_{k}|\boldsymbol{X}_{k\texttt{-}1}), &  & \boldsymbol{g}_{k}(Z_{k}|\boldsymbol{X}_{k}).
\end{eqnarray*}
The transition density $\boldsymbol{f}_{k}(\cdot|\cdot)$ captures
the underlying evolution, appearance and disappearance of the objects.
The observation likelihood $\boldsymbol{g}_{k}(\cdot|\cdot)$ captures
the underlying detections, false negatives/positives and data association
uncertainty. Examples of $\boldsymbol{f}_{k}(\cdot|\cdot)$ and $\boldsymbol{g}_{k}(\cdot|\cdot)$
are given in the standard (commonly used) SSM presented in Subsection
\ref{ss:Standard-Multi-object-SSM}.

Given the observation history $Z_{1:k}$, all information on the multi-object
trajectory is captured in the \emph{multi-object posterior density
}$\mathbf{\boldsymbol{\pi}}_{0:k}(\boldsymbol{X}_{0:k})\triangleq\mathbf{\boldsymbol{\pi}}_{0:k}(\boldsymbol{X}_{0:k}|Z_{1:k})$
(the dependence on $Z_{1:k}$ is omitted for brevity). Similar to
Bayesian estimation for traditional SSMs \cite{Briers-10}, \cite{DoucetTutorial09},
the (multi-object) posterior density can be propagated forward recursively
by \cite{VoVomultiscan18},
\begin{eqnarray}
\!\!\!\!\!\!\!\!\!\mathbf{\boldsymbol{\pi}}_{0:k}(\boldsymbol{X}_{0:k})\!\!\! & =\label{eq:MO_posterior}\\
 &  & \!\!\!\!\!\!\!\!\!\!\!\!\!\!\!\!\!\!\!\!\!\!\!\!\!\!\!\!\!\!\!\!\frac{\boldsymbol{g}_{k}(Z_{k}|\boldsymbol{X}_{k})\boldsymbol{f}_{\!k}\!\left(\boldsymbol{X}_{k}|\boldsymbol{X}_{k\texttt{-}1}\right)\mathbf{\boldsymbol{\pi}}_{0:k\texttt{-}1}(\boldsymbol{X}_{0:k\texttt{-}1})}{\int\!\boldsymbol{g}_{k}(Z_{k}|\boldsymbol{Y}_{k})\boldsymbol{f}_{\!k}\!\left(\boldsymbol{X}_{k}|\boldsymbol{Y}_{k\texttt{-}1}\right)\boldsymbol{\mathbf{\pi}}_{0:k\texttt{-}1}(\boldsymbol{Y}_{0:k\texttt{-}1})\delta\boldsymbol{Y}_{0:k}}.\nonumber 
\end{eqnarray}
Since the dimension of $\boldsymbol{\pi}_{0:k}$ increases with $k$,
the computational complexity for each iterate of (\ref{eq:MO_posterior}),
which is a function of the dimension, grows with time. This growth
is far worse than its single-object counterpart because the multi-object
state space $\mathcal{F}(\mathbb{X}\times\mathbb{L}_{k})$ is far
larger than the state space $\mathbb{X}$. 

A cheaper alternative is the \emph{multi-object filtering density},
$\mathbf{\boldsymbol{\pi}}_{k}(\boldsymbol{X}_{k})\triangleq\int\mathbf{\boldsymbol{\pi}}_{0:k}(\boldsymbol{X}_{0:k})\delta\boldsymbol{X}_{0:k\texttt{-}1}$,
which can be propagated by the \textit{multi-object} \textit{Bayes
filter} \cite{Mahler2003,MahlerBook14}
\begin{align}
\!\mathbf{\boldsymbol{\pi}}_{k}(\boldsymbol{X}_{k}) & =\frac{\boldsymbol{g}_{k}(Z_{k}|\boldsymbol{X}_{k})\!\int\!\!\boldsymbol{f}_{\!k}\!\left(\boldsymbol{X}_{k}|\boldsymbol{Y}\right)\!\mathbf{\boldsymbol{\pi}}_{k\texttt{-}1\!}(\boldsymbol{Y})\delta\boldsymbol{Y}}{\int\!\boldsymbol{g}_{k}(Z_{k}|\boldsymbol{X})\!\int\!\!\boldsymbol{f}_{\!k}\!\left(\boldsymbol{X}|\boldsymbol{Y}\right)\!\mathbf{\boldsymbol{\pi}}_{k\texttt{-}1\!}(\boldsymbol{Y})\delta\boldsymbol{Y}\delta\boldsymbol{X}}.\label{eq:MO-bayes-2}
\end{align}
Similar to the single-object Bayes filter, the computational complexity
for each iterate of (\ref{eq:MO-bayes-2}) does not increase with
time since the dimension of $\mathbf{\boldsymbol{\pi}}_{k}$ does
not grow with $k$. 

Under the standard multi-object model (to be discussed next), numerical
solutions to the posterior and filtering recursions were first developed
at around the same time, respectively, in \cite{VVE2011Particle,Vu2014}
using particle marginal Metropolis-Hasting simulation, and in \cite{VoConj11,VoConj13}
using GLMBs. The latter is an analytic solution, and was later extended
to solve the posterior recursion in \cite{VV2018Multiscan,VoVomultiscan18}
(see Subsections \ref{ss:GLMB-filter}, and \ref{ss:Multi-object_Smoother}).

\subsection{Standard Multi-Object Model\protect\label{ss:Standard-Multi-object-SSM}}

\subsubsection{Multi-Object State Dynamic\protect\label{sss:Multi-object_Dynamics}}

Let $\boldsymbol{f}_{\!B,k}\!$ denote the density of the LRFS $\boldsymbol{B}_{k}$
of new objects, and $\boldsymbol{f}_{\!S,k\!}\left(\cdot|\boldsymbol{X}_{\!k\texttt{-}1}\right)$
the density of the LRFS $\boldsymbol{S}_{k}(\boldsymbol{X}_{\!k\texttt{-}1})$
of objects generated from the previous multi-object state $\boldsymbol{X}_{\!k\texttt{-}1}$.
Assuming $\boldsymbol{S}_{k}(\boldsymbol{X}_{\!k\texttt{-}1})$ and
$\boldsymbol{B}_{k}$ are independent, due to the labeling construct,
the multi-object transition density is given by \cite{VoConj13}
\begin{eqnarray}
\boldsymbol{f}_{\!k}\!\left(\boldsymbol{X}_{\!k}|\boldsymbol{X}_{\!k\texttt{-}1}\right)\!\!\! & =\\
 &  & \!\!\!\!\!\!\!\!\!\!\!\!\!\!\!\!\!\!\!\!\!\!\!\!\!\!\!\!\!\!\!\!\boldsymbol{f}_{\!B,k}\!\left(\boldsymbol{X}_{\!k}\cap(\mathbb{X}\times\mathbb{B}_{k})\right)\boldsymbol{f}_{\!S,k}\!\left(\boldsymbol{X}_{\!k}\cap(\mathbb{X}\times\mathbb{L}_{k\texttt{-}1})|\boldsymbol{X}_{\!k\texttt{-}1}\right)\!.\nonumber
\end{eqnarray}

In the \textit{standard} (commonly used)\textit{ multi-object dynamic
model}, $\boldsymbol{B}_{k}$ is a GLMB on $\mathcal{F}(\mathbb{X}\times\mathbb{B}_{k})$.
For simplicity (but without loss of generality), we use the one-term
GLMB
\begin{equation}
\boldsymbol{f}_{\!B,k}(\boldsymbol{X})=\Delta(\boldsymbol{X})w_{B,k}(\mathcal{L}(\boldsymbol{X}))p_{B,k}^{\boldsymbol{X}}.\label{eq:Birth_transition-1}
\end{equation}
Note that $\boldsymbol{f}_{\!B,k}(\boldsymbol{X})=0$ if $\boldsymbol{X}$
contains any element with label outside of $\mathbb{B}_{k}$. This
birth model is general enough to include labeled i.i.d cluster and
LMB, though the latter, given by $\boldsymbol{f}_{\!B,k}\!=\!\{(P_{B,k}(\ell),p_{B,k}(\cdot,\ell))\}_{\ell\in\mathbb{B}_{k}}$,
is most popular.

Further, for a given $\boldsymbol{X}_{\!k\texttt{-}1}$, each $(x_{k\texttt{-}1},\ell)\in\boldsymbol{X}_{\!k\texttt{-}1}$
either survives with probability $P_{S,k}(x_{k\texttt{-}1},\ell)$
and evolves to state $(x_{k},\ell)$ at time $k$, with the same label
and attribute transition density $f_{S,k}(x_{k}|x_{k\texttt{-}1},\ell)$,
or dies with probability $1-P_{S,k}(\boldsymbol{x}_{k\texttt{-}1})$.
This means objects keep the same labels for their entire lives. Assuming
that conditional on $\boldsymbol{X}_{\!k\texttt{-}1}$ each object
survives and evolves independently of one another, $\boldsymbol{S}_{k}(\boldsymbol{X}_{\!k\texttt{-}1})$
is an LMB on $\mathcal{F}(\mathbb{X}\times\mathbb{L}_{k\texttt{-}1})$,
with parameters $\{(P_{S,k}(\boldsymbol{\zeta}),f_{S,k}(\cdot|\boldsymbol{\zeta}))\}_{\boldsymbol{\zeta}\in\boldsymbol{X}_{k\texttt{-}1}}$,
and density given by \cite{VoConj13}
\begin{align}
\boldsymbol{f}_{\!S,k\!}\left(\boldsymbol{X}|\boldsymbol{X}_{\!k\texttt{-}1}\right) & =\Delta\!\left(\boldsymbol{X}\right)\mathbf{1}_{\mathcal{L}\left(\boldsymbol{X}_{\!k\texttt{-}1}\right)}^{\mathcal{L}\left(\boldsymbol{X}\right)}\left[\boldsymbol{f}_{\!S,k\!}\left(\boldsymbol{X};\cdot\right)\right]^{\boldsymbol{X}_{\!k\texttt{-}1}},\label{eq:Survival-trans}
\end{align}
where $\mathbf{1}_{\mathcal{L}\left(\boldsymbol{X}_{\!k\texttt{-}1}\right)}^{\mathcal{L}\left(\boldsymbol{X}\right)}=\prod_{\ell\in\mathcal{L}\left(\boldsymbol{X}\right)}\mathbf{1}_{\mathcal{L}\left(\boldsymbol{X}_{\!k\texttt{-}1}\right)}(\ell)$,
and
\begin{equation}
\!\!\!\boldsymbol{f}_{\!S,k\!}\left(\boldsymbol{X};y,\ell\right)=\begin{cases}
P_{S,k\!}\left(y,\ell\right)\!f_{S,k}(x|y,\ell), & \!\!\!\text{if }(x,\ell)\in\boldsymbol{X}\\
1-P_{S,k\!}\left(y,\ell\right), & \!\!\!\text{if }\ell\notin\mathcal{L}(\boldsymbol{X})
\end{cases}\!.\label{eq:Survival_transition2-1}
\end{equation}
The standard multi-object transition density only generates unfragmented
trajectories, and is completely characterized by the model parameters
$w_{B,k},p_{B,k},P_{S,k},f_{S,k}$. 

More sophisticated multi-object dynamic models include spawnings \cite{MahlerBook07,MahlerBook14,Mahler2003,Bryantetal18},
division \cite{Nguyenetal-CellSpawning21}, and interactions between
objects \cite{Gostar-Interacting-19}, see also Subsection \ref{ss:Non-Standard-model}.

\subsubsection{Multi-Object Observation\protect\label{sss:Multi-object_Observation}}

At time $k$, an \textit{association map} $\theta:\mathbb{L}_{k}\rightarrow\{0:|Z_{k}|\}$
associates the labels of $\boldsymbol{X}_{k}$ with the elements of
$Z_{k}$, satisfying the \textit{positive 1-1} property that \emph{no
two distinct arguments are mapped to the same positive value} \cite{VoConj13}.
Here, $\theta(\ell)>0$ means $\ell$ generates detection $z_{\theta(\ell)}\in Z_{k}$,
and $\theta(\ell)=0$ means $\ell$ is misdetected. The positive 1-1
property ensures each detection comes from at most one object. The
space of all such association maps is denoted as $\Theta_{k}$.

In the \textit{standard multi-object observation model}, each $\boldsymbol{x}\in\boldsymbol{X}_{k}$,
is either detected  with probability $P_{D,k\!}\left(\boldsymbol{x}\right)$
and generates a detection $z$ with likelihood $g_{k}\left(z|\boldsymbol{x}\right)$
or missed with probability $1-P_{D,k\!}\left(\boldsymbol{x}\right)$.
Assuming that conditional on $\boldsymbol{X}_{k}$ detections are
independently generated, the RFS $D_{k}(\boldsymbol{X}_{k})$ of detections
is a multi-Bernoulli with parameters $\{(P_{D,k}(\boldsymbol{x}),g_{k}(\cdot|\boldsymbol{x}))\}_{\boldsymbol{x}\in\boldsymbol{X}_{k}}$.
Clutter $K_{k}$ is modeled as a Poisson RFS with intensity $\kappa_{k}$,
and assumed independent of the detections. The standard multi-object
likelihood function is given by \cite{VoConj13}  
\begin{equation}
\boldsymbol{g}_{k}(Z_{k}|\boldsymbol{X}_{k})\propto\sum_{\theta\in\Theta_{k}(\mathcal{L}(\boldsymbol{X}_{k}))}\left[\varPsi_{Z_{k}}^{(\theta\circ\mathcal{L}(\cdot))}(\cdot)\right]^{\boldsymbol{X}_{k}},\label{e:Single_Sensor_Lkhd-1}
\end{equation}
where $\Theta_{k}(I)\subseteq\Theta_{k}$ denotes the collection
of association maps with domain $I$, and 
\begin{equation}
\varPsi_{k,Z}^{(j)}(\boldsymbol{x})=\begin{cases}
\frac{P_{D,k}\left(\boldsymbol{x}\right)g_{k}\left(z_{j}|\boldsymbol{x}\right)}{\kappa_{k}\left(z_{j}\right)}, & j\in\{1\text{:}|Z|\}\\
1-P_{D,k}\left(\boldsymbol{x}\right), & j=0
\end{cases}.
\end{equation}
Note that the likelihood function (\ref{e:Single_Sensor_Lkhd-1})
is characterized by the `Signal to Noise Ratio' (SNR) function $\varPsi_{k,Z}^{(\theta\circ\mathcal{L})}$,
and does not suffer from the non-symmetry problem discussed in Subsection
\ref{subsec:Tricky} because \textit{each of its terms is symmetric
in the labeled states}. Hence, truncation before or after multiplication
by a valid prior, still leaves it a valid set function. 

The standard multi-object observation model accommodates non-homogeneous
clutter and state-dependent detection probability, covering a broad
range of problems. However, it does not address merged measurements
\cite{Beardetal14}, occlusions \cite{Ong-TPAMI-20}, extended measurements,
superpositional and image measurements \cite{MahlerBook07,MahlerBook14},
which require more sophisticated models. 

\subsubsection{Multi-Sensor Multi-Object Observation}

In a multi-sensor setting with $V$ sensors, each sensor registers
an observation set $Z_{k}^{(v)}$, with (standard) multi-object likelihood
$\boldsymbol{g}_{k}^{(v)\!}(Z_{k}^{(v)}|\boldsymbol{X}_{k})$, $v\in\{1\textrm{:}V\}$.
Assuming that $Z_{k}^{(1)},...,Z_{k}^{(V)}$ are independent conditional
on $\boldsymbol{X}_{k}$, the multi-sensor multi-object likelihood
function is given by the product \cite{MahlerBook07,MahlerBook14}
\vspace{-0.2cm}
\begin{equation}
\boldsymbol{g}_{k}(Z_{k}^{(1)},...,Z_{k}^{(V)}|\boldsymbol{X}_{k})\triangleq\overset{V}{\underset{v=1}{\prod}}\boldsymbol{g}_{k}^{(v)}(Z_{k}^{(v)}|\boldsymbol{X}_{k}).
\end{equation}
\vspace{-0.2cm}

Let $\varPsi_{k,Z^{\left(v\right)}}^{(v,j^{\left(v\right)})}$ denote
the `SNR' function of sensor $v$, and define the multi-sensor SNR
function
\begin{align}
\varPsi_{k,(Z^{(1)},...,Z^{(V)})}^{\left(j^{\left(1\right)},\dots,j^{\left(V\right)}\right)}\left(\boldsymbol{x}\right) & \triangleq\prod\limits_{v=1}^{V}\varPsi_{k,Z^{\left(v\right)}}^{(v,j^{\left(v\right)})}\left(\boldsymbol{x}\right),\label{e:MS_Lkhd_Abbrev_6-2}
\end{align}
the multi-sensor observation $Z_{k}\triangleq(Z_{k}^{(1)},...,Z_{k}^{(V)})$,
the multi-sensor association map as the $V$-tuple $\theta\triangleq(\theta^{\left(1\right)},...,\theta^{\left(V\right)})$
of association maps $\theta^{\left(v\right)}\in\Theta_{k}^{(v)}(I)$
from every sensor, and $\Theta_{k}(I)\triangleq\cprod_{v=1}^{V}\Theta_{k}^{\left(v\right)}(I)$.
Then, the multi-sensor multi-object observation likelihood can be
written in the form (\ref{e:Single_Sensor_Lkhd-1}).

\subsubsection{Relation with Unlabeled Models\protect\label{ss:Multi-object-Loc}}

Historically, only the unlabeled multi-object transition density
and observation likelihood were developed (for multi-object localization).

The unlabeled multi-object transition density $f_{k}\!$ is given
by the convolution (see (\ref{eq:Convolution-PDF}) with $n=2$)
\[
f_{k}\!\left(X_{k}|X_{k\texttt{-}1}\right)=\sum_{W\subseteq X_{k}}f_{\!B,k}\!\left(W\right)f_{\!S,k}\!\left(X_{k}-W|X_{k\texttt{-}1}\right)\!,
\]
of the multi-object densities $f_{B,k}$ (of new born objects) and
$f_{\!S,k}(\cdot|X_{k\texttt{-}1})$ (of objects generated from 
$X_{k\texttt{-}1}$) \cite{MahlerBook07}. Without labeling, the sum
over all subsets of $X_{k}$ does not reduce to a single term like
its labeled counterpart. In the standard multi-object dynamic model,
$f_{\!S,k}(\cdot|X_{k\texttt{-}1})$ is a multi-Bernoulli of the form
(\ref{eq:Multi-Bernoulli-density}), which is (algebraically) more
complex than the LMB (\ref{eq:Survival-trans}). Further, the multi-Bernoulli
model cannot ensure distinct states (in the transition to time $k$)
necessary for the set representation of the multi-object state (see
Remark \ref{rm3:negligible_dependence}). The LRFS formulation avoids
such problems.

The standard unlabeled multi-object observation likelihood $g_{k}$
takes on a nearly identical form to its labeled counterpart (\ref{e:Single_Sensor_Lkhd-1}),
specifically \cite[p. 421]{MahlerBook07}
\begin{eqnarray*}
g_{k}(Z_{k}|\{x_{1:n}\})\propto\sum_{\theta\in\Theta_{k}(\{1:n\})}\prod_{i=1}^{n}\varPsi_{Z_{k}}^{(\theta(i))}(x_{i}),
\end{eqnarray*}
where $\Theta_{k}(\{1\!:\!n\})$ denotes the space of association
maps with domain $\{1\!:\!n\}$. The subtle, but important difference
is that $g_{k}(Z_{k}|X)$ cannot be written as a sum of multi-object
exponentials in $X$. Further, $g_{k}(Z_{k}|\cdot)$ (and hence the
\textit{unlabeled multi-object posterior} $\!\pi_{k}(\cdot|Z_{k})\propto\!g_{k}(Z_{k}|\cdot)\pi_{k}(\cdot)$)
is \textit{not closed under truncation} as illustrated in Example
\ref{ex3:unlabeled-nonsym}. 
\begin{example}
\label{ex3:unlabeled-nonsym}Consider a multi-object state $\{x_{1},x_{2}\}$,
with $Z_{k}=\{z_{1}\}$, $P_{D\!}^{\!}=0.5$, and $\kappa=1$. The
possible positive 1-1 mappings from $\{1,2\}$ to $\{0,1\}$ are:
$[\theta(1),\theta(2)]=[1,0]$ ($x_{1}$ detected and $x_{2}$ undetected);
$[\theta(1),\theta(2)]=[0,1]$ ($x_{1}$ undetected and $x_{2}$ detected);
and $[\theta(1),\theta(2)]=[0,0]$, ($x_{1}$ and $x_{2}$ undetected),
hence $g_{k}(\{z_{1}\}|\{x_{1},x_{2}\})\propto g_{k}(z_{1}|x_{1})+g_{k}(z_{1}|x_{2})+1$.
Truncating $g_{k}(z_{1}|x_{2})$ yields $\hat{g}_{k}(\{z_{1}\}|x_{1},x_{2})=g_{k}(z_{1}|x_{1})+1$,
which is not a function of the set $\{x_{1},x_{2}\}$, because $\hat{g}_{k}(\{z_{1}\}|x_{2},x_{1})=g_{k}(z_{1}|x_{2})+1\neq\hat{g}_{k}(\{z_{1}\}|x_{1},x_{2})$.
Note that the labeled representation avoids this problem because $\theta$
maps the label of each state to the detections, making each term invariant
to the listing order of the states. 
\end{example}
\vspace{-0.5cm}
\subsection{GLMB Filter\protect\label{ss:GLMB-filter}}

The crux of multi-object estimation lies in the solutions to the multi-object
filtering/posterior recursions. This subsection presents a GLMB-based
analytic solution to the multi-object Bayes filter (\ref{eq:MO-bayes-2}),
under the standard multi-object SSM model (inclusive of multiple sensors).
For convenience, hereon, we omit references to the time index $k$,
and denote $k\pm1$, with subscripts `$\pm$', e.g., $\mathbb{L_{\texttt{-}}\triangleq L}_{k\texttt{-}1}$,
$\mathbb{B}\mathbb{\triangleq B}_{k}$, $\mathbb{L}\mathbb{\triangleq L}_{\texttt{-}}\cup\mathbb{B}$.
Also, when we write $\{P(\xi,I_{\texttt{-}},\theta,I)\!:\!(\xi,I_{\texttt{-}},\theta,I)\}$,
it is understood that the variables $\xi$, $I_{\texttt{-}}$, $\theta$,
and $I$, respectively, range over the spaces $\Xi$, $\mathcal{F}(\mathbb{L}_{\texttt{-}})$,
$\Theta$, and $\mathcal{F}(\mathbb{L})$, unless otherwise stated.

\subsubsection{Chapman-Kolmogorov Prediction}

Suppose that a multi-object state with density $\boldsymbol{\pi}_{\texttt{-}}\triangleq\boldsymbol{\pi}_{k\texttt{-}1}$,
at time $k-1$, evolves to the current time $k$ according to a multi-object
transition density $\boldsymbol{f}\triangleq\boldsymbol{f}_{\!k}$.
Then the predicted multi-object density $\boldsymbol{\pi}\triangleq\boldsymbol{\pi}_{k}$
is given by the Chapman-Kolmogorov equation
\begin{align}
\mathbf{\boldsymbol{\pi}}(\boldsymbol{X}) & =\int\!\boldsymbol{f}\!\left(\boldsymbol{X}|\boldsymbol{X}_{\texttt{-}}\right)\mathbf{\boldsymbol{\pi}}_{\texttt{-}}(\boldsymbol{X}_{\texttt{-}})\delta\boldsymbol{X}_{\texttt{-}},\label{eq:MO-bayes-2-1-1}
\end{align}
which defines the \textit{prediction operator} $\varPi:\mathbf{\boldsymbol{\pi}_{\texttt{-}}}\mapsto\mathbf{\boldsymbol{\pi}}$.

In \cite{VoConj13,VoVoP14}, it was established that the GLMB family
is closed under the prediction operator $\varPi$ with the standard
multi-object transition density $\boldsymbol{f}$ (i.e., model parameters
$w_{B},p_{B},P_{S},f_{S}$). Specifically, for the GLMB 
\begin{equation}
\mathbf{\boldsymbol{\pi}}_{\texttt{-}}=\{(w_{\texttt{-}}^{(\xi)}(I_{\texttt{-}}),p_{\texttt{-}}^{(\xi)})\!:\!(\xi,I_{\texttt{-}})\},\label{eq:previousGLMB}
\end{equation}
the prediction density is the GLMB
\begin{equation}
\mathbf{\boldsymbol{\pi}}=\varPi(\mathbf{\boldsymbol{\pi}}_{\texttt{-}})=\{(w^{(\xi)}(I),p^{(\xi)})\!:\!(\xi,I)\},\label{eq:PropCKstrong1-1}
\end{equation}
where \vspace{-0.2cm}
\begin{eqnarray}
\!\!\!\!\!\!\!w^{(\xi)}(I)= & \!\!\!\!\!\! & \!\!\!w_{B}(I\cap\mathbb{B})w_{S}^{(\xi)}(I\cap\mathbb{L}_{\texttt{-}}),\label{eq:PropCKstrong2}\\
\!\!\!\!\!\!\!p^{(\xi)}(x,\ell)= & \!\!\!\!\!\! & \!\!\!\boldsymbol{1}_{\mathbb{B}}(\ell)p_{B}(x,\ell)+\boldsymbol{1}_{\mathbb{L}_{\texttt{-}}}(\ell)p_{S}^{(\xi)}(x,\ell),\label{eq:PropCKstrongws}\\
\!\!\!\!\!\!\!w_{S}^{(\xi)}(L)= & \!\!\!\!\!\! & \!\!\![\bar{P}_{S}^{(\xi)}]^{L}\sum_{I_{\texttt{-}}\supseteq L}[1-\bar{P}_{S}^{(\xi)}]^{I_{\texttt{-}}\texttt{-}L}w_{\texttt{-}}^{(\xi)}(I_{\texttt{-}}),\label{eq:PropCKstrong_eta}\\
\!\!\!\!\!\!\!p_{S}^{(\xi)}(x,\ell)= & \!\!\!\!\!\! & \!\!\!\bigl\langle p_{\texttt{-}}^{(\xi)}(\cdot)f_{S}(x|\cdot)P_{S}(\cdot)\bigr\rangle(\ell)/\bar{P}_{S}^{(\xi)}(\ell),\label{eq:PropCKstrong3}\\
\!\!\!\!\!\!\!\bar{P}_{S}^{(\xi)}(\ell)= & \!\!\!\!\!\! & \!\!\!\bigl\langle P_{S\,}p_{\texttt{-}}^{(\xi)}\bigr\rangle(\ell).\label{eq:PropCKstrong4}
\end{eqnarray}

The prior GLMB components $(\xi,I_{\texttt{-}})$ generate the predicted
components $(\xi,I)$. The predicted weight $w^{(\xi)}(I)$ of $(\xi,I)$
is the product of the birth weight for new objects with labels in
$I$, and the survival weight for old objects with labels in $I$.
The predicted attribute density $p^{(\xi)}(\cdot,\ell)$ is either
the birth attribute density for a new object with label $\ell$, or
the predicted attribute density for a surviving object with label
$\ell$. 

\subsubsection{Bayes Update}

Suppose that a multi-object state with prior density $\boldsymbol{\pi}$
generates an observation $Z$ according to a multi-object observation
likelihood $\boldsymbol{g}(Z|\boldsymbol{X})\triangleq\boldsymbol{g}_{k}(Z|\boldsymbol{X})$.
Then, the multi-object posterior density is given by Bayes rule
\begin{align}
\mathbf{\boldsymbol{\pi}}(\boldsymbol{X}|Z) & =\frac{\boldsymbol{g}(Z|\boldsymbol{X})\mathbf{\boldsymbol{\pi}}(\boldsymbol{X})}{\int\boldsymbol{g}(Z|\boldsymbol{Y})\mathbf{\boldsymbol{\pi}}(\boldsymbol{Y})\delta\boldsymbol{Y}},\label{eq:MO-bayes-2-1-1-1}
\end{align}
which defines the \textit{Bayes update operator} $\mathit{\Upsilon_{Z}}:\mathbf{\boldsymbol{\pi}}\mapsto\mathbf{\boldsymbol{\pi}}(\cdot|Z)$.

In \cite{VoConj13,VoVoP14}, it was established that the GLMB family
is closed under the update operator $\mathit{\Upsilon_{Z}}$, i.e.,
a conjugate prior w.r.t. the standard $g(Z|\boldsymbol{X})$ (with
SNR function $\varPsi_{Z}^{(\theta\circ\mathcal{L})}$ and model parameters
$\kappa,P_{D},g$). Specifically, if the prior $\boldsymbol{\pi}$
is the GLMB (\ref{eq:PropCKstrong1-1}), then the posterior is the
GLMB
\begin{equation}
\boldsymbol{\pi}(\cdot|Z)=\mathit{\Upsilon_{Z}}(\mathbf{\boldsymbol{\pi}})\propto\{(w_{Z}^{(\xi,\theta)\!}(I),p_{Z}^{\!(\xi,\theta)\!})\!:\!(\xi,\theta,I)\},\label{eq:PropBayes_strong0}
\end{equation}
where\vspace{-0.1cm}
\begin{eqnarray}
w_{Z}^{(\xi,\theta)\!}(I)\!\!\! & = & \!\!\!w^{(\xi)}(I)\boldsymbol{1}_{\Theta(I)}(\theta)\left[\bar{\varPsi}_{Z}^{(\xi,\theta(\cdot))}(\cdot)\right]^{I},\label{eq:PropBayes_strong1}\\
p_{Z}^{\!(\xi,\theta)\!}(x,\ell)\!\!\! & = & \!\!\!p^{(\xi)}(x,\ell)\varPsi_{Z}^{(\theta(\ell))}(x,\ell)/\bar{\varPsi}_{Z}^{(\xi,\theta)}(\ell),\label{eq:PropBayes_strong2}\\
\bar{\varPsi}_{Z}^{(\xi,j)}(\ell)\!\!\! & = & \!\!\!\bigl\langle p^{(\xi)}\varPsi_{Z}^{(j)}\bigr\rangle(\ell).\label{eq:PropBayes_strong3}
\end{eqnarray}

Note that each prior GLMB component $(\xi,I)$ generates a series
of updated components $(\xi,\theta,I)$. Due to the term $\boldsymbol{1}_{\Theta(I)\!}\left(\theta\right)$,
only components with $\mathcal{D}(\theta)=I$ are needed. The weight
$w_{Z}^{(\xi,\theta)\!}(I)$ of each $(\xi,\theta,I)$ is the product
of the validity check for $\theta$, the Bayes evidence from observation
$Z,$ and the prior weight $w^{(\xi)\!}(I)$. The corresponding attribute
density $p_{Z}^{\!(\xi,\theta)\!}(\cdot,\ell)$ is simply the Bayes
update of the prior attribute with detection $z_{\theta(\ell)}\in Z$
or a misdetection if $\theta(\ell)=0$. 

\subsubsection{GLMB Filtering Recursion}

The Bayes filtering recursion (\ref{eq:MO-bayes-2}) is the composition
of the prediction and update operators, i.e., $\mathit{\Upsilon_{Z}}\circ\varPi:\mathbf{\boldsymbol{\pi}}_{\texttt{-}}\mapsto\mathbf{\boldsymbol{\pi}}(\cdot|Z)$.
Hence, for the standard multi-object SSM, the GLMB family is closed
under the Bayes recursion, i.e., starting with an initial GLMB prior,
the multi-object filtering density at any time is a GLMB \cite{VoConj13,VoVoP14}.
This result also holds with a GLMB birth model, but for simplicity
in the subsequent discussion, we assume an LMB birth model. The direct
propagation of the GLMB  $\mathbf{\boldsymbol{\pi}}_{\texttt{-}}$
in (\ref{eq:previousGLMB}) to $\mathbf{\boldsymbol{\pi}}(\cdot|Z)$
in (\ref{eq:PropBayes_strong0}) is given by (see \cite{VoVoH2017})
\begin{align}
w_{Z}^{\left(\xi,\theta\right)}(I) & \propto\sum\limits_{I_{\texttt{-}}\subseteq\mathbb{L}}w_{Z}^{\left(\xi,I_{\texttt{-}},\theta,I\right)}w_{\texttt{-}}^{\left(\xi\right)}(I_{\texttt{-}}),\label{e:GLMB_JPU_0-1}\\
p_{Z}^{(\xi,\theta)}(x,\ell) & \propto\begin{cases}
\bigl\langle\Lambda_{S}^{(\theta(\ell))}(x|\cdot)p_{\texttt{-}}^{(\xi)}(\cdot)\bigr\rangle(\ell), & \!\!\!\!\ell\in\mathbb{L}_{\texttt{-}}\\
\Lambda_{B}^{(\theta(\ell))}(x,\ell), & \!\!\!\!\ell\in\mathbb{B}
\end{cases},\label{e:GLMB_JPU_0-2}
\end{align}
where
\begin{align}
\!\!\!\!w_{Z}^{\left(\xi,I_{\texttt{-}},\theta,I\right)} & =\boldsymbol{1}_{\mathcal{F}(I_{\texttt{-}}\uplus\mathbb{B})\!}\left(I\right)\boldsymbol{1}_{\Theta(I)\!}\left(\theta\right)\prod_{\ell\in I_{\texttt{-}}\uplus\mathbb{B}}\eta_{Z}^{(\xi,I,\ell)}(\theta(\ell)),\label{e:GLMB_JPU_3-1}\\
\!\!\!\!\eta_{Z}^{(\xi,I,\ell)}(j) & =\!\begin{cases}
1-\bigl\langle P_{S}p_{\texttt{-}}^{(\xi)}\bigr\rangle(\ell), & \!\!\!\!\ell\in\mathbb{L}_{\texttt{-}}\!-\!I\text{ }\\
{\textstyle \int}\bigl\langle\Lambda_{S}^{\!(j)\!}(x|\cdot)p_{\texttt{-}}^{(\xi)}(\cdot)\bigr\rangle(\ell)dx, & \!\!\!\!\ell\in\mathbb{L}_{\texttt{-}}\!\cap\!I\\
1-P_{B}(\ell), & \!\!\!\!\ell\in\mathbb{B}-I\\
{\textstyle \int}\Lambda_{B}^{(j)}(x,\ell)dx, & \!\!\!\!\ell\in\mathbb{B}\cap I
\end{cases}\!\!,\!\label{e:GLMB_JPU_5-1}\\
\!\!\!\!\Lambda_{S}^{\!(j)\!}(x|\varsigma,\ell) & =\varPsi_{Z}^{(j)}(x,\ell)f_{S}(x|\varsigma,\ell)P_{S}(\varsigma,\ell),\label{e:GLMB_JPU_6-1}\\
\!\!\!\!\Lambda_{B}^{\!(j)\!}(x,\ell) & =\varPsi_{Z}^{(j)}(x,\ell)p_{B}(x,\ell)P_{B}(\ell).\label{e:GLMB_JPU_7-1}
\end{align}

A salient feature of the GLMB filter is the provision for smoothed
(single-object) trajectory estimates \cite{VoVoBeard19,Nguyen-Psmooth-Sensors-19}.
Suppose that all observations up to the current time follow the standard
observation model (LRFS multi-object estimation accommodates updates
with different types of observations). Then, in the GLMB filtering
density propagation, say from $\{(w(J),p)\!:\!J\!\in\!\mathcal{F}(\mathbb{L}_{0})\}$
at time 0, to $\{(w^{(\xi)\!}(I_{\texttt{-}}),p^{(\xi)})\!:\!(\xi,I_{\texttt{-}})\}$
at time $k-1,$ we recursively constructs the `event' $\xi$ as the
history $\theta_{1:k\texttt{-}1}$ of association maps. Thus, the
attribute density $p^{(\xi,\theta)}(\cdot,\ell)=p^{(\theta_{0:k})}(\cdot,\ell)$
contains the entire history $\theta_{0:k}(\ell)$ of detections associated
with trajectory $\ell$ \cite{VoConj13,VoVoP14}. This information,
encapsulated in the multi-object filtering density, can be used to
estimate the entire trajectory (or over a moving window) via smoothing
\cite{VoVoBeard19,Nguyen-Psmooth-Sensors-19}.

The GLMB recursion, defined by the above so-called \textit{joint GLMB
prediction and update}, is a true Bayesian MOT filter with provably
Bayes-optimal track management. Since, each component $(\xi,I_{\texttt{-}})$
is propagated forward as a set of children components $\{(\xi,\theta,I)\!:\!(\theta,I)\}$,
the multi-object filtering density accumulates an intractably large
number of components with time. This inevitably requires approximation,
either by a simpler multi-object density, or by truncation.

\subsection{LMB Filter\protect\label{ss:LMB-filter}}

This subsection presents the LMB filter, regarded as the PHD filter
for trajectories. In the same way that the PHD filter approximates
the unlabeled multi-object filtering density by a Poisson with matching
PHD, the LMB filter approximates the GLMB filtering density by an
LMB with matching PHD. However, unlike the PHD filter, the LMB filter
provides trajectory estimates, and does not suffer from high cardinality
variance when the number of objects is large.

Using an LMB birth model, the LMB sub-family is also closed under
the prediction operator $\varPi$. Specifically, given a previous
LMB filtering density $\mathbf{\boldsymbol{\pi}}_{\texttt{-}}=\{(r_{\texttt{-}}(\ell),p_{\texttt{-}}(\cdot,\ell))\}_{\ell\in\mathbb{L}_{\texttt{-}}}$,
the multi-object prediction density is the LMB \cite{Reuter2014}
\begin{equation}
\mathbf{\boldsymbol{\pi}}=\boldsymbol{f}_{\!B}\cup\{(r_{S}(\ell),p_{S}(\cdot,\ell))\}_{\ell\in\mathbb{L}_{\texttt{-}}},\label{eq:PropCKstrong1-1-1}
\end{equation}
where $\boldsymbol{f}_{\!B}$ is the birth LMB density/parameter,
\begin{flalign}
r_{S}(\ell) & =r_{\texttt{-}}(\ell)\left\langle P_{S\,}p_{\texttt{-}}\right\rangle \!(\ell),\label{eq:PropCKstrong2-1}\\
p_{S}(x,\ell) & =\frac{\left\langle f_{S}(x|\cdot)P_{S}(\cdot)p_{\texttt{-}}(\cdot)\right\rangle \!(\ell)}{\left\langle P_{S\,}p_{\texttt{-}}\right\rangle \!(\ell)}.\label{eq:PropCKstrong4-1}
\end{flalign}
While the LMB prediction operation is exact and intuitively appealing,
the LMB family is not closed under Bayes rule.

The LMB filter uses the above LMB prediction, and applies the update
operator to the predicted LMB $\mathbf{\boldsymbol{\pi}}$, yielding
the GLMB $\mathit{\Upsilon_{Z}}(\mathbf{\boldsymbol{\pi}})=\{(w_{Z}^{\left(\theta\right)\!}(I),p_{Z}^{\left(\theta\right)})\!:\!(\theta,I)\}$.
Further, using (\ref{eq:existence-prob-GLMB}), (\ref{eq:track-density-GLMB}),
the GLMB $\mathit{\Upsilon_{Z}}(\mathbf{\boldsymbol{\pi}})$ is approximated
by the LMB 
\begin{equation}
\mathbf{\boldsymbol{\pi}}(\cdot|Z)=\{(r_{Z}(\ell),p_{Z}(\cdot,\ell))\}_{\ell\in\mathbb{L}},\label{eq:PropCKstrong1-1-1-1}
\end{equation}
with matching PHD, by setting
\begin{align}
r_{Z}\!\left(\ell\right) & =\sum_{\theta\in\Theta}\sum_{I\subseteq\mathbb{L}}\mathbf{1}_{I}(\ell)w_{Z}^{\left(\theta\right)}(I),\label{eq:existence-prob-GLMB-1}\\
p_{Z}\!\left(y,\ell\right) & \propto\sum_{\theta\in\Theta}p_{Z}^{\left(\theta\right)}\!\left(y,\ell\right)\sum_{I\subseteq\mathbb{L}}\mathbf{1}_{I}(\ell)w_{Z}^{\left(\theta\right)}(I).\label{eq:track-density-GLMB-1}
\end{align}

Due to the smaller number of components than the GLMB filter, the
LMB filter is faster, albeit with some degradation in tracking performance.
 
\begin{rem}
\label{rm10:LMB_mixture}If the weight $w_{Z}^{\left(\theta\right)}(I)$
of every component $(w_{Z}^{\left(\theta\right)}(I),p_{Z}^{\left(\theta\right)})$
of the GLMB $\mathit{\Upsilon_{Z}}(\mathbf{\boldsymbol{\pi}})$ can
be written in the form $K_{Z}^{(\theta)}[r_{Z}^{(\theta)}/(1-r_{Z}^{(\theta)})]^{I}$,
then $\Delta\!\left(\boldsymbol{X}\right)w_{Z}^{\left(\theta\right)}(\mathcal{L}\left(\boldsymbol{X}\right))[p_{Z}^{\left(\theta\right)}]^{\boldsymbol{X}}$
is indeed an (unnormalized) LMB, see (\ref{eq:LMB-density-1}), and
hence $\mathit{\Upsilon_{Z}}(\mathbf{\boldsymbol{\pi}})$ is a mixture
of LMBs. However, this is not the case because
\begin{eqnarray*}
w_{Z}^{\left(\theta\right)}(I)=\boldsymbol{1}_{\Theta(I)}(\theta)K_{Z}^{(\theta)}\left[r_{Z}^{(\theta)}/(1-r_{Z}^{(\theta)})\right]^{I},
\end{eqnarray*}
which cannot be expressed as a product over $I$ due to the term
$\boldsymbol{1}_{\Theta(I)}(\theta)$. Thus, LMB mixtures are not
conjugate priors.   
\end{rem}

\subsection{GLMB Filter Implementation\protect\label{ss:GLMB-Imp}}

Implementing the GLMB filter requires truncation of the multi-object
filtering density without exhaustive enumeration. Each component $(\xi,I_{\texttt{-}})$
of the GLMB filtering density at time $k-1$ propagates a (very large)
set \textbf{$\left\{ (\xi,I_{\texttt{-}},\theta,I):(\theta,I)\right\} $}
of ``children'' components (before marginalizing out $I_{\texttt{-}}$)
to the current time. Truncation by selecting ``children'' with significant
weights minimizes the $L_{1}$-approximation error \cite{VoVoP14},
and can be accomplished via solving the rank assignment problem or
via Gibbs sampling (GS). Computing $w_{Z}^{\left(\xi,\theta\right)}(I)$
and $p_{Z}^{\left(\xi,\theta\right)}$ can be accomplished via single-object
filtering techniques.

Given a fixed component $(\xi,I_{\texttt{-}})$ at time $k-1$, the
goal is to find a set of $(\theta,I)\!\in\!\Theta\times\mathcal{F}(\mathbb{L})$
with significant weight increment $w_{Z}^{\left(\xi,I_{\texttt{-}},\theta,I\right)}$.
Due to the terms $\boldsymbol{1}_{\mathcal{F}(I_{\texttt{-}}\uplus\mathbb{B})\!}\left(I\right)$
and $\boldsymbol{1}_{\Theta(I)\!}\left(\theta\right)$ in (\ref{e:GLMB_JPU_3-1}),
we only need to consider $(\theta,I)$ with $I\subseteq I_{\texttt{-}}\uplus\mathbb{B}$
and $\mathcal{D}(\theta)=I$. Hence, it is convenient to represent
such $(\theta,I)$ by an \textit{extended association map} (or simply
extended association) $\gamma\!:\!I_{\texttt{-}}\uplus\mathbb{B}\!\rightarrow\!\{-1\!:\!\left|Z\right|\}$,
defined by
\begin{equation}
\gamma(\ell)\triangleq\begin{cases}
\theta(\ell), & \text{if }\ell\in\mathcal{D}(\theta)\\
-1, & \text{\text{if }}\ell\in(I_{\texttt{-}}\uplus\mathbb{B})-\mathcal{D}(\theta)
\end{cases}.\label{eq:gamma}
\end{equation}
Note that $\gamma$ inherits the \emph{positive 1-1} property, and
the set of all such $\gamma$ is denoted by $\Gamma$. Since $(\theta,I)$
is recovered by $I=\{\ell\!:\!\gamma(\ell)\geq0\}$, $\theta(\ell)=\gamma(\ell)$
for each $\ell\in I$, it follows that $\boldsymbol{1}_{\mathcal{F}(I_{\texttt{-}}\uplus\mathbb{B})\!}\left(I\right)=1$
and $\mathcal{D}(\theta)=I$. Enumerating $I_{\texttt{-}}\uplus\mathbb{B}=\{\ell_{1:P}\}$,
and $Z=\{z_{1:M}\}$, $\gamma$ can be represented as a \textit{P}-tuple
in $\{-1$:$M\}^{P}$. Further, abbreviating
\begin{equation}
\!\!\eta^{(i)\!}(j)\triangleq\begin{cases}
\!1-\bigl\langle p_{\texttt{-}}^{(\xi)}P_{S}\bigr\rangle(\ell_{i}), & \!\!\!\!\ell_{i}\in I_{\texttt{-}},j<0\text{ }\\
{\textstyle \!\int}\bigl\langle\!\Lambda_{S}^{\!(j)\!}(x|\cdot)p_{\texttt{-}}^{(\xi)\!}(\cdot)\!\bigr\rangle(\ell_{i})dx, & \!\!\!\!\ell_{i}\in I_{\texttt{-}},j\geq0\\
\!1-P_{B}(\ell_{i}), & \!\!\!\!\ell_{i}\in\mathbb{B},j<0\\
{\textstyle \!\int}\Lambda_{B}^{(j)}(x,\ell_{i})dx, & \!\!\!\!\ell_{i}\in\mathbb{B},j\geq0
\end{cases}\!\!,\label{eq:eta-1}
\end{equation}
(the dependence on $\xi$, $I_{\texttt{-}}$, and $Z$ are suppressed)
the weight increment (\ref{e:GLMB_JPU_3-1}) can be expressed as \cite{VoVoH2017}
\begin{equation}
w_{Z}^{\left(\xi,I_{\texttt{-}},\theta,I\right)}=\omega(\gamma)\triangleq\boldsymbol{1}_{{\Gamma}}(\gamma)\prod\limits_{i=1\!}^{P}\eta^{(i)}(\gamma(\ell_{i})).\label{eq:thetajoint_dis}
\end{equation}
The values of $\eta^{(i)}(j)$ is precomputed as the $P\!\times\!(M+2)$
\textit{association score matrix} shown in Fig.~\ref{fig:matrix},
and the goal becomes finding a set of $\gamma$ with significant $\omega(\gamma)$.

\begin{figure}
\centering\includegraphics{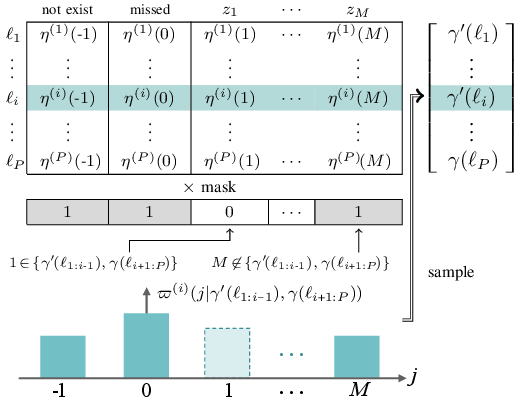} \caption{\protect\label{fig:matrix}Association score matrix and conditionals
(at time $k$). The $i$-th unnormalized conditional $\varpi^{(i)}$
is determined by taking the $i$-th row of the association score matrix,
i.e., $\eta^{(i)}(j)$, $j=-1:M$, and set $\eta^{(i)}(j)=0$ for
each positive $j$ associated with a label other than $\ell_{i}$.
\textcolor{blue}{}}
\vspace{-0.1cm}
\end{figure}

\subsubsection{Ranked Assignment}

The $K$ best (extended) associations in non-increasing order of $\omega(\gamma)$
can be obtained without exhaustive enumeration by solving a \textit{ranked
assignment problem}. Here, each association $\gamma\in\Gamma$ is
equivalently represented by a $P\!\times\!(M+2P)$ \emph{assignment
matrix} $S$ whose entries are either $0$ or $1$, with every row
summing to $1$, and every column summing to either $1$ or $0$.
The objective is to find the $K$ assignment matrices with smallest
$\text{trace}(S^{T}C)$, where $C$ is the $P\!\times\!(M+2P)$ \emph{cost
matrix }constructed from the association score matrix in Fig.~\ref{fig:matrix}
by \cite{VoVoH2017}
\begin{equation}
C_{i,j}=\begin{cases}
-\ln\eta^{(i)}(j), & j\in\{1:M\}\\
-\ln\eta^{(i)}(0), & j=M+i\\
-\ln\eta^{(i)}(-1), & j=M+P+i\\
\infty, & \text{otherwise}
\end{cases}.\label{eq:Ass_Cost_Matrix1}
\end{equation}
Since $\exp\!\left(-\text{trace}(S^{T}C)\right)=\prod_{i=1\!}^{P}\eta^{(i)}(\gamma_{i})=\omega(\gamma)$,
the $K$ best assignment matrices correspond to the $K$ associations
with largest weights \cite{VoVoH2017}. A GLMB filter implementation
with $\mathcal{O}\left(K(M+2P)^{4}\right)$ complexity was proposed
in \cite{hoang2015fast} using Murty's algorithm \cite{Murty68}
to solve the ranked assignment problems. More efficient algorithms
\cite{miller1997optimizing,pedersen2008algorithm} can reduce the
complexity to $\mathcal{O}\left(K(M+2P)^{3}\right)$. 

The initial GLMB filter implementation truncates the predicted and
updated multi-object densities at each time \cite{VoConj13,VoVoP14}.
Since truncation of the prediction is separated from the update, information
from the observation is not exploited, and a significant portion of
the predicted components subsequently generate updated components
with negligible weights, which waste computations. The joint prediction
and update avoids this problem, thus improving computational speed
\cite{VoVoH2017}. For the LMB special case, belief propagation can
also be exploited for a fast implementation \cite{Kropfreiteretal19}.
A drawback of using ranked assignment is the high computational cost
of generating a sequence of components ordered by their weights, whilst
such ordering is not needed in the GLMB approximation. 
\begin{figure*}
\begin{centering}
\resizebox{180mm}{!}{\includegraphics[clip]{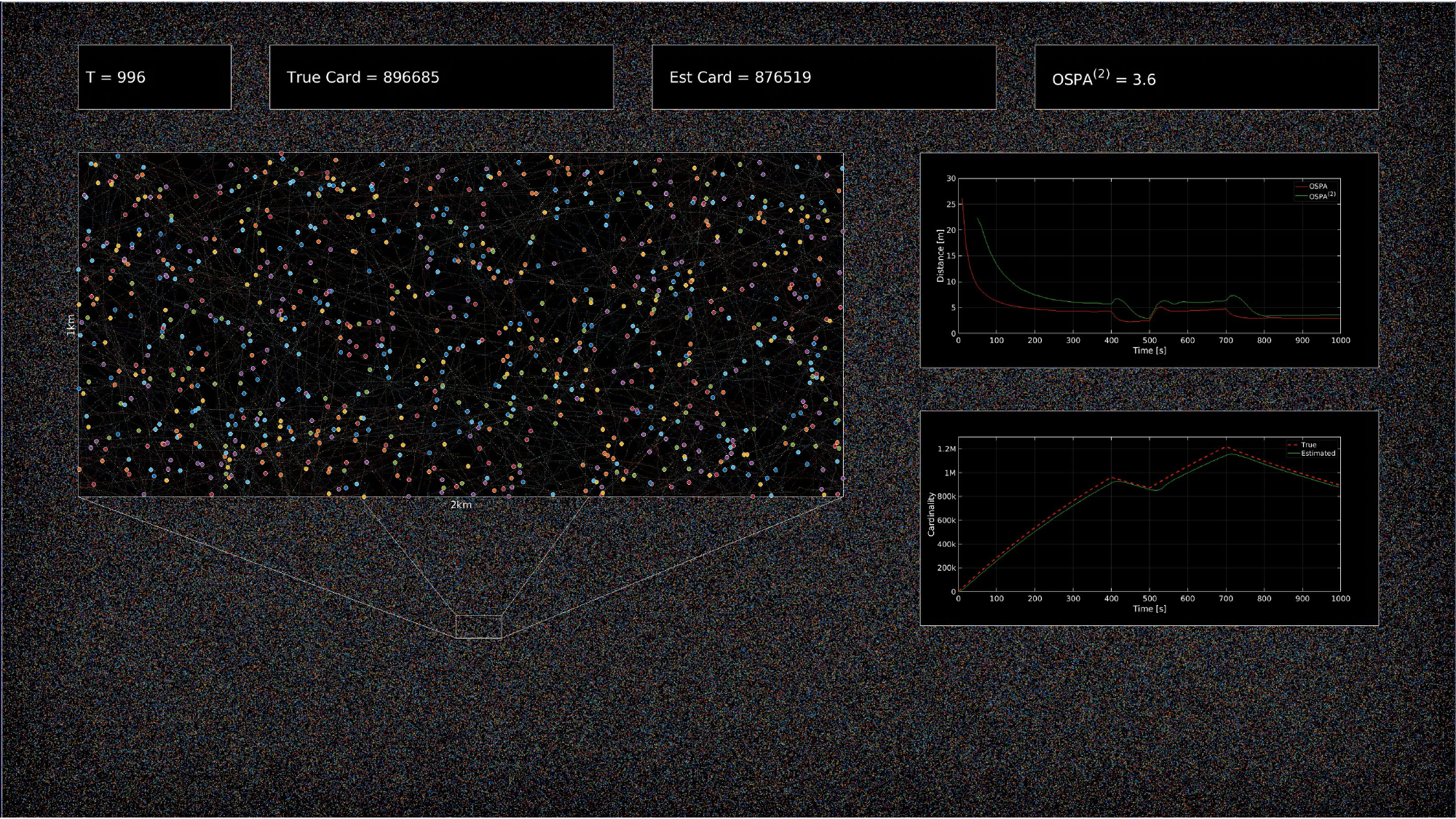}}
\par\end{centering}
\caption{Large-scale GLMB filter tracking over 1 million objects in a 64km$\times$36km
region \cite{Beard18-largescale}. The insets show a magnified 2km$\times$1km
region, OSPA/OSPA\protect\textsuperscript{(2)} errors, and cardinality.
Objects can appear anywhere, at a rate of 200-3000 per frame (unknown
to filter). The object's position and velocity follow a linear Gaussian
motion model with 0.2ms\protect\textsuperscript{-2} process noise
standard deviation, and 0.88 detection probability. Position observations
are corrupted by Gaussian noise with a 5m standard deviation, and
clutter, uniformly distributed on the region at an average of 460800
points. About 1 billion data points accumulate over 1000 frames,
each contains about 1 million objects on average, peaking to 1,217,531
at frame 700.}
\label{fig:large-scale}
\end{figure*}

\subsubsection{Gibbs Sampling\protect\label{sss:SGS-GLMB}}

An efficient way to generate significant GLMB components is to sample
from some probability distribution $\varpi$ on $\{-1\!:\!M\}^{P}$
\cite{VoVoH2017}. To ensure that mostly high-weight positive 1-1
\textit{P}-tuples are sampled, $\varpi$ is constructed so that only
positive 1-1 \textit{P}-tuples have positive probabilities, and those
with high weights are more likely to be chosen than those with low
weights. An obvious choice of $\varpi$ is (\ref{eq:thetajoint_dis}). 

Gibbs Sampling (GS) is an efficient special case of the Metropolis-Hasting
algorithm for sampling from an unnormalized distribution $\varpi$
\cite{geman1984stochastic,Cassella92}. This algorithm constructs
a Markov chain that sequentially generates a new iterate $\gamma^{\prime}$
from the current iterate $\gamma$. When the chain runs for long
enough, i.e., past the burn-in stage, subsequent samples would be
distributed according to the target distribution $\varpi$. Unlike
sampling for posterior inference, in GLMB filtering it is not necessary
to discard burn-ins  because all distinct positive 1-1 \textit{P}-tuples
will reduce the $L_{1}$-approximation error. 

The GLMB truncation developed in \cite{VoVoH2017} employs the classical
Systematic-scan GS (SGS) strategy, wherein the next iterate $\gamma^{\prime}$
 is generated from the current iterate $\gamma$ by sampling each
coordinate $\gamma^{\prime}(\ell_{i})$ of $\gamma^{\prime}$, for
$i=$ $1\!:\!P$, from the so-called the $i$-th \textit{conditional}
$\varpi^{(i)}(\cdot|\gamma^{\prime}(\ell_{1:i\texttt{-}1}),\gamma(\ell_{i\texttt{+}1:P}))$
on $\{-1\!:\!M\}$ \cite{geman1984stochastic,Cassella92}. It was
shown in \cite{VoVoH2017} that to arrive at the target distribution
(\ref{eq:thetajoint_dis}), the required unnormalized $i$-th conditional
is given by the $\!i$-th row of the association score matrix after
zeroing the entry of every positive $j$ that has been paired with
a label other than $\ell_{i}$, see Fig.~\ref{fig:matrix} for illustration.
More concisely, $\varpi^{(i)}(j|\gamma^{\prime}(\ell_{1:i\texttt{-}1}),\gamma(\ell_{i\texttt{+}1:P}))\propto\eta^{(i)}(j)$,
except at each positive $j$ in the set $\{\gamma^{\prime}(\ell_{1:i\texttt{-}1}),\gamma(\ell_{i\texttt{+}1:P})\}$
of values associated with labels other than $\ell_{i}$, wherein $\varpi^{(i)}(j|\gamma^{\prime}(\ell_{1:i\texttt{-}1}),\gamma(\ell_{i\texttt{+}1:P}))=0$.
This remarkably simple result provides an inexpensive way to generate
significant positive 1-1 tuples, and was exploited to implement SGS
GLMB truncation in \cite{VoVoH2017} with an $\mathcal{O}(TP^{2}M)$
complexity, but later reduced to $\mathcal{O}(TPM)$ in \cite{SVVOM2022Linear},
where $T$ is the number of iterates of the Gibbs sampler. Note that
starting with any positive 1-1 tuple, e.g., all 0's or all -1's, all
subsequent iterates are positive 1-1.

In \cite{VoVoBeard19}, SGS was extended to address the NP-hard multi-sensor
GLMB truncation problem with linear complexity in the total number
of detections across all sensors. This SGS multi-sensor GLMB filter
implementation also applies to approximations such as the LMB and
marginalized GLMB filters since these filters require full GLMB updates
\cite{Reuter2014}, \cite{Fantacci2016}. Such a multi-sensor LMB
filter implementation has been extended to address partially overlapping
fields of views in \cite{WXLBC2023Centralized}. 

Following the development of SGS GLMB truncation, other sampling-based
techniques have been proposed. In \cite{YangWW-18}, the positive
1-1 requirement was neglected to achieve linear complexity in $P$,
albeit at a degradation in performance. A herded GS implementation
of the LMB filter was proposed in \cite{wolf2020deterministic}, which
turned out to be slower than the SGS implementation. Spatial search
and approximation by a product of smaller GLMBs (see Subsection \ref{subsec:LRFS-approx})
that run in parallel were proposed in \cite{Beard18-largescale} to
reduce computation times. While this approach was  demonstrated to
track in excess of one million objects from approximately 1.3 million
detections at a time, see Fig. \ref{fig:large-scale}, the complexity
has not been reduced. Cross-entropy based solutions \cite{Nguyen2014}
were developed in \cite{YuSaucanetal17} and \cite{SV2018Distributed},
for multi-sensor GLMB filtering and its distributed version, but with
higher complexity than the SGS implementation in \cite{VoVoBeard19}.
The recently developed tempered GS (TGS) technique reduces the complexity
to $\mathcal{O}(T(P+M))$ \cite{SVVOM2022Linear}.

\subsection{Multi-Scan GLMB Filter/Smoother\protect\label{ss:Multi-object_Smoother}}

Similar to the (multi-object) Bayes filtering recursion, the posterior
recursion (\ref{eq:MO_posterior}) admits an analytic solution in
the form of a multi-scan GLMB. This subsection presents the multi-scan
GLMB recursion under the standard multi-object SSM model (with LMB
birth for simplicity). For convenience, when we write $\{P(\xi,\theta,I_{j:k})\!:\!(\xi,\theta,I_{j:k})\}$,
it is understood that the variables $\xi$, $\theta$, and $I_{j:k}$,
respectively, range over the spaces $\Xi$, $\Theta$, and $\cprod_{i=j}^{k}\mathcal{F}(\mathbb{L}_{i})$,
unless otherwise stated. 

The multi-scan GLMB is closed under the Bayes posterior recursion
(\ref{eq:MO_posterior}). Indeed, it is closed under: the prediction,
defined as $\boldsymbol{\pi}_{0:n}(\boldsymbol{X}_{0:n})\!=\boldsymbol{f}_{\!n}\!\left(\boldsymbol{X}_{n}|\boldsymbol{X}_{n\texttt{-}1}\right)\boldsymbol{\pi}_{0:n\texttt{-}1}(\boldsymbol{X}_{0:n\texttt{-}1})$;
and the Bayes update since it is a GLMB where the argument is a set
of labeled trajectories \cite{VoVomultiscan18}. Thus, starting with
an initial multi-scan GLMB prior, the multi-object prediction and
posterior at any time are multi-scan GLMBs.

The multi-scan GLMB recursion (or GLMB smoothing recursion) \cite[eq. (42)-(46)]{VoVomultiscan18}
propagates the multi-scan GLMB 
\begin{equation}
\boldsymbol{\pi}_{0:n\texttt{-}1}=\{(w_{\texttt{-}}^{(\xi)\!}(I_{0:n\texttt{-}1}),p_{\texttt{-}}^{(\xi)})\!:\!(\xi,I_{0:n\texttt{-}1})\},\label{eq:MSGLMB-prev}
\end{equation}
at time $n-1$ to the multi-scan GLMB
\begin{equation}
\boldsymbol{\pi}_{0:n}(\cdot|Z)\propto\{(w_{Z}^{(\xi,\theta)}(I_{0:n}),p_{Z}^{(\xi,\theta)})\!:\!(\xi,\theta,I_{0:n})\},\label{eq:MSGLMBBayesupdate}
\end{equation}
at time $n$, where
\begin{align}
w_{Z}^{(\xi,\theta)}(I_{0:n})= & \,w_{Z}^{\left(\xi,I_{n\texttt{-}1},\theta,I_{n}\right)}w_{\texttt{-}}^{(\xi)\!}(I_{0:n\texttt{-}1}),\label{eq:eq:MSGLMBBayesupdate-1}\\
p_{Z}^{(\xi,\theta)\!}(x_{s(\ell):t(\ell)},\ell)\propto\label{eq:eq:MSGLMBBayesupdate-2}\\
 & \!\!\!\!\!\!\!\!\!\!\!\!\!\!\!\!\!\!\!\!\!\!\!\!\!\!\!\!\!\!\!\!\!\!\!\left\{ \!\!\!\begin{array}{ll}
{\scriptstyle {\displaystyle p_{\texttt{-}}^{(\xi)\!}(x_{s(\ell):t(\ell)},\ell),}} & {\scriptstyle {\displaystyle \!\!\!\!t(\ell)<n\texttt{-}1}}\\
{\scriptstyle {\displaystyle (1\texttt{-}P_{S}(x_{t(\ell)},\ell))p_{\texttt{-}}^{(\xi)\!}(x_{s(\ell):t(\ell)},\ell),}} & {\scriptstyle {\displaystyle \!\!\!\!t(\ell)=n\texttt{-}1}}\\
{\scriptstyle {\displaystyle \Lambda_{S}^{\!(\theta(\ell))\!}(x_{n}|x_{n\texttt{-}1\!},\ell)p_{\texttt{-}}^{(\xi)\!}(x_{s(\ell):n\texttt{-}1},\ell),}} & {\scriptstyle {\displaystyle \!\!\!\!s(\ell)<t(\ell)=n}}\\
{\scriptstyle {\displaystyle \Lambda_{B}^{\!(\theta(\ell))\!}(x_{n},\ell),}} & {\scriptstyle {\displaystyle \!\!\!\!s(\ell)=t(\ell)=n}}
\end{array}\right.\!\!\!,\nonumber 
\end{align}
$w_{Z}^{\left(\xi,I_{n\texttt{-}1},\theta,I_{n}\right)}$ is the
weight increment given by (\ref{e:GLMB_JPU_3-1}), $s(\ell)$ and
$t(\ell)$ are, respectively, the earliest and latest times that $\ell$
exists on the time window $\{0\!:\!n\}$, $\Lambda_{S}^{\!(j)\!}(x_{n}|x_{n\texttt{-}1\!},\ell)$
and $\Lambda_{\!B}^{\!(j)\!}(x_{n},\ell)$ are, respectively, given
by (\ref{e:GLMB_JPU_6-1}) and (\ref{e:GLMB_JPU_7-1}). 

Observe the similarity between the GLMB filter (\ref{e:GLMB_JPU_0-1})-(\ref{e:GLMB_JPU_7-1})
and the multi-scan GLMB filter. Indeed, the latter is (algebraically)
simpler and more intuitive than the former since no marginalization
is needed. The new posterior GLMB weights are simply the old weights
scaled by their corresponding weight increments. Further, given the
old posterior density $p_{\texttt{-}}^{(\xi)\!}(\cdot,\ell)$ of trajectory
$\ell$ at time $n-1$: if $\ell$ died before time $n-1$, then its
new posterior is $p_{\texttt{-}}^{(\xi)\!}(\cdot,\ell)$; if $\ell$
died at time $n-1$, then its new posterior is $p_{\texttt{-}}^{(\xi)\!}(\cdot,\ell)$
times the death probability; if $\ell$ (born before time $n$) survives
to time $n$, then its new posterior is $p_{\texttt{-}}^{(\xi)\!}(\cdot,\ell)$
times the survival probability, transition density, and the SNR; if
$\ell$ is born at time $n$, then its new posterior is the probability
of birth times the attribute density of its new state, and the SNR. 

Similar to the GLMB filter, the number of components of the GLMB posterior
grows super-exponentially with time, and truncation by retaining a
prescribed number of components with highest weights minimizes the
$L_{1}$-norm approximation error \cite{VoVomultiscan18}. Unlike
the GLMB filter, the multi-scan GLMB truncation problem requires solving
large-scale \textit{multi-dimensional ranked assignment problems},
which is NP-hard for more than two dimensions.
\begin{figure*}
\begin{centering}
\resizebox{180mm}{!}{\includegraphics[width=18cm]{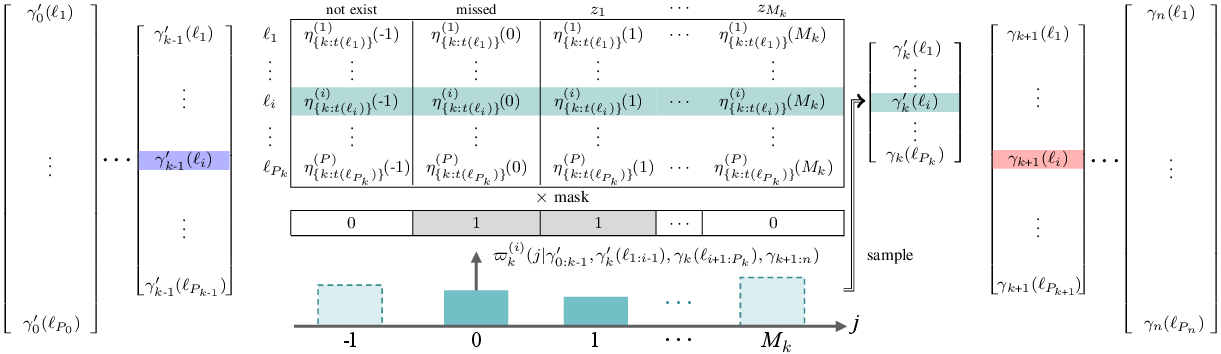}}\vspace{0.0cm}
\par\end{centering}
\caption{Multi-scan association score matrix and conditionals at time $k$.
The $i$-th unnormalized conditional $\varpi_{k}^{(i)}$ is determined
by taking the $i$-th row of the $k$-th association score matrix,
i.e., $\eta_{\{k:t(\ell_{i})\}}^{(i)}(j)$, $j=-1\!:\!M_{k}$, and
set $\eta_{\{k:t(\ell_{i})\}}^{(i)}(j)=0$ for: each positive $j$
associated with a label other than $\ell_{i}$; and any negative $j$,
if $\ell_{i}$ is a live label at time $k+1$.}
\label{fig:multi-scan-gibbs}
\end{figure*}

\subsection{Multi-Scan GLMB Filter Implementation\protect\label{ss:Multi-object_Smoother-1}}

An SGS multi-scan GLMB truncation technique has been developed for
the multi-scan GLMB filter implementation in \cite{VoVomultiscan18}.
Similar to the single-scan counter-part, we only need to consider
$(\theta,I_{k})$ represented by the (extended) association $\gamma_{k}\!:\!I_{k\texttt{-}1}\uplus\mathbb{B}_{k}\!\rightarrow\!\{-1\!:\!\left|Z_{k}\right|\}$
defined in (\ref{eq:gamma}). Note that the domain $\mathcal{D}(\theta)$
is given by $\mathcal{L}(\gamma_{k})\triangleq\{\ell\in I_{k\texttt{-}1}\uplus\mathbb{B}:\gamma_{k}(\ell)\geq0\}$,
called the \emph{live labels} of $\gamma_{k}$.  Assuming all observations
follow the standard model, $(\theta,I_{n})$ and $(\xi,I_{0:n\texttt{-}1})$
can be equivalently represented by $\gamma_{n}$ and $\gamma_{0:n\texttt{-}1}$.
Further, for each $k\in\{0\!:\!n\}$, enumerating $\mathcal{L}(\gamma_{k\texttt{-}1})\uplus\mathbb{B}_{k}=\{\ell_{1:P_{k}}\}$
and $Z_{k}=\{z_{1:M_{k}}\}$, we can represent $\gamma_{k}$ as a
\textit{P}\textsubscript{\textit{k}}-tuple in $\{-1\!:\!M_{k}\}^{P_{k}}$. 

Denoting the weight $w_{Z}^{(\xi,\theta)}(I_{0:n})$ by $\omega_{0:n}(\gamma_{0:n})$
to convey the dependence on $\gamma_{0:n}$, and applying the weight
propagation (\ref{eq:eq:MSGLMBBayesupdate-1}) iteratively to an initial
multi-scan GLMB $\boldsymbol{\pi}_{0}=\{(\omega_{0}(\gamma_{0}),p_{0}^{(\gamma_{0})})\!:\!\gamma_{0}\in\Gamma_{0}\}$
yields
\begin{align}
\!\omega_{0:n}(\gamma_{0:n}) & =\prod\limits_{k=1}^{n}\!\left[\boldsymbol{1}_{\Gamma_{k}}^{\!(\gamma_{k\texttt{-}1})\!}(\gamma_{k})\prod\limits_{i=1\!}^{P_{k}}\eta_{k}^{(i)\!}(\gamma_{k}(\ell_{i}))\right]\!\omega_{0}(\gamma_{0}),\label{eq:MSGLMBCannonical1-3}
\end{align}
where $\boldsymbol{1}_{\Gamma_{k}}^{\!(\gamma_{k\texttt{-}1})\!}(\gamma_{k})\!\triangleq\!\boldsymbol{1}_{\Gamma_{k}\!}(\gamma_{k})\boldsymbol{1}_{\mathcal{F}(\mathcal{L}(\gamma_{k\texttt{-}1})\mathbf{\uplus}\mathbb{B}_{k})\!}(\mathcal{L}(\gamma_{k})_{\!})$,
and $\eta_{k}^{(i)\!}(j)$ is defined in (\ref{eq:eta-1}) with the
time $k$ suppressed (keeping in mind its implicit dependence on $\xi=\gamma_{0:k\texttt{-}1}(\ell_{i})$).
Following the GS strategy, the objective is to generate significant
association histories by sampling from some discrete probability distribution
$\varpi$ on $\cprod_{k=0}^{n}\{-1\!:\!M_{k}\}^{P_{k}}$, preferably
$\varpi=\omega_{0:n}$, so that only valid association histories have
positive probabilities, and those with high weights are more likely
to be chosen. 

\subsubsection{Sequential Factor Sampling}

Decomposing $\varpi$ as
\begin{equation}
\varpi(\gamma_{0:n})=\prod\limits_{k=1}^{n}\varpi_{k}(\gamma_{k}|\gamma_{0:k\texttt{-}1})\varpi_{0}(\gamma_{0}),\label{eq:jointpi}
\end{equation}
and choosing the factors $\varpi_{0}=\omega_{0}$, 
\[
\varpi_{k}(\gamma_{k}|\gamma_{0:k\texttt{-}1})\propto\boldsymbol{1}_{\Gamma_{k}}^{(\gamma_{k\texttt{-}1})}(\gamma_{k})\prod\limits_{i=1\!}^{P_{k}}\eta_{k}^{(i)}(\gamma_{k}(\ell_{i})),
\]
for $k=1\!:\!n$, yields $\varpi=\omega_{0:n}$. Hence, a simple method
to sample from (\ref{eq:MSGLMBCannonical1-3}) is to sample $\gamma_{0}$
from $\varpi_{0}$, and then for $k=1\!:\!n$, sample $\gamma_{k}\!:\!\mathcal{L}(\gamma_{k\texttt{-}1})\uplus\mathbb{B}_{k}\!\rightarrow\!\{-1\!:\!\left|Z_{k}\right|\}$
from $\varpi_{k}(\cdot|\gamma_{0:k\texttt{-}1})$. This sequential
generation of $\gamma_{0:n}$ ensures $\boldsymbol{1}_{\mathcal{F}(\mathcal{L}(\gamma_{k\texttt{-}1})\mathbf{\uplus}\mathbb{B}_{k})\!}(\mathcal{L}(\gamma_{k})_{\!})_{\!}=1$,
and hence
\begin{equation}
\varpi_{k}(\gamma_{k}|\gamma_{0:k\texttt{-}1})\propto\boldsymbol{1}_{\Gamma_{k}}(\gamma_{k})\prod\limits_{i=1\!}^{P_{k}}\eta_{k}^{(i)}(\gamma_{k}(\ell_{i})),\label{eq:conditionals0-1}
\end{equation}
for each $k\in\{1\!:\!n\}$. Sampling from (\ref{eq:conditionals0-1})
can be accomplished using the SGS technique described in Subsection
\ref{sss:SGS-GLMB}. While any $\gamma_{0:n}$ generated by this method
is a valid association history \cite{VoVomultiscan18}, to ensure
that $\gamma_{0:n}$ is distributed according to (\ref{eq:jointpi}),
it is necessary to run the Gibbs sampler for each $k$ long enough
so that $\gamma_{k}$ is distributed according to (\ref{eq:conditionals0-1}).
Nonetheless, sequential factor sampling can be used to generate good
starting points for the Markov chains in full GS. 

\subsubsection{SGS}

Sampling from $\varpi$ via SGS involves constructing a Markov chain
where a new iterate $\gamma_{1:n}^{\prime}$ is generated from $\gamma_{1:n}$
by sampling the coordinates $\gamma_{k}^{\prime}(\ell_{i})$, $k=1\!:\!n$,
$i=1\!:\!P_{k}$ of $\gamma_{1:n}^{\prime}$, from the conditional
distributions $\varpi_{k}^{(i)}$ defined by
\begin{eqnarray}
 &  & \!\!\!\!\!\!\!\!\!\!\!\!\!\!\!\!\!\!\!\!\!\!\!\!\!\!\!\!\varpi_{k}^{(i)}(j|\overset{\text{past}}{\overbrace{\gamma_{0:k\texttt{-}1}^{\prime}}},\overset{\text{current (processed)}}{\overbrace{\gamma_{k}^{\prime}(\ell_{1:i\texttt{-}1})}},\overset{\text{current (unprocessed)}}{\overbrace{\gamma_{k}(\ell_{i\texttt{+}1:P_{k}})}},\overset{\text{future}}{\overbrace{\gamma_{k\texttt{+}1:n}\phantom{(}}})\nonumber \\
\text{ \ } & \!\!\!\!\propto & \!\!\!\varpi(\gamma_{0:k\texttt{-}1}^{\prime},\gamma_{k}^{\prime}(\ell_{1:i\texttt{-}1}),j,\gamma_{k}(\ell_{i\texttt{+}1:P_{k}}),\gamma_{k\texttt{+}1:n}).\label{eq:conditional-full-Gibbs}
\end{eqnarray}
Similar to the single-scan case, the conditional $\varpi_{k}^{(i)}$
is determined from the association score matrices at times $k$ and
$k+1$, see Fig.~\ref{fig:multi-scan-gibbs}, which can be pre-computed
as follows. 

Recall from (\ref{eq:eta-1}) that $\eta_{k}^{(i)\!}(j_{k})$ depends
on $\xi$, specifically, the indices $j_{s(\ell_{i}):k\texttt{-}1}$ of the
detections associated with $\ell_{i}$ up to time $k-1$ (since all
observations follow the standard model). To express this dependence
explicitly we write $\eta_{k}^{(i)\!}(j_{k})$ as $\eta_{k}^{(i)\!}(j_{s(\ell_{i}):k})$.
The $(i,j)$-th entry of the multi-scan association score matrix for
time $k$ is given by
\begin{align*}
\eta_{\{k:t(\ell_{i})\}}^{(i)}(j) & \triangleq\prod\limits_{m=k}^{t(\ell_{i})}\eta_{m}^{(i)}(\gamma_{0:k\texttt{-}1}^{\prime}(\ell_{i}),j,\gamma_{k\texttt{+}1:m}(\ell_{i})){\color{green}\mathpunct{\color{black},}}\label{eq:eta-multiscan}
\end{align*}
where $t(\ell_{i})$ is the latest time $\ell_{i}$ exists on $\{0\!:\!n\}$,
and by convention, $\eta_{k}^{(i)\!}(\gamma_{0:k\texttt{-}1}^{\prime}(\ell_{i}),j,\gamma_{k\texttt{+}1:k}(\ell_{i}))=\eta_{k}^{(i)\!}(\gamma_{0:k\texttt{-}1}^{\prime}(\ell_{i}),j)$.

It was shown in \cite{VoVomultiscan18} that to arrive at the target
distribution (\ref{eq:MSGLMBCannonical1-3}), the unnormalized conditional
$\varpi_{k}^{(i)}$ is given by the $\!i$-th row of the $k$-th association
score matrix after zeroing the entry of: every positive $j$ that
has been paired with a label other than $\ell_{i}$ at time $k$ (like
the single-scan case); and any negative $j<\gamma_{k\texttt{+}1}(\ell_{i})$
(a surviving label at time $k+1$ must be live at time $k,$ i.e.,
$\gamma_{k}^{\prime}(\ell_{i})>-1$, because the standard multi-object
dynamic model only admits unfragmented trajectories), see Fig.~\ref{fig:multi-scan-gibbs}
for illustration. This simple result admits a tractable SGS multi-scan
GLMB truncation algorithm for the GLMB smoother implementation in
\cite{VoVomultiscan18}.

This multi-scan GLMB truncation technique has been extended to multiple
sensors in \cite{MVVS2022Multi}, and demonstrated on a multi-object
smoothing problem with 100 scans and 4 sensors, which requires solving
400-dimensional ranked assignment problems with approximately 10 variables
in each dimension. Moving window-based implementations can achieve
a fixed $\mathcal{O}(LTVP^{2}M)$ complexity per time step using SGS
\cite{MVVS2022Multi}, or $\mathcal{O}(LTV(P+M))$ using TGS, where
$L$ is the length of the smoothing window, $T$ is the number of
iterates of the Gibbs sampler, $V$ is the number of sensors, $P$
is the number of hypothesized trajectories, and $M$ is the maximum
number of detections per sensor.

\subsection{Multi-Object Estimation with Non-Standard Models\protect\label{ss:Non-Standard-model}}

The standard multi-object dynamic model assumes conditionally independent
individual transitions, and is sufficient to cover sophisticated single-object
motion, e.g., motion with road constraints \cite{YCS2023Road}, or
multi-modal motion \cite{ReuterMMLMB15,Punchihewa16,YiJiangHoseinnzhad17,LH2019Labeled,WDZSC2019Fast,Cao-SP-21}.
However, it is not adequate to capture inter-object correlations,
which requires more general multi-object transition, e.g., interacting
objects \cite{Gostar-Interacting-19,IKBPH2022Interaction}, group
targets \cite{LC2019Resolvable,XHXML2021Resolvable,LLH2023Labeled},
and spawning/dividing objects\textit{ }\cite{Bryantetal18}, \cite{Nguyenetal-CellSpawning21}.
In the latter, we also seek the ancestries of the objects, which is
easily accomplished in the LRFS formulation by augmenting the parents'
labels into the object's labels, see Fig. \ref{fig:cell-div-model}.
Consequently, the multi-object filtering density is no longer a GLMB,
and the exact solution is intractable. M-GLMB approximations (see
Subsection \ref{subsec:LRFS-approx}) are used to capture inter-object
correlations and ancestries in the multi-object trajectory \cite{Bryantetal18},
\cite{Nguyenetal-CellSpawning21}. An approximate GLMB filter that
accommodates objects exiting the state space and reappearing later
(along with a non-standard observation model) was developed in \cite{VNVJ2024Track}.

While the standard multi-object observation likelihood is general
enough to accommodate a range of noise models, e.g., \cite{Dong-AES-19,DJLSW2018Student,ZYLL2018Heavy,Ong-TASLP-21,OVNVMS2022Audio},
it does not cover extended object observations \cite{Beardetal-extended16},
\cite{LLLG2020Improved}, merged/occluded observations \cite{Beardetal14},
\cite{Ong-TPAMI-20}, and image observations \cite{PapiKim-15,G2016Track,MahlerSuperPS18},
and \cite{SaucanPFSP-GLMB17}. 

Under the extended observation model that allows an object to generate
multiple detections, the GLMB filter and smoother are still exact
solutions \cite{Beardetal-extended16}, \cite{GBR2017Extended}, which
can be implemented with Gaussians (and Gaussian mixture) or particle
methods \cite{CJZ2023Multiple}. Extended objects are modeled by Gaussian-Gamma-Inverse-Wishart
distributions in \cite{Beardetal-extended16}, B-splines in \cite{Daniyanetal18},
and multiple ellipses in \cite{LLLG2020Improved}.

The standard observation model cannot accommodate multiple objects
sharing detections due to occlusions. In this case, the detection
probability of an object depends on the entire multi-object state,
and requires a non-standard observation model \cite{Beardetal14},
\cite{Ong-TPAMI-20}, \cite{VNVJ2024Track}. The resultant multi-object
filtering density is not a GLMB, and M-GLMB or LMB approximations
have been used in \cite{Beardetal14}, \cite{Saucan-MergeM-20}. In
\cite{Ong-TPAMI-20,VNVJ2024Track}, efficient GLMB approximations
based on the predicted states, were used to update the individual
detection probabilities to account for occlusions, while in \cite{Dai-TMech-20}
the dependence of the detection probability on other states are marginalized
out.
\begin{figure}[t]
\begin{centering}
\resizebox{88mm}{!}{\includegraphics[clip]{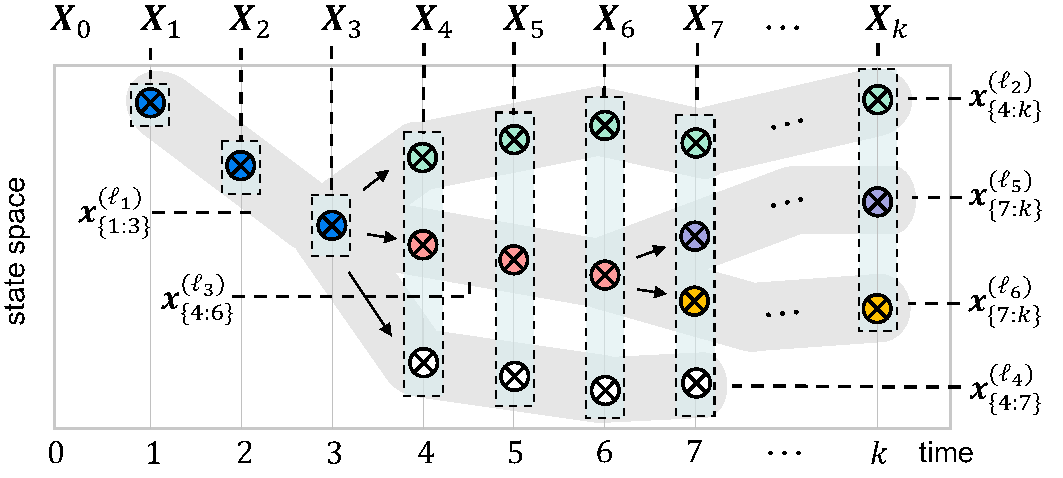}}
\par\end{centering}
\caption{Labels capture ancestries of object in spawnings. A daughter's label
$(\ell,k,\iota)$ consists of the parent label $\ell$, the time of
birth $k$, and an individual identity $\iota$ to distinguish it
amongst the siblings. At time $4$, $\ell_{1}$ spawns 3 daughters
with labels $\ell_{2}=(\ell_{1},4,1)$, $\ell_{3}=(\ell_{1},4,2)$,
and $\ell_{4}=(\ell_{1},4,3)$. At time $7$, $\ell_{3}$ spawns $\ell_{5}=(\ell_{3},7,1)$
and $\ell_{6}=(\ell_{3},7,2)$.}
\label{fig:cell-div-model}\vspace{-0cm}
\end{figure}

In applications where the multi-object state generates a single observation,
inter-object correlations are introduced through the update into the
multi-object filtering density which no longer takes on a GLMB form.
The particle implementations for a general observation model proposed
in \cite{PapiKim-15,G2016Track} are computationally demanding, and
the M-GLMB approximation has been developed in \cite{PapiVoVoetal15}.
Efficient approximations exploiting the additivity of superpositional
observations are also proposed in \cite{SaucanPFSP-GLMB17,MahlerSuperPS18},
\cite{DZSZ2023Modified}. This approach was combined with pseudo-smoothing
to address acoustic vector sensor observations in \cite{ZBYC2023DOA}.
A parallelizable LMB approximation is developed in \cite{LiYiHoseinnezhadetal18},\textit{
}while GLMB filtering and smoothing approximations are proposed in
\cite{CZ2022Generalized,CZ2022Multi}, \cite{YY2020Track} (the former
also covers extended object observations). An LMB filter for pre-clustered
laser range finder image data is proposed in \cite{Dai-VT-10}. A
salient advantage of LRFS in MOT over legacy approaches such as MHT,
is the seamless operation with different observation types such as
detections and images. Indeed, a GLMB filter has been applied to
an observation model that switches between detections and image observation
in \cite{KimVoVo19}. Non-standard models have also been developed
for applications with unknown model parameters. These are included
in the discussion on robust multi-object estimation.

\subsection{Robust Multi-Object Estimation\protect\label{ss:Robust-Est}}

Multi-object filtering solutions have also been developed to address
unknown multi-object model parameters. In \cite{Punchihewa-18},
the unknown detection probability is jointly estimated by augmenting
it to the attribute state, while the unknown clutter rate is addressed
by treating clutter as a different class of objects \cite{MahlerClutter18}.
This formulation results in a GLMB filter with jump Markov single-object
dynamic, which incurs extra computational complexity \cite{Punchihewa-18}.
A more efficient suboptimal alternative is the bootstrapping approach
that uses robust CPHD or multi-Bernoulli filters \cite{Mahler11,VVHM2013Robust}
to estimate the detection probability and clutter rate from the observations,
and feed them to the GLMB filter \cite{Do-Adaptive-20,Do-Robust-21},
\cite{Kim-IET-18}. In \cite{ZLYX2019Robust,HLTXD2021Robust,LWLGZ2020Robust,Scheel-ITS-19},
unknown observation noise parameters is addressed by augmenting the
object attributes with noise covariance matrices to be jointly estimated
by integrating Variational Bayes techniques to the multi-object filter. 

Unknown birth model parameters can be addressed by the measurement-driven
birth approach \cite{Reuter2014}, where current detections are used
to predict the birth parameters at the next time. GLMB filtering with
measurement-driven birth is popular for MOT with unknown birth models
\cite{LeGrand18,Lin-Speech-19,CGYZ2020Modeling}. In \cite{Zhu-DSP-21},
a measurement-driven birth model for interval observations was developed,
while in \cite{HBWR2020Detection}, the Rauch-Tung-Striebel smoother
was used to improve the multi-object birth density estimation. An
admissible region birth model for space objects was proposed in \cite{Gaebler-BirthR-20}.
In \cite{Trezza-SP22}, birth parameter estimation was developed for
multi-sensor GLMB filtering, where GS is used to select probable detection
combinations. A fully robust GLMB filter that combines unknown birth
and sensor parameters estimation techniques were developed in \cite{NDN2022Adaptive}.
Around the same time an interesting formulation using outer probability
measure, which allows GLMB filtering in the absence of model parameters
was developed in \cite{Cai-etal-22}. 

\subsection{Distributed Multi-Object Estimation\protect\label{ss:Distributed-Est} }

Encapsulating all information on the multi-object trajectories in
the filtering/posterior densities allows density fusion. In \cite{Fantacci18},
the Generalized Covariance Intersection (GCI) fusion rule \cite{Mahler2000},
was applied to fuse the local labeled multi-object densities $\boldsymbol{\pi}^{(s)}$,
$s\in\{1\!:\!S\}$ into 
\[
\bar{\boldsymbol{\pi}}_{GCI}=\arg\min_{\boldsymbol{\pi}}{\textstyle \sum_{s\in S}}\omega^{(s)}D_{KL}(\boldsymbol{\pi}||\boldsymbol{\pi}^{(s)}),
\]
where each $\omega^{(s)}$ is a user-defined fusion weight. It was
shown that the M-GLMBs or LMBs, are closed under GCI fusion \cite{Fantacci18}.
For LMB local densities, a technique for determining the weights $\omega^{(s)}$
of each of their component was proposed in \cite{Wang18}, using the
information gain at the local update step. The fusion weights can
also be optimized to fuse GLMBs from sensor nodes with different field
of view (FoV) \cite{JL2023Discrepant}. In \cite{SZDJL2022Consensus},
the event-trigger strategy was proposed to reduce the communication
burden in GCI fusion. 

GCI fusion is sensitive to mismatches in the labels due to uncertainty
in the object birth times \cite{LiYi18}. To avoid this mismatch,
GCI was used to fuse  multi-Bernoulli approximations of the unlabeled
versions of the local labeled multi-object densities, and then augment
the labels to the fused multi-object density \cite{LiYi18}. In \cite{Li19},
the label mismatch was quantified by the minimum KL-D between the
unlabeled versions of the local labeled densities. Further, it was
shown that for local LMB densities, minimizing the label mismatch
translates to solving a linear assignment problem. For sensors with
different FoVs, it was proposed in \cite{SDJL2021Consensus} to maintain
a list of label matches and extend the label space of each node to
include labels of objects that cannot be observed by the local sensor.
In \cite{LYXY2023Optimizing}, an additional group variable was introduced
to record matching information among sensors, and fusion was performed
separately on the surviving and newborn objects. 

Exchanging the arguments of the KL-D in GCI fusion results in Minimum
Information Loss (MIL) fusion \cite{Gao-TSP-20}:
\[
\bar{\boldsymbol{\pi}}_{MIL}=\arg\min_{\boldsymbol{\pi}}{\textstyle \sum_{s\in S}}\omega^{(s)}D_{KL}(\boldsymbol{\pi}^{(s)}||\boldsymbol{\pi}).
\]
Unlike GCI fusion, MIL fusion preserves both common and exclusive
information of the local densities. MIL fusion formulae for M-GLMBs
and LMBs that restrict the fused densities to the same family as the
local densities were derived in \cite{Gao-TSP-20}. The label matching
strategy in \cite{Li19} can be used to ensure all local densities
have the same label space.  For sensors with different FoVs, the
local densities can be decomposed into sub-densities with minimum
KL-D from the original ones, and then fused together using MIL \cite{GBC2022Fusion}.
GCI and MIL fusion were combined to exploit their respective advantages
in \cite{Gao-TAES-21,LBCGW2022Distributed}, and to enhance resilient
to cyber-attack in peer-to-peer sensor network in \cite{GBC2023Resilient}.
GCI and MIL fusion are considered as geometric averaging (of the multi-object
densities) approaches. Fusion via arithmetic averaging has also been
proposed as a versatile alternative to geometric averaging in \cite{L2023Arithmetic,LYDF2023Arithmetic}.

Apart from the KL-D based fusion discussed above, there are also other
fusion solutions. In a centralized setting, local densities can be
approximated by Poissons with matching first moments, which are then
fused together using CS-D based fusion \cite{Gostar-SP-20}. In \cite{Gostar-TSP-21},
the authors proposed to fuse the LMB components (from different nodes)
corresponding to a particular label into a single component. In \cite{Herrmann-TSP-21},
Bayes parallel combination rule was used to compute the global posterior
density from synchronous centralized sensors. In \cite{SV2018Distributed},
instead of fusing the local GLMBs, distributed cross-entropy and average
consensus were used to sample multi-sensor assignments with high scores.
An approach that directly fuses the local multi-object estimates was
proposed in \cite{NRVR2021Distributed}, which incurs far lower complexity
than density fusion and can be efficiently extended to multiple scans.
Further, since only the estimates are fused, this fusion algorithm
is not restricted to any tracking algorithms or approaches.

\subsection{Multi-Object Control\protect\label{ss:Control} }

In multi-object control, we seek control signals to drive the multi-object
state/trajectory by optimizing the cost/reward function, subject to
a set of constraints. Cost functions are usually based on information
divergence. Due to its closed-form for the GLMB filtering density,
the CS-D was used as the cost function for a sensor control problem
and void probability constraints in \cite{BVVA2015Sensor,BVVA2017Void}.
Moreover, a sensor management framework for the GLMB filter based
on the CS-D was developed in \cite{MahlerMTMSM19}. A passive sensor
management solution was also proposed for the GLMB filter in \cite{ZhuPOMDP22}.
While the CS-D can be computed analytically for GLMBs, it is still
an expensive task. A cheaper alternative was proposed in \cite{GHRWB2017Constrained}
by using the LMB filter with a cost function based on the CS-D between
the Poisson approximations of the LMBs, which can be computed efficiently
\cite{HVVM2015Cauchy} (see also Subsection \ref{subsec:Info-Div}).
In \cite{Zhang2sensorcontrol18}, R\'{e}nyi-D was also suggested as the
reward function with the GLMB filter. The simple closed-form information
divergence for LMBs in Subsection \ref{subsec:Info-Div} could be
used to accelerate computation of information-based reward functions. 

For multi-sensor tasking, a R\'{e}nyi-D reward was used in \cite{Cai-SSA-19},
while in \cite{WHGRXB2018Multi}, a task-driven reward function with
the LMB filter based on GCI fusion was proposed. In \cite{Panicker-SP-20},
maximizing the task-driven reward was replaced by minimizing the ratio
of non-existence to existence probability for a subset of objects
of interest in the LMB. A solution to this problem with CS-D as the
reward function was given in \cite{Ishtiaq-CSdiv-21}. Methods to
control drones based on R\'{e}nyi-D and CS-D were proposed in \cite{Nguyenetal-PP-19}.
Inspired by the multi-objective formulation for the multi-object localization
problem \cite{ZWL2019Multi}, in \cite{NVVRR2022Multi}, a multi-objective
reward function for the two conflicting tasks of tracking and discovery
with limited FoV sensors was proposed. Using mutual information and
differential entropy objectives, it was also established that the
multi-objective reward is monotone and sub-modular, which allows the
optimal control action to be efficiently computed via greedy algorithms.
Metric based reward functions have also been proposed. In \cite{GostarSensorMan17}
the dispersion of the labeled multi-object density about its statistical
mean using the OSPA distance was used to maximize the accuracy of
the estimates. Similarly, techniques for multi-sensor control/selection
that optimize a metric-based objective function were also proposed
in \cite{LHLH2018Constrained,LHWH2018Sensor}. Further, simultaneous
localization and mapping can also be performed with the LRFS filters
\cite{Deusch15,MoratuwageSLAM-Sensor-19}.

\section{Conclusions\protect\label{s:Conclusion}}

The notion of a State-Space Model (SSM) has been the cornerstone for
modern estimation and control theory, in which the celebrated Kalman
filter is often recognized as one of the seminal contributions. A
multi-object SSM generalizes the state vector to a finite set of vectors
for modeling multi-object systems, wherein both the number of objects
and their individual states are unknown and dynamically varying. Moreover,
to retain the fundamental premise that the system trajectory is given
by the history of the system states, it is necessary to use the labeled
multi-object representation of the system state. Analogous to how
information on a single trajectory is encapsulated in the filtering/posterior
density of the state vectors, a Labeled Random Finite Set (LRFS) enables
information on the multi-object trajectory to be encapsulated in the
multi-object filtering/posterior density.

This article has provided an overview of key developments in the
LRFS approach to multi-object SSM. LRFS resolves the theoretical
drawbacks of the unlabeled RFS formulation for multi-object SSMs,
and offers important advancements. In particular, the LRFS approach: 
\begin{itemize}
\item Enables principled multi-object trajectory estimation with bounded
complexity per time step (without resorting to post-processing heuristics),
resolving an important conceptual shortcoming of unlabeled RFS;
\item Provides statistical characterization of uncertainty for the underlying
multi-object trajectory ensemble, offering new conceptual tools and
applications;
\item Naturally covers trajectory crossings, fragmentations, and lineages
(in spawning objects), as well as seamless operations with different
types of observations simultaneously or at different times, thereby
resolving the theoretical problem concerning multiple objects occupying
the same attribute state and further broadening the application base;
\item Admits the concept of joint existence probability for the elements
of a multi-object state, and hence, a meaningful notion of the most
probable multi-object state(s), fundamental to multi-object estimation;
\item Enables principled approximations of multi-object densities (with
characterizable approximation errors) and reduces the standard multi-object
transition density from a combinatorial sum to a single term, offering
new analytical tools as well as facilitating efficient solutions,
which are scalable in the numbers of objects, detections, sensors
and scans.
\end{itemize}
Multi-object SSM is not only rich in theory, but also in mathematical
and computational tools. While the majority of research in the field
has focused on multi-object estimation, the related areas of system
identification and control remain largely unexplored. Within the estimation
problem there are nonetheless many open theoretical and computational
challenges, such as Cram\'{e}r-Rao like performance bounds for multi-object
estimation \cite{C2022Cramer}, more sophisticated multi-object models
and efficient accompanying solutions.

\bibliographystyle{mod_ieeetr}
\bibliography{reflib}
\cleardoublepage{}

\section*{Supplementary Materials\smallskip{}
}

This supplementary section presents the proofs and additional discussions
of the divergence results for Labeled Multi-Bernoullis (LMBs) in Subsection
\ref{subsec:Info-Div}. We recall the notation $\left\langle \boldsymbol{f}\right\rangle \!(\ell)\triangleq\int\boldsymbol{f}(x,\ell)dx$,
and for each LMB parameterized by $\{(r_{i}(\ell),p_{i}(\cdot,\ell))\!:\!\ell\in\mathcal{D}(\sigma_{i})\}$,
with $r_{i}<1$, $i=1,2$ and $\mathcal{D}(\sigma)$ denoting the
domain of $\sigma$. Moreover, we use the multi-object density of
the form $\boldsymbol{\pi}_{i}(\boldsymbol{X})=\Delta(\boldsymbol{X})K_{i}\boldsymbol{f}_{\!i}^{\boldsymbol{X}}$,
where $K_{i}=\tilde{r}_{i}^{\mathbb{L}}$, $\tilde{r}_{i}\triangleq1-r_{i}$,
and $\boldsymbol{f}_{\!i}\triangleq r_{i}p_{i}/\tilde{r}_{i}$. Note
that $\left\langle \boldsymbol{f}_{\!i}\right\rangle \!(\ell)=r_{i}(\ell)/\tilde{r}_{i}(\ell)$,
$\tilde{r}_{i}(\ell)=\left(1+\left\langle \boldsymbol{f}_{i}\right\rangle \!(\ell)\right)^{-1}$,
and $\tilde{r}_{i}^{\mathbb{L}}=\tilde{r}_{i}^{\mathcal{D}(\sigma_{i})}$,
because $r_{i}(\ell)=0$, i.e., $\tilde{r}_{i}(\ell)=1$ for $\ell\notin\mathcal{D}(\sigma_{i})$. 

\subsection{Preliminary Lemmas}

The following lemmas facilitate the arguments in the proofs.
\begin{lem}
For functions $g,h$ defined on a finite set $S$,
\begin{align*}
\sum_{L\subseteq S}g^{L} & h^{S-L}=\left(h+g\right){}^{S}.
\end{align*}
\end{lem}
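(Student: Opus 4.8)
The plan is to prove this as the set-function analogue of the binomial theorem, either by a direct distributive expansion of the product or, more cleanly, by induction on $|S|$. The key structural fact I would invoke is that the multi-object exponential is multiplicative over disjoint unions: for any disjoint $A,B$, the definition $h^{X}=\prod_{x\in X}h(x)$ (with $h^{\emptyset}=1$) immediately gives $h^{A\uplus B}=h^{A}h^{B}$. This multiplicativity is the only property of the exponential notation I need, and it is what lets the subset sum factor through a single element.

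First I would handle the base case $S=\emptyset$, where the only subset is $L=\emptyset$, so the left side is $g^{\emptyset}h^{\emptyset}=1=(h+g)^{\emptyset}$. For the inductive step, I would fix an element $x^{*}\in S$ and write $S=S'\uplus\{x^{*}\}$ with $S'\triangleq S-\{x^{*}\}$. The subsets $L\subseteq S$ split into two disjoint families according to whether $x^{*}\in L$. When $x^{*}\notin L$ we have $L\subseteq S'$ and $S-L=(S'-L)\uplus\{x^{*}\}$, so by multiplicativity $g^{L}h^{S-L}=h(x^{*})\,g^{L}h^{S'-L}$. When $x^{*}\in L$ we write $L=L'\uplus\{x^{*}\}$ with $L'\subseteq S'$ and $S-L=S'-L'$, giving $g^{L}h^{S-L}=g(x^{*})\,g^{L'}h^{S'-L'}$.

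Collecting the two families, the sum becomes
\[
\sum_{L\subseteq S}g^{L}h^{S-L}=\bigl(h(x^{*})+g(x^{*})\bigr)\sum_{L'\subseteq S'}g^{L'}h^{S'-L'},
\]
and the inductive hypothesis identifies the remaining sum with $(h+g)^{S'}$. One more application of multiplicativity yields $\bigl(h(x^{*})+g(x^{*})\bigr)(h+g)^{S'}=(h+g)^{S}$, completing the induction.

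I do not expect any genuine obstacle here, since the identity is purely combinatorial; the only thing to be careful about is the bookkeeping of the disjoint-union splitting and making sure each multi-object exponential is split across genuinely disjoint index sets so that the factorization is valid. As an alternative to induction, I could simply expand $(h+g)^{S}=\prod_{x\in S}\bigl(h(x)+g(x)\bigr)$ directly and observe that choosing the $g(x)$ term on a subset $L$ and the $h(x)$ term on its complement $S-L$ reproduces exactly the summand $g^{L}h^{S-L}$; I would likely mention this expansion as the conceptual explanation even if I present the induction as the formal argument.
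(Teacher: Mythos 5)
Your proof is correct, but it takes a genuinely different route from the paper's. The paper proves the identity by setting $f=g/h$, invoking Vieta's formula for elementary symmetric polynomials at $z=-1$ to get $\sum_{L\subseteq S}f^{L}=\prod_{j}\left(1+f(\ell_{j})\right)=(1+f)^{S}$, and then multiplying both sides by $h^{S}$. Your argument instead proceeds by induction on $|S|$, splitting the subsets of $S$ according to membership of a distinguished element $x^{*}$ and using only the multiplicativity $h^{A\uplus B}=h^{A}h^{B}$; your closing remark that the identity is just the distributive expansion of $\prod_{x\in S}\bigl(h(x)+g(x)\bigr)$ is the same observation in non-inductive form. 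The two approaches buy slightly different things. The paper's route is a one-line reduction to a classical identity, but the substitution $f=g/h$ and the intermediate expression $\sum_{L\subseteq S}(g/h)^{L}h^{S}$ implicitly require $h$ to be nonvanishing on $S$ (harmless in the paper's applications, where $h$ is of the form $1-r$ with $r<1$, but a genuine restriction as stated). Your induction avoids any division and so proves the identity for arbitrary real-valued $g,h$ on $S$, at the cost of a little more bookkeeping; it is the more elementary and the marginally more general argument.
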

\begin{proof}
Let $S=\{\ell_{1},...,\ell_{n}\}$, $f=g/h$, and recall Vieta's formula
for elementary symmetric functions $e_{i}\left(f(\ell_{1}),...,f(\ell_{n})\right)$:
\[
\sum_{i=0}^{n}e_{i}\left(f(\ell_{1}),...,f(\ell_{n})\right)\left(-1\right){}^{i}z^{n-i}=\prod_{j=1}^{n}\left(z-f(\ell_{j})\right),
\]
Substituting $z=-1$ gives
\begin{align*}
\left(-1\right){}^{n}\sum_{i=0}^{n}e_{i}\left(f(\ell_{1}),...,f(\ell_{n})\right) & =\prod_{j=1}^{n}\left(-1-f(\ell_{j})\right)\\
\sum_{i=0}^{n}e_{i}\left(f(\ell_{1}),...,f(\ell_{n})\right) & =\prod_{j=1}^{n}\left(1+f(\ell_{j})\right).
\end{align*}
Noting that 
\[
e_{i}\left(f(\ell_{1}),...,f(\ell_{n})\right)\triangleq\sum_{L\subseteq S,|L|=i}f{}^{L},
\]
we have
\begin{align*}
\sum_{L\subseteq S}f^{L} & =\sum_{i=0}^{n}e_{i}\left(f(\ell_{1}),...,f(\ell_{n})\right)\\
 & =\prod_{j=1}^{n}\left(1+f(\ell_{j})\right)=\left(1+f\right){}^{S}.
\end{align*}
Substituting $f=g/h$, and multiplying both sides by $h^{S}$ gives
\begin{align*}
\sum_{L\subseteq S}\left(\frac{g}{h}\right)^{L}h^{S} & =\left(1+\frac{g}{h}\right){}^{S}h^{S}.
\end{align*}
Hence, $\sum_{L\subseteq S}g^{L}h^{S-L}=\left(h+g\right){}^{S}.$\smallskip{}
\end{proof}
\begin{lem}
For $\boldsymbol{f}:\mathbb{X}\times\mathbb{L}\rightarrow\mathbb{R}$
integrable on $\mathbb{X}$,
\[
\int\Delta(\boldsymbol{X})\boldsymbol{f}^{\boldsymbol{X}}\delta\boldsymbol{X}=\sum_{L\subseteq\mathbb{L}}\left\langle \boldsymbol{f}\right\rangle ^{L}=\left(1+\left\langle \boldsymbol{f}\right\rangle \right){}^{\mathbb{L}}.
\]
\end{lem}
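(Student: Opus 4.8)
The plan is to collapse the set integral to an ordinary sum over label subsets by expanding the FISST definition (\ref{eq:set-integral}) and exploiting that $\Delta$ depends only on the labels. First I would write out the set integral over $\mathcal{F}(\mathbb{X}\times\mathbb{L})$ as
\[
\int\Delta(\boldsymbol{X})\boldsymbol{f}^{\boldsymbol{X}}\delta\boldsymbol{X}=\sum_{i=0}^{\infty}\frac{1}{i!}\int\Delta(\{(x_{1},\ell_{1}),\dots,(x_{i},\ell_{i})\})\prod_{j=1}^{i}\boldsymbol{f}(x_{j},\ell_{j})\,d(x_{1},\ell_{1})\cdots d(x_{i},\ell_{i}),
\]
where integration over the discrete--continuous product space is understood as $\int\cdot\,d(x,\ell)=\sum_{\ell\in\mathbb{L}}\int_{\mathbb{X}}\cdot\,dx$, since $\mathbb{L}$ is discrete.

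The key observation is that $\Delta(\{(x_{1},\ell_{1}),\dots,(x_{i},\ell_{i})\})$ equals $1$ precisely when $\ell_{1},\dots,\ell_{i}$ are distinct (distinctness of labels forces distinctness of the labeled points, so $|\boldsymbol{X}|=|\mathcal{L}(\boldsymbol{X})|=i$) and $0$ otherwise; crucially, it is a function of the labels alone. This lets me carry out the attribute integrals first, replacing each $\int_{\mathbb{X}}\boldsymbol{f}(x_{j},\ell_{j})\,dx_{j}$ by the label-marginal $\left\langle\boldsymbol{f}\right\rangle\!(\ell_{j})$, leaving
\[
\int\Delta(\boldsymbol{X})\boldsymbol{f}^{\boldsymbol{X}}\delta\boldsymbol{X}=\sum_{i=0}^{\infty}\frac{1}{i!}\sum_{\substack{(\ell_{1},\dots,\ell_{i})\\ \text{distinct}}}\prod_{j=1}^{i}\left\langle\boldsymbol{f}\right\rangle\!(\ell_{j}).
\]

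Next I would convert the sum over ordered distinct label tuples into a sum over unordered subsets: each $L\subseteq\mathbb{L}$ with $|L|=i$ arises from exactly $i!$ orderings, and the product $\prod_{j}\left\langle\boldsymbol{f}\right\rangle\!(\ell_{j})$ is order-independent, so the inner sum equals $i!\sum_{L\subseteq\mathbb{L},\,|L|=i}\left\langle\boldsymbol{f}\right\rangle^{L}$. The factor $1/i!$ then cancels, and summing over $i$ collapses to $\sum_{L\subseteq\mathbb{L}}\left\langle\boldsymbol{f}\right\rangle^{L}$, which is the first equality. The second equality is immediate from Lemma 1 with $g=\left\langle\boldsymbol{f}\right\rangle$ and $h=\mathbf{1}$ on $S=\mathbb{L}$, whence $\sum_{L\subseteq\mathbb{L}}\left\langle\boldsymbol{f}\right\rangle^{L}=\left(1+\left\langle\boldsymbol{f}\right\rangle\right)^{\mathbb{L}}$.

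The manipulations are largely routine; the only genuine subtlety to justify carefully is that $\Delta$ reduces to the label-distinctness indicator \emph{and} is independent of the attributes, which is exactly what licenses performing the attribute integration first and introducing the marginals $\left\langle\boldsymbol{f}\right\rangle$. A secondary technical point is well-definedness when $\mathbb{L}$ is countably infinite: Lemma 1 is stated for a finite set, so I would apply it on the finite support of $\left\langle\boldsymbol{f}\right\rangle$ (or under a summability assumption), noting that the hypothesis that $\boldsymbol{f}$ is integrable on $\mathbb{X}$ keeps all products and sums finite where required.
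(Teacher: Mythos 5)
Your proof is correct and takes essentially the same route as the paper's: the paper obtains the first equality by invoking Lemma 3 of \cite{VoConj13}, which is precisely the reduction of $\int\Delta(\boldsymbol{X})\boldsymbol{f}^{\boldsymbol{X}}\delta\boldsymbol{X}$ to $\sum_{L\subseteq\mathbb{L}}\left\langle\boldsymbol{f}\right\rangle^{L}$ that you re-derive from the FISST set-integral definition, and then applies Lemma 1 exactly as you do. Your inline derivation is sound (up to the standard measure-zero caveat that coincident labeled points with repeated labels can make $\Delta=1$, which does not affect the integral), so the only difference is that you prove the cited intermediate result rather than quoting it.
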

\begin{proof}
The result follows by applying Lemma 3 in \cite{VoConj13} to the
set integral, and then Lemma 1.
\end{proof}
\begin{lem}
For $f,g:\mathbb{X}\times\mathbb{L}\rightarrow\mathbb{R}$ integrable
on $\mathbb{X}$, with $g$ unitless
\[
\int\Delta(\boldsymbol{X})\boldsymbol{f}^{\boldsymbol{X}}\ln\boldsymbol{g}^{\boldsymbol{X}}\delta\boldsymbol{X}=\sum_{L\subseteq\mathbb{L}}\sum_{\ell\in L}\left\langle \boldsymbol{f}\right\rangle ^{L-\{\ell\}}\left\langle \boldsymbol{f}\ln\boldsymbol{g}\right\rangle (\ell).
\]
\end{lem}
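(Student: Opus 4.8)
The plan is to reduce the claim to Lemma 2 by differentiating under the set integral with respect to an auxiliary exponent, so that no fresh combinatorial expansion of the set integral is needed. Since $\boldsymbol{g}$ is unitless and positive on the support of $\boldsymbol{f}$ (otherwise $\ln\boldsymbol{g}^{\boldsymbol{X}}$ is undefined), I would introduce the one-parameter family $\boldsymbol{f}_{t}\triangleq\boldsymbol{f}\,\boldsymbol{g}^{t}$ for $t$ in a neighbourhood of $0$. Because the multi-object exponential is multiplicative, $\boldsymbol{f}_{t}^{\boldsymbol{X}}=\boldsymbol{f}^{\boldsymbol{X}}\bigl(\boldsymbol{g}^{\boldsymbol{X}}\bigr)^{t}$, whence $\frac{d}{dt}\boldsymbol{f}_{t}^{\boldsymbol{X}}\big|_{t=0}=\boldsymbol{f}^{\boldsymbol{X}}\ln\boldsymbol{g}^{\boldsymbol{X}}$. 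Thus the left-hand side of the lemma is exactly $\frac{d}{dt}\big|_{t=0}\int\Delta(\boldsymbol{X})\boldsymbol{f}_{t}^{\boldsymbol{X}}\delta\boldsymbol{X}$.

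Applying Lemma 2 to $\boldsymbol{f}_{t}$ gives $\int\Delta(\boldsymbol{X})\boldsymbol{f}_{t}^{\boldsymbol{X}}\delta\boldsymbol{X}=\sum_{L\subseteq\mathbb{L}}\langle\boldsymbol{f}_{t}\rangle^{L}$, so it remains to differentiate the right-hand side at $t=0$. Here $\langle\boldsymbol{f}_{t}\rangle(\ell)=\int\boldsymbol{f}(x,\ell)\boldsymbol{g}(x,\ell)^{t}dx$ satisfies $\langle\boldsymbol{f}_{0}\rangle(\ell)=\langle\boldsymbol{f}\rangle(\ell)$ and $\frac{d}{dt}\langle\boldsymbol{f}_{t}\rangle(\ell)\big|_{t=0}=\langle\boldsymbol{f}\ln\boldsymbol{g}\rangle(\ell)$. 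Since each summand $\langle\boldsymbol{f}_{t}\rangle^{L}=\prod_{\ell\in L}\langle\boldsymbol{f}_{t}\rangle(\ell)$ is a finite product, the product rule yields $\frac{d}{dt}\langle\boldsymbol{f}_{t}\rangle^{L}\big|_{t=0}=\sum_{\ell\in L}\langle\boldsymbol{f}\rangle^{L-\{\ell\}}\langle\boldsymbol{f}\ln\boldsymbol{g}\rangle(\ell)$, and summing over $L\subseteq\mathbb{L}$ produces precisely the claimed double sum.

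The main obstacle is purely analytic: justifying the two interchanges of differentiation with (i) the set integral on the left and (ii) the (possibly countable) sum over $L$ on the right. For (i) I would invoke dominated convergence, bounding the difference quotient of $\bigl(\boldsymbol{g}^{\boldsymbol{X}}\bigr)^{t}$ uniformly for small $t$ by an integrable majorant built from $\boldsymbol{f}^{\boldsymbol{X}}\lvert\ln\boldsymbol{g}^{\boldsymbol{X}}\rvert$ together with the integrability hypotheses on $\boldsymbol{f}$ and $\boldsymbol{f}\ln\boldsymbol{g}$; for (ii) the interchange is immediate in the intended LMB applications, where the parameter domains $\mathcal{D}(\sigma_{i})$ are finite so the label sum is finite. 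As a cross-check, I would also verify the identity directly by expanding $\ln\boldsymbol{g}^{\boldsymbol{X}}=\sum_{(x,\ell)\in\boldsymbol{X}}\ln\boldsymbol{g}(x,\ell)$, applying the label-set form of the set integral $\int\Delta(\boldsymbol{X})\boldsymbol{F}(\boldsymbol{X})\delta\boldsymbol{X}=\sum_{L\subseteq\mathbb{L}}\langle\boldsymbol{F}\rangle(L)$ underlying Lemma 2, and factoring each label-marginal $\langle\boldsymbol{F}\rangle(L)$ after swapping the finite inner sum with the attribute integrals; this recovers the same expression and confirms the bookkeeping.
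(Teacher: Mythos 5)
Your argument is correct, but note that the paper itself does not reprint the proof of this lemma: it defers to the appendix of the cited reference, where the standard derivation is the direct one you relegate to a ``cross-check'' --- expand $\ln\boldsymbol{g}^{\boldsymbol{X}}=\sum_{(x,\ell)\in\boldsymbol{X}}\ln\boldsymbol{g}(x,\ell)$, apply the label-marginal decomposition $\int\Delta(\boldsymbol{X})\boldsymbol{F}(\boldsymbol{X})\delta\boldsymbol{X}=\sum_{L\subseteq\mathbb{L}}\left\langle\boldsymbol{F}\right\rangle\!(L)$, and factor each $\left\langle\boldsymbol{F}\right\rangle\!(L)$ after swapping the \emph{finite} inner sum with the attribute integrals. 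Your primary route, differentiating $t\mapsto\int\Delta(\boldsymbol{X})\boldsymbol{f}_{t}^{\boldsymbol{X}}\delta\boldsymbol{X}$ at $t=0$ with $\boldsymbol{f}_{t}=\boldsymbol{f}\boldsymbol{g}^{t}$, is a genuinely different and rather elegant reduction to Lemma 2: it makes the product-rule structure of the right-hand side transparent and would generalize immediately to higher derivatives (yielding formulas with $\left(\ln\boldsymbol{g}^{\boldsymbol{X}}\right)^{n}$). What it costs you is the analytic overhead you already flag: domination of the difference quotient of $\left(\boldsymbol{g}^{\boldsymbol{X}}\right)^{t}$ effectively requires integrability of $\boldsymbol{f}\,\boldsymbol{g}^{\pm t_{0}}\left|\ln\boldsymbol{g}\right|$ for some $t_{0}>0$, which is slightly stronger than the stated hypotheses (integrability of $\boldsymbol{f}$ and, implicitly, of $\boldsymbol{f}\ln\boldsymbol{g}$), plus term-by-term differentiation of the sum over $L$. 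The direct expansion needs none of this, since each summand involves only a finite sum and finitely many one-dimensional integrals; in the LMB applications where $\boldsymbol{f}$ is supported on finitely many labels, both routes are fully rigorous. So your proof stands, but the clean, hypothesis-minimal argument is the one you present second.
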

The proof of this lemma is given in the Appendix of \cite{NVVRR2022Multi}.\medskip{}

\begin{lem}
For $f:\mathcal{F}(\mathbb{L})\rightarrow\mathbb{R}$, $g:\mathbb{L}\rightarrow\mathbb{R}$,
and $S\subseteq\mathbb{L}$,
\[
\sum_{L\subseteq S}f(L)\sum_{\ell\in L}g(\ell)=\sum_{\ell\in\mathbb{L}}g(\ell)\sum_{L\subseteq S-\{\ell\}}f(L\cup\{\ell\}).
\]
\end{lem}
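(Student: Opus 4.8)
The plan is to establish the identity by the standard ``swap the order of summation, then reindex'' argument for finite sums; no analytic machinery is needed since everything in sight is a finite sum. First I would read the left-hand side as a single sum over the set of pairs $\{(L,\ell):L\subseteq S,\ \ell\in L\}$ with summand $f(L)g(\ell)$. Because this index set is finite, I may reorder it freely and sum over $\ell$ first. For a fixed $\ell$, the pairs $(L,\ell)$ that occur are precisely those with $L\subseteq S$ and $\ell\in L$, and such an $L$ exists only when $\ell\in S$. This rewrites the left-hand side as $\sum_{\ell\in S}g(\ell)\sum_{\{L\subseteq S\,:\,\ell\in L\}}f(L)$.

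Next I would reindex the inner sum using the bijection $L\mapsto L-\{\ell\}$ from $\{L\subseteq S:\ell\in L\}$ onto $\mathcal{F}(S-\{\ell\})$, whose inverse is $K\mapsto K\cup\{\ell\}$. Substituting $L=K\cup\{\ell\}$ yields $\sum_{\ell\in S}g(\ell)\sum_{K\subseteq S-\{\ell\}}f(K\cup\{\ell\})$, which is exactly the claimed right-hand side. The whole derivation is thus a two-step combinatorial manipulation once the pair-set viewpoint is adopted, and it requires no appeal to the earlier lemmas.

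The only point requiring care---and the main (albeit minor) obstacle---is the range of the outer index on the right-hand side. The manipulation above produces an outer sum over $\ell\in S$, whereas the statement writes $\ell\in\mathbb{L}$. These coincide in the intended application where $S=\mathbb{L}$, since the two descriptions of the range are then identical; for a general $S$ one simply records that $\ell$ effectively ranges over $S$, because no $L\subseteq S$ on the left can contain a label outside $S$. I expect the bijective reindexing step to carry the substantive content: verifying that $L\mapsto L-\{\ell\}$ is well defined and invertible between $\{L\subseteq S:\ell\in L\}$ and $\mathcal{F}(S-\{\ell\})$ is routine, but it is the step where one must be precise.
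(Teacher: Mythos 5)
Your argument is correct, and it is the standard one: the paper itself does not reproduce a proof of this lemma (it defers to the appendix of the cited reference \cite{NVVRR2022Multi}), so there is no in-paper derivation to compare against. Reading the left side as a sum over pairs $(L,\ell)$ with $\ell\in L\subseteq S$, exchanging the order of summation, and reindexing the inner sum via the bijection $L\mapsto L-\{\ell\}$ between $\{L\subseteq S:\ell\in L\}$ and $\mathcal{F}(S-\{\ell\})$ is exactly the intended two-step combinatorial manipulation, and your execution of it is sound.

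Your caveat about the outer index range is not merely cosmetic; it is the one substantive point, and you are right to press on it. As literally stated with $\sum_{\ell\in\mathbb{L}}$, the identity is false for a proper subset $S\subsetneq\mathbb{L}$: taking $\mathbb{L}=\{a,b\}$ and $S=\{a\}$, the left side is $f(\{a\})g(a)$ while the right side acquires the extra term $g(b)\left[f(\{b\})+f(\{a,b\})\right]$, which need not vanish for arbitrary $f,g$. The correct general form has $\sum_{\ell\in S}$ on the right, which is what your derivation produces and, notably, is the form the paper actually uses when it invokes the lemma with $S=\mathcal{D}(\sigma_{1})$ in the Kullback-Leibler derivation (the extra terms there would also vanish because $r_{1}(\ell)=0$ off $\mathcal{D}(\sigma_{1})$, so $f(L\cup\{\ell\})=r_{1}^{L\cup\{\ell\}}/\tilde{r}_{1}^{L\cup\{\ell\}}=0$). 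So your proof establishes the version of the lemma that is both true in general and the one needed downstream; the only refinement I would suggest is to state explicitly, as a hypothesis or a remark, that either $\ell$ ranges over $S$ or one assumes $f(L\cup\{\ell\})=0$ (or $g(\ell)=0$) for $\ell\notin S$.
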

The proof of this lemma is given in the Appendix of \cite{NVVRR2022Multi}.\medskip{}

\subsection{Proof of R\'{e}nyi Divergence (\ref{eq:Renyi-LMB}) \smallskip{}
}
\begin{proof}
The R\'{e}nyi Divergence (R\'{e}nyi-D) between the LMBs $\boldsymbol{\pi}_{1}$
and $\boldsymbol{\pi}_{2}$ is given by
\begin{align*}
 & \!\!\!\!\!\!D_{R}\!\left(\boldsymbol{\pi}_{1}||\boldsymbol{\pi}_{2}\right)\\
 & =\frac{1}{\alpha-1}\ln\int\Delta(\boldsymbol{X})\left(K_{1}\boldsymbol{f}_{1}^{\boldsymbol{X}}\right)^{\alpha}\left(K_{2}\boldsymbol{f}_{2}^{\boldsymbol{X}}\right)^{1-\alpha}\delta\boldsymbol{X}\\
 & =\frac{1}{\alpha-1}\ln\left(K_{1}^{\alpha}K_{2}^{1-\alpha}\int\!\Delta(\boldsymbol{X})\left(\boldsymbol{f}_{1}^{\alpha}\right)^{\boldsymbol{X}}\left(\boldsymbol{f}_{2}^{1-\alpha}\right)^{\boldsymbol{X}}\delta\boldsymbol{X}\right)\\
 & =\frac{1}{\alpha-1}\ln\left(\left(\tilde{r}_{1}^{\mathbb{L}}\right)^{\!\alpha}\!\left(\tilde{r}_{2}^{\mathbb{L}}\right)^{\!1-\alpha}\int\!\Delta(\boldsymbol{X})\left(\boldsymbol{f}_{1}^{\alpha}\boldsymbol{f}_{2}^{1-\alpha}\right)^{\boldsymbol{X}}\delta\boldsymbol{X}\right)\\
 & =\frac{1}{\alpha-1}\ln\left[\left(\tilde{r}_{1}^{\mathbb{\alpha}}\tilde{r}_{2}^{1-\alpha}\right)^{\mathbb{L}}\left(1+\left\langle \boldsymbol{f}_{1}^{\alpha}\boldsymbol{f}_{2}^{1-\alpha}\right\rangle \right)^{\mathbb{L}}\right]\\
 & =\sum_{\ell\in\mathbb{L}}\cfrac[l]{\ln\left[\left(\tilde{r}_{1}^{\mathbb{\alpha}}\tilde{r}_{2}^{1\!-\!\alpha}\right)\!(\ell)\left(1+\left\langle \boldsymbol{f}_{1}^{\alpha}\boldsymbol{f}_{2}^{1-\alpha}\right\rangle \!(\ell)\right)\right]}{\alpha-1}\\
 & =\sum_{\ell\in\mathbb{L}}\!\cfrac[l]{\ln\!\left[\left(\tilde{r}_{1}^{\mathbb{\alpha}}\tilde{r}_{2}^{1\!-\!\alpha}\right)\!(\ell)\right]+\ln\!\left[1\!+\!\left\langle \boldsymbol{f}_{1}^{\alpha}\boldsymbol{f}_{2}^{1\!-\!\alpha}\right\rangle \!(\ell)\right]}{\alpha-1},
\end{align*}
where the 3rd last line follows from Lemma 2. Note that $\tilde{r}_{1}^{\mathbb{\alpha}}\tilde{r}_{2}^{1-\alpha}>0$,
$\left\langle \boldsymbol{f}_{\!1}^{\alpha}\boldsymbol{f}_{\!2}^{1\!-\!\alpha}\right\rangle (\ell)\geq0$
and hence each term of the sum is well-defined. \smallskip{}
\end{proof}
\textit{Remark}: Since $\tilde{r}_{i}(\ell)=\left(1+\left\langle \boldsymbol{f}_{i}\right\rangle \!(\ell)\right)^{-1}$,
the above R\'{e}nyi-D can be written completely in terms of $\boldsymbol{f}_{1}$
and $\boldsymbol{f}_{2}$ as 
\begin{align*}
D_{R}\!\left(\boldsymbol{\pi}_{1}||\boldsymbol{\pi}_{2}\right) & =\sum_{\ell\in\mathbb{L}}\cfrac[l]{\ln\left[1\!+\!\left\langle \boldsymbol{f}_{1}^{\alpha}\boldsymbol{f}_{2}^{1\!-\!\alpha}\right\rangle \!(\ell)\right]}{\alpha-1}\\
 & -\!\sum_{\ell\in\mathbb{L}}\cfrac[l]{\alpha\ln\left[1\!+\!\left\langle \boldsymbol{f}_{\!1}\right\rangle \!(\ell)\right]\!+\!(1\!-\!\alpha)\ln\!\left[1\!+\!\left\langle \boldsymbol{f}_{\!2}\right\rangle \!(\ell)\right]}{\alpha-1}.
\end{align*}
Noting that $\boldsymbol{f}_{i}=r_{i}p_{i}/\tilde{r}_{i}$, and hence
\begin{align*}
\left\langle \boldsymbol{f}_{1}^{\alpha}\boldsymbol{f}_{2}^{1\!-\!\alpha}\right\rangle \!(\ell) & =\frac{r_{1}^{\alpha}\!(\ell)r_{2}^{1-\alpha}\!(\ell)}{\tilde{r}_{1}^{\alpha}\!(\ell)\tilde{r}_{2}^{1-\alpha}\!(\ell)}\left\langle p_{1}^{\alpha}p_{2}^{1-\alpha}\right\rangle \!(\ell),
\end{align*}
we can write the R\'{e}nyi-D in terms of the LMB parameters
\begin{align*}
D_{R}\!\left(\boldsymbol{\pi}_{1}||\boldsymbol{\pi}_{2}\right)= & \sum_{\ell\in\mathbb{L}}\cfrac[l]{\ln\left[\tilde{r}_{1}^{\alpha}\tilde{r}_{2}^{1-\alpha}(\ell)+r_{1}^{\alpha}r_{2}^{1-\alpha}(\ell)\!\left\langle p_{1}^{\alpha}p_{2}^{1-\alpha}\right\rangle \!(\ell)\right]}{\alpha-1}.
\end{align*}
\medskip{}
It also can be written in terms of the Probability Hypothesis Density
(PHD) $\boldsymbol{v}_{i}=r_{i}p_{i}=\tilde{r}_{i}\boldsymbol{f}_{i}$,
\begin{align*}
D_{R}\!\left(\boldsymbol{\pi}_{1}||\boldsymbol{\pi}_{2}\right) & =\sum_{\ell\in\mathbb{L}}\cfrac[l]{\ln\left[1\!+\!\frac{\left\langle \boldsymbol{v}_{1}^{\alpha}\boldsymbol{v}_{2}^{1-\alpha}\right\rangle \!(\ell)}{\left(1-\left\langle \boldsymbol{v}_{1}\right\rangle \!(\ell)\right)^{\alpha}\!\left(1-\left\langle \boldsymbol{v}_{2}\right\rangle \!(\ell)\right)^{1-\alpha}}\right]}{\alpha-1}\\
 & +\sum_{\ell\in\mathbb{L}}\frac{\alpha\ln\left[1\!-\!\left\langle \boldsymbol{v}_{1}\right\rangle \!(\ell)\right]\!+\!(1-\alpha)\ln\left[1\!-\!\left\langle \boldsymbol{v}_{2}\right\rangle \!(\ell)\right]}{\alpha-1},
\end{align*}
since $\tilde{r}_{i}(\ell)=1-\left\langle \boldsymbol{v}_{i}\right\rangle \!(\ell)$,
and $\boldsymbol{f}_{i}=\boldsymbol{v}_{i}/(1-\left\langle \boldsymbol{v}_{i}\right\rangle )$.

\subsection{Proof of Kullback-Leibler Divergence (\ref{eq:KL-LMB}) }
\begin{proof}
The Kullback-Leibler Divergence (KL-D) between the LMBs $\boldsymbol{\pi}_{1}$
and $\boldsymbol{\pi}_{2}$ is given by
\begin{align*}
D_{KL}\!\left(\boldsymbol{\pi}_{1}||\boldsymbol{\pi}_{2}\right)= & \:\int\Delta(\boldsymbol{X})K_{1}\boldsymbol{f}_{1}^{\boldsymbol{X}}\ln\frac{K_{1}\boldsymbol{f}_{1}^{\boldsymbol{X}}}{K_{2}\boldsymbol{f}_{2}^{\boldsymbol{X}}}\delta\boldsymbol{X}\\
= & \:\int\Delta(\boldsymbol{X})K_{1}\boldsymbol{f}_{1}^{\boldsymbol{X}}\left(\ln\frac{K_{1}}{K_{2}}+\ln\frac{\boldsymbol{f}_{1}^{\boldsymbol{X}}}{\boldsymbol{f}_{2}^{\boldsymbol{X}}}\right)\!\delta\boldsymbol{X}\\
= & \:\ln\frac{K_{1}}{K_{2}}\int\Delta(\boldsymbol{X})K_{1}\boldsymbol{f}_{1}^{\boldsymbol{X}}\delta\boldsymbol{X}\\
+ & \:K_{1}\int\Delta(\boldsymbol{X})\boldsymbol{f}_{1}^{\boldsymbol{X}}\ln\left(\frac{\boldsymbol{f}_{1}}{\boldsymbol{f}_{2}}\right)^{\!\boldsymbol{X}}\!\!\delta\boldsymbol{X}\\
= & \:\ln\frac{K_{1}}{K_{2}}+K_{1}\!\int\Delta(\boldsymbol{X})\boldsymbol{f}_{1}^{\boldsymbol{X}}\ln\left(\frac{\boldsymbol{f}_{1}}{\boldsymbol{f}_{2}}\right)^{\!\boldsymbol{X}}\!\!\delta\boldsymbol{X}\\
= & \:\ln\frac{K_{1}}{K_{2}}+K_{1}\!\sum_{L\subseteq\mathbb{L}}\sum_{\ell\in L}\frac{r_{1}^{L-\{\ell\}}}{\tilde{r}_{1}^{L-\{\ell\}}}\!\left\langle \!\boldsymbol{f}_{\!1}\!\ln\frac{\boldsymbol{f}_{\!1}}{\boldsymbol{f}_{\!2}}\!\right\rangle \!(\ell),
\end{align*}
where the last line follows from Lemma 3 and $\left\langle \boldsymbol{f}_{1}\right\rangle =r_{1}/\tilde{r}_{1}$.
Since $\boldsymbol{f}_{1}(\cdot,\ell)=r_{1}(\ell)=0$ for $\ell\notin\mathcal{D}(\sigma_{1})$,
and $0\ln0=0$ by convention, we can exchange $\mathbb{L}$ and $\mathcal{D}(\sigma_{1})$
in the sum. Further, applying Lemma 4 gives
\begin{align*}
 & \!\!\!\!\!D_{KL}\!\left(\boldsymbol{\pi}_{1}||\boldsymbol{\pi}_{2}\right)\\
 & =\ln\frac{\tilde{r}_{1}^{\mathbb{L}}}{\tilde{r}_{2}^{\mathbb{L}}}+\tilde{r}_{1}^{\mathbb{L}}\sum_{L\subseteq\mathcal{D}(\sigma_{1})}\frac{r_{1}^{L}}{\tilde{r}_{1}^{L}}\sum_{\ell\in L}\frac{\left\langle \boldsymbol{f}_{\!1}\!\ln\frac{\boldsymbol{f}_{1}}{\boldsymbol{f}_{2}}\right\rangle \!(\ell)}{r_{1}(\ell)/\tilde{r}_{1}(\ell)}\\
 & =\ln\frac{\tilde{r}_{1}^{\mathbb{L}}}{\tilde{r}_{2}^{\mathbb{L}}}+\tilde{r}_{1}^{\mathbb{L}}\sum_{\ell\in\mathcal{D}(\sigma_{1})}\!\!\frac{\left\langle \boldsymbol{f}_{1}\ln\frac{\boldsymbol{f}_{1}}{\boldsymbol{f}_{2}}\right\rangle \!(\ell)}{r_{1}(\ell)/\tilde{r}_{1}(\ell)}\!\!\!\sum_{L\subseteq\mathcal{D}(\sigma_{1})-\{\ell\}}\frac{r_{1}^{L\cup\{\ell\}}}{\tilde{r}_{1}^{L\cup\{\ell\}}}\\
 & =\ln\frac{\tilde{r}_{1}^{\mathbb{L}}}{\tilde{r}_{2}^{\mathbb{L}}}+\tilde{r}_{1}^{\mathbb{L}}\sum_{\ell\in\mathcal{D}(\sigma_{1})}\!\!\left\langle \boldsymbol{f}_{\!1}\!\ln\!\frac{\boldsymbol{f}_{\!1}}{\boldsymbol{f}_{\!2}}\right\rangle \!(\ell)\!\!\!\sum_{L\subseteq\mathcal{D}(\sigma_{1})-\{\ell\}}\frac{r_{1}^{L}}{\tilde{r}_{1}^{L}}\\
 & =\ln\frac{\tilde{r}_{1}^{\mathbb{L}}}{\tilde{r}_{2}^{\mathbb{L}}}+\!\sum_{\ell\in\mathbb{L}}\left\langle \!\boldsymbol{f}_{\!1}\!\ln\!\frac{\boldsymbol{f}_{\!1}}{\boldsymbol{f}_{\!2}}\!\right\rangle \!(\ell){\color{red}{\color{black}\tilde{r}_{1\!}(\ell}\mathclose{\color{black})}}\!\!\!\sum_{L\subseteq\mathbb{L}-\{\ell\}}\!r_{1}^{L}\tilde{r}_{1}^{(\mathbb{L}-\{\ell\})-L}\\
 & =\ln\frac{\tilde{r}_{1}^{\mathbb{L}}}{\tilde{r}_{2}^{\mathbb{L}}}+\!\sum_{\ell\in\mathbb{L}}{\color{red}{\color{black}\tilde{r}_{1}(\ell}\mathclose{\color{black})}\!}\left\langle \boldsymbol{f}_{\!1}\!\ln\!\frac{\boldsymbol{f}_{\!1}}{\boldsymbol{f}_{\!2}}\right\rangle \!(\ell)\\
 & =\sum_{\ell\in\mathbb{L}}\left[\ln\frac{\tilde{r}_{1}(\ell)}{\tilde{r}_{2}(\ell)}+{\color{red}{\color{black}\tilde{r}_{1}(\ell)\!}}\left\langle \boldsymbol{f}_{\!1}\!\ln\!\frac{\boldsymbol{f}_{\!1}}{\boldsymbol{f}_{\!2}}\right\rangle \!(\ell)\right],
\end{align*}
where the 2nd last equation follows from Lemma 1 and $r_{1}+\tilde{r}_{1}=1$.
\end{proof}
\textit{Remark}: Using $\tilde{r}_{i}(\ell)=\left(1+\left\langle \boldsymbol{f}_{i}\right\rangle \!(\ell)\right)^{-1}$,
the above KL-D can be written completely in terms of $\boldsymbol{f}_{1}$
and $\boldsymbol{f}_{2}$ as 
\begin{align*}
 & D_{KL}\!\left(\boldsymbol{\pi}_{1}||\boldsymbol{\pi}_{2}\right)=\sum_{\ell\in\mathbb{L}}\left[\ln\!\frac{1+\left\langle \boldsymbol{f}_{2}\right\rangle \!(\ell)}{1+\left\langle \boldsymbol{f}_{1}\right\rangle \!(\ell)}+\frac{\left\langle \boldsymbol{f}_{\!1}\!\ln\!\frac{\boldsymbol{f}_{\!1}}{\boldsymbol{f}_{\!2}}\right\rangle \!(\ell)}{1+\left\langle \boldsymbol{f}_{1}\right\rangle \!(\ell)}\right].
\end{align*}
Further, using $\boldsymbol{f}_{i}=r_{i}p_{i}/\tilde{r}_{i}$, we
can write
\begin{align*}
 & \!\!\!\!\!\!\left\langle \boldsymbol{f}_{1}\ln\frac{\boldsymbol{f}_{1}}{\boldsymbol{f}_{2}}\right\rangle \!(\ell)\\
 & =\int\frac{r_{1}(\ell)}{\tilde{r}_{1}(\ell)}p_{1}(x,\ell)\ln\left(\frac{r_{1}(\ell)\tilde{r}_{2}(\ell)p_{1}(x,\ell)}{\tilde{r}_{1}(\ell)r_{2}(\ell)p_{2}(x,\ell)}\right)dx\\
 & =\frac{r_{1}(\ell)}{\tilde{r}_{1}(\ell)}\int p_{1}(x,\ell)\left[\ln\frac{r_{1}(\ell)\tilde{r}_{2}(\ell)}{\tilde{r}_{1}(\ell)r_{2}(\ell)}+\ln\frac{p_{1}(x,\ell)}{p_{2}(x,\ell)}\right]dx\\
 & =\frac{r_{1}(\ell)}{\tilde{r}_{1}(\ell)}\left[\ln\frac{r_{1}(\ell)\tilde{r}_{2}(\ell)}{\tilde{r}_{1}(\ell)r_{2}(\ell)}+D_{KL}\!\left(p_{1}(\cdot,\ell)||p_{2}(\cdot,\ell)\right)\right].
\end{align*}
Hence, the KL-D can be written completely in terms of the LMB parameters
as follows
\begin{align*}
 & \!D_{KL}\!\left(\boldsymbol{\pi}_{1}||\boldsymbol{\pi}_{2}\right)=\\
 & \sum_{\ell\in\mathbb{L}}\ln\!\frac{\tilde{r}_{1\!}(\ell)}{\tilde{r}_{2}(\ell)}\!+\!{\color{red}{\color{black}r_{1\!}(\ell}\mathclose{\color{black})}\!}\left[\ln\frac{r_{1\!}(\ell)\tilde{r}_{2}(\ell)}{\tilde{r}_{1\!}(\ell)r_{2}(\ell)}+D_{KL}\!\left(p_{1}(\cdot,\ell)||p_{2}(\cdot,\ell)\right)\right]\!.
\end{align*}
Alternatively, using $\tilde{r}_{i}(\ell)=1\!-\!\left\langle \boldsymbol{v}_{i}\right\rangle \!(\ell)$
and $\boldsymbol{f}_{\!i}=\boldsymbol{v}_{i}/(1\!-\!\left\langle \boldsymbol{v}_{i}\right\rangle )$,
\begin{align*}
{\color{red}{\color{black}\tilde{r}_{1}(\ell}\mathclose{\color{black})}}\!\left\langle \!\boldsymbol{f}_{\!1}\!\ln\!\frac{\boldsymbol{f}_{\!1}}{\boldsymbol{f}_{\!2}}\!\right\rangle \!(\ell) & =\left\langle \!\boldsymbol{v}_{1}\!\ln\!\frac{\boldsymbol{v}_{1}(1-\left\langle \boldsymbol{v}_{2}\right\rangle )}{\boldsymbol{v}_{2}(1-\left\langle \boldsymbol{v}_{1}\right\rangle )}\!\right\rangle \!(\ell)\\
 & =\left\langle \!\boldsymbol{v}_{1}\!\ln\!\frac{\boldsymbol{v}_{1}}{\boldsymbol{v}_{2}}\!\right\rangle \!(\ell)-\left\langle \boldsymbol{v}_{1}\right\rangle \!(\ell)\ln\!\frac{1\!-\!\left\langle \boldsymbol{v}_{1}\right\rangle \!(\ell)}{1\!-\!\left\langle \boldsymbol{v}_{2}\right\rangle \!(\ell)},
\end{align*}
hence, the KL-D can be written in terms of the PHD as
\begin{align*}
 & \!\!\!\!\!\!\!\!D_{KL}\!\left(\boldsymbol{\pi}_{1}||\boldsymbol{\pi}_{2}\right)\\
 & =\sum_{\ell\in\mathbb{L}}\ln\frac{1\!-\!\left\langle \boldsymbol{v}_{1}\right\rangle \!(\ell)}{1\!-\!\left\langle \boldsymbol{v}_{2}\right\rangle \!(\ell)}-\sum_{\ell\in\mathbb{L}}\left\langle \boldsymbol{v}_{1}\right\rangle \!(\ell)\ln\!\frac{1\!-\!\left\langle \boldsymbol{v}_{1}\right\rangle \!(\ell)}{1\!-\!\left\langle \boldsymbol{v}_{2}\right\rangle \!(\ell)}\\
 & +\sum_{\ell\in\mathbb{L}}\left\langle \boldsymbol{v}_{1}\!\ln\!\frac{\boldsymbol{v}_{1}}{\boldsymbol{v}_{2}}\right\rangle \!(\ell)\\
 & =\sum_{\ell\in\mathbb{L}}\left[\ln\frac{1\!-\!\left\langle \boldsymbol{v}_{1}\right\rangle \!(\ell)}{1\!-\!\left\langle \boldsymbol{v}_{2}\right\rangle \!(\ell)}-\left\langle \boldsymbol{v}_{1}\right\rangle \!(\ell)\ln\!\frac{1\!-\!\left\langle \boldsymbol{v}_{1}\right\rangle \!(\ell)}{1\!-\!\left\langle \boldsymbol{v}_{2}\right\rangle \!(\ell)}\right]\\
 & +\sum_{\ell\in\mathbb{L}}\left\langle \boldsymbol{v}_{1}\!\ln\!\frac{\boldsymbol{v}_{1}}{\boldsymbol{v}_{2}}\right\rangle \!(\ell)\\
 & =\sum_{\ell\in\mathbb{L}}\left[(1\!-\!\left\langle \boldsymbol{v}_{1}\right\rangle )\!(\ell)\ln\frac{1\!-\!\left\langle \boldsymbol{v}_{1}\right\rangle \!(\ell)}{1\!-\!\left\langle \boldsymbol{v}_{2}\right\rangle \!(\ell)}+\!\left\langle \boldsymbol{v}_{1}\!\ln\!\frac{\boldsymbol{v}_{1}}{\boldsymbol{v}_{2}}\right\rangle \!(\ell)\right].
\end{align*}

\subsection{Proof of $\chi^{2}$ Divergence (\ref{eq:Chi-S-D})}
\begin{proof}
The $\chi^{2}$ Divergence ($\chi^{2}$-D) between the LMBs $\boldsymbol{\pi}_{1}$
and $\boldsymbol{\pi}_{2}$ is given by
\begin{align*}
D_{\chi^{2}}\!\left(\boldsymbol{\pi}_{1}||\boldsymbol{\pi}_{2}\right) & =\int\Delta(\boldsymbol{X})\frac{\left(K_{1}\boldsymbol{f}_{1}^{\boldsymbol{X}}\right)^{2}}{K_{2}\boldsymbol{f}_{2}^{\boldsymbol{X}}}\delta\boldsymbol{X}-1\\
 & =\frac{K_{1}^{2}}{K_{2}}\int\Delta(\boldsymbol{X})\frac{\left(\boldsymbol{f}_{1}^{2}\right)^{\boldsymbol{X}}}{\boldsymbol{f}_{2}^{\boldsymbol{X}}}\delta\boldsymbol{X}-1\\
 & =\frac{K_{1}^{2}}{K_{2}}\left(1+\left\langle \frac{\boldsymbol{f}_{1}^{2}}{\boldsymbol{f}_{2}}\right\rangle \right)^{\mathbb{L}}-1\\
 & =\left[\frac{\tilde{r}_{1}^{2}}{\tilde{r}_{2}}\left(1+\left\langle \frac{\boldsymbol{f}_{1}^{2}}{\boldsymbol{f}_{2}}\right\rangle \right)\right]^{\mathbb{L}}-1\\
 & =\prod_{\ell\in\mathbb{L}}\frac{\tilde{r}_{1}^{2}(\ell)}{\tilde{r}_{2}(\ell)}\left[1+\left\langle \frac{\boldsymbol{f}_{1}^{2}}{\boldsymbol{f}_{2}}\right\rangle \!(\ell)\right]-1,
\end{align*}
where the 3rd last equation follows from Lemma 2.
\end{proof}
\textit{Remark}: The above $\chi^{2}$-D can be written completely
in terms of $\boldsymbol{f}_{1}$, $\boldsymbol{f}_{2}$, or the LMB
parameters, or the PHDs, as follows 
\begin{align*}
D_{\chi^{2}}\!\left(\boldsymbol{\pi}_{1}||\boldsymbol{\pi}_{2}\right) & =\left[\frac{1+\left\langle \boldsymbol{f}_{2}\right\rangle }{(1+\left\langle \boldsymbol{f}_{1}\right\rangle )^{2}}\!\left(1+\left\langle \frac{\boldsymbol{f}_{1}^{2}}{\boldsymbol{f}_{2}}\right\rangle \right)\right]^{\mathbb{L}}-1\\
 & =\left[\frac{\tilde{r}_{1}^{2}}{\tilde{r}_{2}}\!\left(1+\frac{r_{1}^{2}\tilde{r}_{2}}{\tilde{r}_{1}^{2}r_{2}}\left\langle \frac{p_{1}^{2}}{p_{2}}\right\rangle \right)\right]^{\mathbb{L}}-1\\
 & =\left[\frac{(1\!-\!\left\langle \boldsymbol{v}_{1}\right\rangle )^{2}}{1\!-\!\left\langle \boldsymbol{v}_{2}\right\rangle }\!\left(\!1+\left\langle \!\frac{\boldsymbol{v}_{1}^{2}(1\!-\!\left\langle \boldsymbol{v}_{2}\right\rangle )}{\boldsymbol{v}_{2}(1\!-\!\left\langle \boldsymbol{v}_{1}\right\rangle )^{2}}\!\right\rangle \!\right)\right]^{\mathbb{L}}\!-\!1.
\end{align*}

\subsection{Proof of Cauchy-Schwarz Divergence (\ref{eq:CS-D-LMB}) }
\begin{proof}
For the multi-object exponentials $\boldsymbol{f}_{1}^{(\cdot)},\boldsymbol{f}_{2}^{(\cdot)}$,
let
\[
\left\langle \Delta\boldsymbol{f}_{1}^{(\cdot)},\Delta\boldsymbol{f}_{2}^{(\cdot)}\right\rangle _{U}=\int\Delta(\boldsymbol{X})U^{|\boldsymbol{X}|}\boldsymbol{f}_{1}^{\boldsymbol{X}}\boldsymbol{f}_{2}^{\boldsymbol{X}}\delta\boldsymbol{X},
\]
where $U$ is the unit of hyper-volume. Applying Lemma 2 gives
\begin{align*}
\left\langle \Delta\boldsymbol{f}_{1}^{(\cdot)},\Delta\boldsymbol{f}_{2}^{(\cdot)}\right\rangle _{U} & =\int\Delta(\boldsymbol{X})\left(U\boldsymbol{f}_{1}\boldsymbol{f}_{2}\right)^{\boldsymbol{X}}\delta\boldsymbol{X}\\
 & =(1+\left\langle U\boldsymbol{f}_{1}\boldsymbol{f}_{2}\right\rangle )^{\mathbb{L}}.
\end{align*}

Using the above result, the Cauchy-Schwarz Divergence (CS-D) between
two LMBs $\boldsymbol{\pi}_{1}$ and $\boldsymbol{\pi}_{2}$ can be
written as
\begin{align*}
 & \!\!\!\!\!\!\!D_{CS}\!\left(\boldsymbol{\pi}_{1},\boldsymbol{\pi}_{2}\right)\\
= & -\ln\frac{\left\langle K_{1}\Delta\boldsymbol{f}_{1}^{(\cdot)},K_{2}\Delta\boldsymbol{f}_{2}^{(\cdot)}\right\rangle _{U}}{\left\langle K_{1}\Delta\boldsymbol{f}_{1}^{(\cdot)},K_{1}\Delta\boldsymbol{f}_{1}^{(\cdot)}\right\rangle _{U}^{1/2}\left\langle K_{2}\Delta\boldsymbol{f}_{2}^{(\cdot)},K_{2}\Delta\boldsymbol{f}_{2}^{(\cdot)}\right\rangle _{U}^{1/2}}\\
= & -\ln\frac{\left\langle \Delta\boldsymbol{f}_{1}^{(\cdot)},\boldsymbol{f}_{2}^{(\cdot)}\right\rangle _{U}}{\left\langle \Delta\boldsymbol{f}_{1}^{(\cdot)},\boldsymbol{f}_{1}^{(\cdot)}\right\rangle _{U}^{1/2}\left\langle \Delta\boldsymbol{f}_{2}^{(\cdot)},\boldsymbol{f}_{2}^{(\cdot)}\right\rangle _{U}^{1/2}}.\\
= & -\ln\frac{(1+\left\langle U\boldsymbol{f}_{1}\boldsymbol{f}_{2}\right\rangle )^{\mathbb{L}}}{\left[(1+\left\langle U\boldsymbol{f}_{1}\boldsymbol{f}_{1}\right\rangle )^{\mathbb{L}}\right]^{1/2}\left[(1+\left\langle U\boldsymbol{f}_{2}\boldsymbol{f}_{2}\right\rangle )^{\mathbb{L}}\right]^{1/2}}\\
= & -\ln\frac{(1+\left\langle U\boldsymbol{f}_{1}\boldsymbol{f}_{2}\right\rangle )^{\mathbb{L}}}{\left(\sqrt{1+\left\langle U\boldsymbol{f}_{1}\boldsymbol{f}_{1}\right\rangle }\right)^{\mathbb{L}}\left(\sqrt{1+\left\langle U\boldsymbol{f}_{2}\boldsymbol{f}_{2}\right\rangle }\right)^{\mathbb{L}}}\\
= & -\ln\left(\frac{(1+\left\langle U\boldsymbol{f}_{1}\boldsymbol{f}_{2}\right\rangle )}{\sqrt{1+\left\langle U\boldsymbol{f}_{1}\boldsymbol{f}_{1}\right\rangle }\sqrt{1+\left\langle U\boldsymbol{f}_{2}\boldsymbol{f}_{2}\right\rangle }}\right)^{\mathbb{L}}\\
= & -\sum_{\ell\in\mathbb{L}}\ln\left(\frac{(1+\left\langle U\boldsymbol{f}_{1}\boldsymbol{f}_{2}\right\rangle (\ell))}{\sqrt{1+\left\langle U\boldsymbol{f}_{1}\boldsymbol{f}_{1}\right\rangle (\ell)}\sqrt{1+\left\langle U\boldsymbol{f}_{2}\boldsymbol{f}_{2}\right\rangle (\ell)}}\right)\\
= & -\sum_{\ell\in\mathbb{L}}\ln\!\frac{1+\left\langle U\!\boldsymbol{f}_{\!1}\boldsymbol{f}_{\!2}\right\rangle \!(\ell)}{\sqrt{1\!+\!\left\langle U\!\boldsymbol{f}_{\!1}^{2}\right\rangle \!(\ell)}\sqrt{1\!+\!\left\langle U\!\boldsymbol{f}_{\!2}^{2}\right\rangle \!(\ell)}}.
\end{align*}
Note that the logarithms in the above sum are well-defined for all
$\ell\in\mathbb{L}$.
\end{proof}
\textit{Remark}: While the CS-D depends on the unit $U$, the Bhattacharyya
distance $d_{B}\!\left(\boldsymbol{\pi}_{1},\boldsymbol{\pi}_{2}\right)=D_{CS}\!\left(\sqrt{\boldsymbol{\pi}_{1}},\sqrt{\boldsymbol{\pi}_{2}}\right)$
is invariant to it, because the unit of $\sqrt{\boldsymbol{\pi}_{1}(\boldsymbol{X})\boldsymbol{\pi}_{2}(\boldsymbol{X})}$
cancels out the unit of $\delta\boldsymbol{X}$. Noting that $\left\langle \sqrt{\boldsymbol{\pi}_{i}},\sqrt{\boldsymbol{\pi}_{i}}\right\rangle =\int\boldsymbol{\pi}_{i}\left(\boldsymbol{X}\right)\delta\boldsymbol{X}=1$,
the Bhattacharyya distance between two LMBs is
\begin{align*}
d_{B}\!\left(\boldsymbol{\pi}_{1},\boldsymbol{\pi}_{2}\right) & =-\ln\left\langle \!\sqrt{\boldsymbol{\pi}_{1}},\sqrt{\boldsymbol{\pi}_{2}}\right\rangle \\
 & =-\ln\left\langle \!\sqrt{K_{1}\Delta\boldsymbol{f}_{1}^{(\cdot)}},\sqrt{K_{2}\Delta\boldsymbol{f}_{2}^{(\cdot)}}\right\rangle \\
 & =-\ln\left(\!\sqrt{K_{1}K_{2}}\left\langle \sqrt{\Delta\boldsymbol{f}_{1}^{(\cdot)}},\sqrt{\Delta\boldsymbol{f}_{2}^{(\cdot)}}\right\rangle \right)\\
 & =-\ln\left(\!\sqrt{K_{1}K_{2}}\int\!\Delta(\boldsymbol{X})\sqrt{\boldsymbol{f}_{1}^{\boldsymbol{X}}\boldsymbol{f}_{2}^{\boldsymbol{X}}}\delta\boldsymbol{X}\right)\\
 & =-\ln\left(\left(\!\sqrt{\tilde{r}_{1}\tilde{r}_{2}}\right){}^{\mathbb{L}}\int\!\Delta(\boldsymbol{X})\left(\sqrt{\boldsymbol{f}_{1}\boldsymbol{f}_{2}}\right)^{\boldsymbol{X}}\!\delta\boldsymbol{X}\right)\\
 & =-\ln\left(\left(\!\sqrt{\tilde{r}_{1}\tilde{r}_{2}}\right){}^{\mathbb{L}}\left(1+\left\langle \sqrt{\boldsymbol{f}_{1}\boldsymbol{f}_{2}}\right\rangle \right){}^{\mathbb{L}}\right)\\
 & =-\ln\left[\!\sqrt{\tilde{r}_{1}\tilde{r}_{2}}\left(1+\left\langle \sqrt{\boldsymbol{f}_{1}\boldsymbol{f}_{2}}\right\rangle \right)\right]{}^{\mathbb{L}}\\
 & =-\sum_{\ell\in\mathbb{L}}\ln\!\sqrt{\tilde{r}_{1}(\ell)\tilde{r}_{2}(\ell)}\left[1+\left\langle \sqrt{\boldsymbol{f}_{1}\boldsymbol{f}_{2}}\right\rangle \!(\ell)\right],
\end{align*}
where 3rd last line follows from Lemma 2. This is the R\'{e}nyi-D with
$\alpha=0.5$.
\end{document}